\newlength\figureheight
\newlength\figurewidth
\newcommand{\myitem}[1]{%
\item[#1]\protected@edef\@currentlabel{#1}%
}
\acrodef{pro}[{\scshape PrO}]{predictively oriented}
\definecolor{procolour}{HTML}{4477AA}
\definecolor{gibbscolour}{HTML}{EE6677}
\definecolor{bnncolour}{HTML}{CCBB44}
\definecolor{mgpcolour}{HTML}{228833}
\renewcommand{\algocf@captiontext}[2]{#1\algocf@typo. \AlCapFnt{}#2} %
\def\@algocf@capt@plain{top}
\renewcommand{\algocf@makecaption}[2]{%
	\addtolength{\hsize}{\algomargin}%
	\sbox\@tempboxa{\algocf@captiontext{#1}{#2}}%
	\ifdim\wd\@tempboxa >\hsize%
	\hskip .5\algomargin%
	\parbox[t]{\hsize}{\algocf@captiontext{#1}{#2}}%
	\else%
	\global\@minipagefalse%
	\hbox to\hsize{\box\@tempboxa}%
	\fi%
	\addtolength{\hsize}{-\algomargin}%
}
\newcommand{\ignore}[1]{}
\newcommand{\x}{{x}_{1:n}}
\newcommand{\wh}{\widehat}
\newcommand{\normal}{\ensuremath{\mathsf{N}}}
\newcommand{\dt}{\mathsf{d}}
\newcommand{\E}{\mathbb{E}}
\newcommand{\KL}{\mathrm{KL}}
\newcommand{\argmin}{\operatornamewithlimits{argmin}} 
\newcommand{\arginf}{\operatornamewithlimits{arginf}}
\newtheorem{assumption}{Assumption}
\newtheorem{theorem}{Theorem}
\newtheorem{corollary}{Corollary}
\newtheorem{lemma}{Lemma}
\theoremstyle{definition}
\newtheorem{definition}{Definition}
\theoremstyle{remark}
\theoremstyle{example}
\newenvironment{assumptionp}[1]{
  
  \assumptionalt
}{\endassumptionalt}
\begin{document}

	\def\spacingset#1{\renewcommand{\baselinestretch}%
		{#1}\small\normalsize} \spacingset{1}
	
	\title{Predictively Oriented Posteriors}
	\date{\empty}
	
\author[1]{Yann McLatchie$^{\star}$}
\author[2]{Badr-Eddine Ch\'{e}rief-Abdellatif$^{\star}$}
\author[3]{David T. Frazier$^{\star}$}
\author[1]{Jeremias Knoblauch$^{\star}$}
\affil[1]{Department of Statistical Science, University College London}
\affil[2]{CNRS, Sorbonne Universit\'{e}}
\affil[3]{Department of Econometrics and Business Statistics, Monash University}

	\maketitle
	\def\thefootnote{$\star$}\footnotetext{All authors contributed equally. Corresponding author: \url{j.knoblauch@ucl.ac.uk}}\def\thefootnote{\arabic{footnote}}
	
	\begin{abstract}
            We advocate for a new statistical principle that combines the most desirable aspects of both parameter inference and density estimation.
            This leads us to the \ac{pro} posterior, which expresses uncertainty as a consequence of predictive ability.
            We show that these posteriors converge to the predictively optimal model average and predictively dominate both classical and generalised Bayes posterior predictive distributions.
            Further, \ac{pro} posteriors {adapt} to the level of model misspecification: while they concentrate around the true model in the same way as classical and generalised Bayesian strategies if the model can recover the data-generating distribution, they do \textit{not} concentrate around a single model in the presence of non-trivial forms of model misspecification.
            Instead, they stabilise towards a non-degenerate predictively optimal posterior distribution that represents a form of irreducible uncertainty due to model misspecification.
            We put forward a sampling algorithm for \ac{pro} posteriors based on mean field Langevin dynamics, and verify the practical significance of our theoretical  developments on a number of numerical examples. 
	\end{abstract}
	\spacingset{1.9}

Often motivated through its favourable decision-theoretic properties, the primary goal of Bayesian statistical inference is to quantify uncertainty about a model parameter of interest given a set of prior beliefs \citep[e.g.][]{robert2007bayesian}. 
All other inferential tasks  are derivatives of this primary directive. 
In the current era of black-box prediction models and the wide-spread adoption of machine learning techniques across science and society, this principle leads to a fundamental tension with Bayesian statistical methods: in many modern problems, prediction is the primary goal of interest, with parameter inference being only a secondary concern.
Because Bayesian methodology is 
centered on parameter inference, its predictive performance quickly deteriorates  outside of the narrow conditions that make Bayesian belief updates optimal  \citep[see][]{zellner_optimal_1988, aitchison1975goodness}.

One of the clearest ways in which the Bayesian focus on parameter inference manifests is \textit{posterior concentration}: as data accumulate, the Bayes posterior generally concentrates around a single parameter value \citep[see e.g][Section 10.2]{vdv} representing the best ``single model'' that explains the observed data.
In misspecified settings however, this behaviour may be undesirable for down-stream tasks like model criticism or prediction: in large samples, Bayesian uncertainty tightly concentrates around the best-fitting model, so that the associated
predictive distribution amounts to predicting from this single best-fitting model.
Further, this behaviour persists even for generalised Bayesian methods that replace a likelihood-based approach with a loss-based belief update \citep{bissiri:etal:2016,jiang_gibbs_2008,zhang_e-entropy_2006,knoblauch2019generalized,alquier2024user,syring2019calibrating, martin_direct_2022} and resampling-based approaches such as the Martingale posterior \citep{fortini20quasibayes, fortini23predictionbased, fong2023martingale}.
As a result, posterior predictive distributions  obtained from both approaches often resemble plug-in predictives based on point estimators \citep{mclatchie_predictive_2025}, and incorporate no meaningful degree of model uncertainty.

Posterior concentration around the best-fitting model parameterisation is desirable if the data has been generated from the assumed model.
In practice, however, models are frequently structurally misspecified so that there are aspects of the data which the model is fundamentally incapable of capturing.
In this situation, posterior concentration  is often at odds with scientists'  intuitive understanding of what uncertainty quantification \textit{should} be.
In particular, if a model is known to be an imperfect description of reality, scientists are most interested in  uncertainty that communicates whether there are certain aspects of the data which \textit{no} parameterisation of their scientific models can  explain.

The \textit{muon  anomaly} in particle physics is a popular case of this type of uncertainty, and owes its name to the fact that the widely-agreed `Standard Model' cannot correctly predict the magnetic moment for muons \citep[][]{aliberti2025anomalous}.
The \textit{Hubble Tension} in cosmology exposes a similar situation: the  $\Lambda$CDM model predicts a rate of expansion for the universe---the so-called Hubble constant---that is contradicted by actual measurements \citep[][]{di2021realm}.
More generally speaking, seeing a larger amount of the same (unexplainable) data will \textit{not} make scientists more certain because the predictions made by \textit{any} of the corresponding models or parameterisations would not match the data they observe.
This is a stark contrast to the notion of uncertainty enshrined by posterior concentration: %
as more data accumulates,  the posterior becomes increasingly certain about the best-fitting model parameterisation---even when the assumed model is a poor description of the data.

In this paper, we conceive of an approach for \textit{predictively oriented (\ac{pro})} uncertainty that provides a mathematically rigorous template for the predictively oriented understanding of uncertainty articulated across many types of scientific analysis.
Unlike Bayes posteriors and their generalisations, \ac{pro} posteriors treat prediction as the primary inferential goal, and generate parameter uncertainty as a consequence of a parameter's impact on the accuracy of the resulting predictions.
Critically, this approach to uncertainty quantification is \textit{adaptive}: 
while the \ac{pro} posterior collapses to a point mass \textit{only} when the model is well-specified, it will  provide \textit{irreducible} { uncertainty}---in the form of non-singular and predictively optimal posteriors---whenever the model is misspecified.
Consequently, in predictive terms, the resulting \ac{pro} posterior predictive distributions dominate both Bayesian and generalised Bayesian methods {in finite samples}.

Conceptually, the \ac{pro} posterior communicates uncertainty by attaching to each possible model parameterisation a weight  proportional to the fraction of data which is best described by that particular parameterisation.
When the model is well-specified, \textit{all} data is explained by a \textit{single} parameter value, and the \ac{pro} posterior concentrates onto this value.
If the model is misspecified however, good explanations for different facets of the data will generally correspond to \textit{different} parameter values, and the weights assigned to them by the \ac{pro} posterior will reflect their relative frequency.

\section{Background}
\label{sec:motivation}

Throughout, we consider  data $\x=(x_1,\dots,x_n)$ for $x_i\in\mathcal{X}$ and $i=1,\dots,n$.
We assume that $\x$ was generated from an unknown distribution $P_0$.
With this, we denote expectations of a function $f:\mathcal{X}\to\mathbb{R}$ with respect to $P_0$ by $\mathbb{E}[f(X)]$ or by $\mathbb{E}_{X\sim P_0}[f(X)]$ if we wish to emphasise the sampling distribution.
To learn about $P_0$ from  $x_{1:n}$, we follow the tenants of classical Bayesian analysis, and posit a model class $\mathcal{M}_{\Theta} := \{P_\theta:\theta\in\Theta\}$ and a prior $\Pi \in \mathcal{P}(\Theta)$ over elements of $\mathcal{M}_{\Theta}$, where $\mathcal{P}(\Theta)$ denotes the space of probability measures on $\Theta$.
Instead of attempting to learn the best possible \textit{parameter}, as in classical and generalised Bayesian inference, we focus on learning the best possible \textit{predictive} distribution.
To rigorously formulate this optimality, we use scoring rules.

\begin{definition}[Scoring Rule]
\label{def:scoring-rule}
Let $S:\mathcal{P}(\mathcal{X}) \times \mathcal{X} \to \mathbb{R}\cup\{\infty\}$ be an extended real-valued function. Then $S$ is a proper scoring rule, relative to a class $\mathcal{P}$, if:
\vspace*{-0.3cm}
\begin{itemize}
    \item[(i)]  the expected score $\mathcal{S}(P, P') := \E_{X\sim P'}\left\{S(P,X)\right\}$, is finite for some $P,P'\in\mathcal{P}$;
    \vspace*{-0.3cm}
    \item[(ii)]
    $\mathcal{S}(P',P') \le \mathcal{S}(P,P')$  for all $P,P' \in \mathcal{P}$, so that $\mathcal{S}$ induces the statistical divergence
    $\mathcal{D}_{S}(P,P') := 
    \mathcal{S}(P,P')
    -
    \mathcal{S}(P',P') \geq 0$ for which $\mathcal{D}_{S}(P,P') = 0 \Longleftrightarrow P = P'$.
\end{itemize}
\end{definition}
Scoring rules are foundational statistical objects concerned with measuring  accuracy of  predictive distributions (see, e.g., \citealp{gneiting2007strictly}). Given a target distribution $P_0$, a candidate predictive distribution $P$ is judged to be more accurate than an alternative candidate $P'$ by {an expected} score  $\mathcal{S}$ if and only if $\mathcal{S}(P, P_0) \leq \mathcal{S}(P', P_0)$. 
The logarithmic scoring rule, given by
$S(P, x) = - \log \dt P(x)$, 
is the most common example and is related to the Kullback-Leibler divergence via $ \KL(P' \| P)=\mathcal{D}_{S}(P, P')$.
Other popular choices include
the {spherical score,} {cumulative ranked probability score (CRPS)}, the energy score, and squared maximum mean discrepancy (MMD), which for a kernel function $k:\mathcal{X}\times\mathcal{X} \to\mathbb{R}_+$ and the Dirac measure at $x$ denoted by $\updelta_x$ 
is given by $\mathrm{MMD}^2(P,\updelta_{x}):=\E_{X\sim P,X'\sim P} [k(X,X')]
+k(x,x)
-2\E_{X\sim P}[k(X,x)]. $

\subsection{Key Ideas}
\label{sec:key-ideas}
The Bayes posterior $\Pi_n$ and its associated posterior predictive distribution $P_{\Pi_n}$ are
\begin{IEEEeqnarray}{rCl}
    \Pi_n(A):=\frac{\int_A  \dt P_\theta (x_{1:n})\dt \Pi(\theta)}{\int_\Theta  \dt P_\theta (x_{1:n})\dt \Pi(\theta)},
    \quad \quad 
    P_{\Pi_n}:=\int_\Theta P_\theta \dt \Pi_n(\theta).
    \nonumber %
\end{IEEEeqnarray}
Here, the predictive distribution $P_{\Pi_n}$ is derived as a consequence of the conditional probability calculation in $\Pi_n$---which itself is primarily geared towards quantifying  the uncertainty about which parameter $\theta$ best describes the observed data $x_{1:n}$.
In this paper, we propose a methodology that seeks to reverse this order: we propose to make prediction the primary object of interest, and do so by constructing a novel posterior that quantifies uncertainty about $\theta$ in {a manner that is optimal for prediction.}
The key motivation for this novel methodology is that while $P_{\Pi_n}$ is predictively optimal if $P_0 \in \mathcal{M}_{\Theta}$ \cite[see][]{aitchison1975goodness}, this optimality breaks down once $P_0 \notin \mathcal{M}_{\Theta}$ {(see e.g.  \citealp{grunwald2017inconsistency})}.
To understand the inadequacy of predictive distributions induced by $\Pi_n$, it is useful to recast the Bayes posterior as a variational problem \cite[see e.g.][]{zellner_optimal_1988}.
In particular, taking $\operatorname{KL}$ to be the  Kullback-Leibler divergence, and assuming for simplicity that the model $P_{\theta}$ pertains to independently sampled observations and admits a density $p_{\theta}$, the Bayes posterior can  equivalently be represented as 
\begin{IEEEeqnarray}{rCl}
\Pi_n =
\argmin_{Q\in\mathcal{P}(\Theta)}\left\{-\int\frac{1}{n}\sum_{i=1}^{n}\log p_\theta(x_i)\dt Q(\theta)+\frac{\KL(Q\|\Pi)}{n}\right\}.  
\nonumber
\end{IEEEeqnarray}
The above formulation of $\Pi_n$ allows for a simple but powerful insight: as $n\rightarrow\infty$, the  impact of our prior beliefs on the Bayes posterior diminishes.
To see this, simply note that the KL-based prior regularisation decays to zero while  the negative log likelihood term will usually converge to
$\int\mathbb{E}_{X\sim P_0}[-\log p_{\theta}(X)]\dt Q(\theta)$.
Notably, up to a constant independent of $Q$, this limiting function is equivalent to the average KL divergence over models given by
$\int \KL(P_0\|P_\theta)\dt Q(\theta)$. 
Therefore, one can generally expect the  variational program defining the Bayes posterior $\Pi_n$  to collapse onto to the minimizer of  $Q \mapsto \int \KL(P_0\|P_\theta)\dt Q(\theta)$, which is in turn minimised by a Dirac mass at $\theta^\star = \argmin_{\theta\in\Theta}\mathbb{E}_{X\sim P_0}[-\log p_{\theta}(X)]$ for most regular statistical models.
Crucially, finding the best-fitting parameter will generally not deliver accurate predictions outside of well-specified models \cite[see][]{aitchison1975goodness}. 
Consequently, unless  $P_0=P_{\theta^\star}$ for some $\theta^\star\in\Theta$, the predictive $P_{\Pi_n}$ associated to the Bayes posterior will generally \textit{not}  provide adequate predictive uncertainty quantification---even as we increase the data volume used to form our posterior beliefs.

In this paper,  we tackle this challenge. 
In particular, we develop a posterior inference method whose primary inferential goal is to maximise agreement between the posterior predictive  and the data, 
and whose parameter uncertainty is a direct consequence of this goal.
This leads us to the \textit{predictively oriented} (\ac{pro}) posterior, which we formulate by modifying the variational characterisation of Bayesian inference.
Rather than targeting the best-fitting model parameterisation, we instead target a prior-regularised version of the best-fitting predictive $P_{Q} := \int_\Theta P_\theta \dt Q(\theta)$ through the posterior distribution on $\Theta$ obtained as the minimiser
\begin{IEEEeqnarray}{rCl}
\argmin_{Q\in\mathcal{P}(\Theta)}\left\{-\frac{1}{n}\sum_{i=1}^{n} \log
P_Q(x_i)
+\frac{\KL(Q\|\Pi)}{n}\right\}.    
\nonumber
\end{IEEEeqnarray}
By similar arguments as before, we now observe that as $n\to\infty$, the above objective collapses onto $Q \mapsto \operatorname{KL}(P_0, P_Q)$. 
This limiting objective   is minimised by a Dirac mass at $\theta^{\star}$ \textit{only} if the model is well-specified, but generally yields a non-degenerate and predictively optimal model average $Q^{\star} \in \argmin_{Q \in \mathcal{P}(\Theta)}\operatorname{KL}(P_0, P_Q)$ outside the well-specified setting.

Here, the choice of the logarithm  to score the quality of a prediction is arbitrary, and
{can be immediately generalised to  treat prediction problems based on any proper scoring rule $P\mapsto S(P,\cdot)$ appropriate to the problem of interest.}
Doing so delivers the general formulation of  the \ac{pro} {posterior} used throughout the methodological and theoretical development of the paper: for some learning rate $\lambda_n>0$, 
\begin{equation}
    {Q}_n = \argmin_{Q\in\mathcal{P}(\Theta)}\left\{\frac{\lambda_n}{n}\sum_{i=1}^n S\left(P_Q, x_i\right)+\KL(Q\|\Pi)\right\}.
   \label{eq:pro-posterior-def}
\end{equation} 
Alternatively, the \ac{pro} posterior can also be represented through a fixed point  as
\begin{IEEEeqnarray}{rCl}
    \dt Q_n(\theta) & \propto & \dt \Pi(\theta) \cdot \exp\left\{ - \frac{\lambda_n}{n} \sum_{i=1}^n \left.\frac{\delta S(P_Q, x_i)}{\delta Q}\right|_{Q = Q_n}(\theta) \right\}.
    \nonumber
\end{IEEEeqnarray}
While the latter is of limited use for understanding the qualitative behavior of \ac{pro} posteriors, it illustrates two key features. 
First, the \ac{pro} posterior is generally not a coherent belief update in the sense of \citet{bissiri:etal:2016}: upon observing  $x_{n+1}$, there is no simple update rule to obtain $Q_{n+1}$ from both $Q_n$ and $x_{n+1}$ alone. 
Second, even for simple models, $Q_n$ does not admit an explicit density or measure representation.

The \ac{pro} posterior is distinct from other robust Bayesian methods such as \textit{Gibbs posteriors}, which still treat prediction as a the primary goal of inference;
\cite[e.g.][]{
zhang_e-entropy_2006,
jiang_gibbs_2008,
jewson2018principles, 
knoblauch2018doubly,
giummole_objective_2019,
syring2019calibrating,
schmongeneralized2020,
cherief2020mmd, matsubara2022robust,
martin_direct_2022,
matsubara2023generalisedDFD,
altamirano2023robustGP,
altamirano2023robustCP,
duran2024outlier,
mclatchie_predictive_2025,
laplante2025robust,
frazier2025impact}.
{Many of these} posteriors are based on an scoring rules  evaluated on sample data, and can be specified using the following variational optimisation problem:
\begin{IEEEeqnarray}{rCl}
    Q_n^{\dagger} & = &\argmin_{Q \in \mathcal{P}(\Theta)}\left\{\frac{\lambda_n}{n}\sum_{i=1}^n\int S(P_\theta, x_i)\dt  Q(\theta)+\KL(Q\|\Pi)\right\}.
    \label{eq:gibbs-posterior-as-opt-problem}
\end{IEEEeqnarray}
The above representation is well-established  \citep[see e.g., Theorem 1 of][]{knoblauch2019generalized}, 
and  makes clear that \textit{all} Gibbs posteriors---including the Bayes posterior, which is obtained by taking $S(P_\theta,x)=-\log p_\theta(x)$---seek to minimise the averaged scoring rule, rather than the score of the averaged model.
This in turn means that the inferences they produce are targeted towards finding the \textit{parameter}, rather than the \textit{predictive}, which best describes the data.
Crucially, {these two goals do not overlap outside of well-specified models,} where $P_0 \in \mathcal{M}_{\Theta}$ \cite[see][]{aitchison1975goodness}.

\subsection{Related Work}

The class of \ac{pro} posteriors we formulate in \eqref{eq:pro-posterior-def} represents a novel statistical paradigm for inference, but shares  similarities with several pre-existing approaches discussed below.
\begin{enumerate}
    \item The earliest attempts at computing objects similar to $Q_n$ date back to the  \textit{$\operatorname{PAC}_T^2$-variational posterior}  \citep{masegosa2020learning} and the \textit{$\operatorname{PAC}^m$-Bayes} approach \citep{morningstar2022pacm}.
    Both methods only consider the special case of the logarithmic score $S(P, x) = - \log \dt P(x)$, and the computational tools developed are variational approximations.
    Given the focus on variational methods, computational feasibility required various approximations to the log score.
    Relative to \ac{pro} posteriors, these approaches necessitate approximations both to the objective and the posterior of interest. These dual approximations imply that these methods are strictly outside the class defined  in \eqref{eq:pro-posterior-def}, and result in predictive methods whose generalisation errors are  larger than those associated to \ac{pro} posteriors.
    \item Relatedly, \cite{lai2024predictive} propose what they call \textit{predictive variational inference}, which differs from the optimization problem to \eqref{eq:pro-posterior-def} in two critical aspects.
    First, optimisation is not considered over $\mathcal{P}(\Theta)$, but over a parametric subset $\mathcal{Q}\subset \mathcal{P}(\Theta)$ for which variational methods are computationally feasible.
    Second, the approach also considers regularisation against the Bayes posterior (not just against the prior).
    These differences may seem small, but are crucial.
    The \ac{pro} posterior is neither representable as a fixed form variational family nor will it be asymptotically normal in general. Hence,  approximating it with a convenient variational family $\mathcal{Q}$ deliberately introduces an irreducible approximation error that is  asymptotically non-negligible. 
    Further---and ignoring the additional complication of variational approximation for the moment---regularisation against the  Bayes posterior rather than the prior implies a  target measure $\widetilde{Q}_n$ that is fundamentally different from the \ac{pro} posterior.
    In particular, since $\operatorname{KL}(Q, \Pi_n) = \int \sum_{i=1}^{n}-\log \mathsf{d}P_\theta(x_i)\mathsf{d}Q(\theta) + \operatorname{KL}(Q, \Pi)$, it defines
    $$
    \widetilde{Q}_n = \argmin_{Q\in\mathcal{P}(\Theta)}\left\{\sum_{i=1}^n\left[\frac{\lambda_n}{n} S\left(P_Q, x_i\right) -\int \log \mathsf{d}P_\theta(x_i)\mathsf{d}Q(\theta)\right]+\KL(Q\|\Pi)\right\}.
    $$
    This target is \textit{neither} geared towards prediction \textit{nor} parameter inference, and instead seeks to compromise between them. 
    While this interpolation may be interesting in some situations, it is not the setting studied in the current paper.
    Further, the predictive dominance results for the \ac{pro} posterior derived in the remainder of this paper apply \textit{neither} to posteriors that are variationally approximated \textit{nor} to objectives that regularise against the Bayes posterior.

    \item More directly related to our work is the work of \citet{shenprediction2025}, which proposes the special case of  \eqref{eq:pro-posterior-def} with the MMD in the specific context of misspecified deterministic simulation models. 
    In this setting, the approach was conceived as a heuristic ``for arresting the collapse of predictive uncertainty'' and to avoid that ``posterior predictions become deterministic.''
    Unlike what we derive in the remainder, \citet{shenprediction2025} also did not rigorously investigate the behaviour of their approach. 
    Beyond that, their work neither advocates for \ac{pro} inference as a general-purpose methodology, nor for the general formulation in \eqref{eq:pro-posterior-def}.
    \item 
    In the special case of the logarithmic scoring rule $S(P, x) = - \log \dt P(x)$, and when the prior regularisation term is dropped, \eqref{eq:pro-posterior-def} recovers \textit{nonparametric maximum likelihood estimation (NPMLE)}.
    Both practically and theoretically however, NPMLE and \ac{pro} posteriors are in quite dissimilar.
    On a practical level, dropping the KL-regularisation may cause non-identifiability even in very simple models \citep[see e.g.][]{laird1978nonparametric}, and may lead to non-unique or fully atomic measures as solutions; see e.g. \citet[][Theorem 21 in Chapter 5]{lindsay1995mixture} or \citet{jordan2015convex}; and so are ill-suited for  probabilistic uncertainty quantification in a generalised Bayesian sense.
    More foundationally, NPMLE and \ac{pro} posteriors have fundamentally different inferential goals. 
    \ac{pro} posteriors conceive of $\{P_{\theta}:\theta \in \Theta\}$ as the model of interest and use $Q_n$ to quantify predictively oriented posterior uncertainty about the parameterisation of this model relative to some prior belief $\Pi\in\mathcal{P}(\Theta)$. 
    In contrast, the NPMLE neither depends on prior beliefs nor constitutes an approach to compute posterior beliefs. 
    Instead, it treats the nonparametric mixture $\{P_Q=\int P_{\theta}\mathsf{d}Q: Q \in \mathcal{P}(\Theta)\}$ as the model of interest, and thus treats  $Q \in \mathcal{P}(\Theta)$ as the parameter of interest, for which it then obtains a point estimate. 
    Regularised versions of NPMLE exist, but differ fundamentally  from the \ac{pro} posterior: 
    regularisers deployed in NPMLE 
     do not 
    enforce absolute continuity with respect probabilistically formulated prior beliefs, and more generally speaking do not depend on Bayes-like prior beliefs at all. (see, e.g., \citealp{wilkins2026data, dunker2025regularized}).

\end{enumerate}

In the remainder, we proceed to study the general formulation of \ac{pro} posteriors in \eqref{eq:pro-posterior-def}. 
We do so because our findings suggest  that \ac{pro} posteriors provide a new form of  uncertainty quantification that is uniquely suited for the predictively oriented inferences that applied scientists are increasingly demanding from statistical procedures.

\section{Misspecification and Motivating Examples}

To develop the key intuitions of our methodology, we present two simple  examples that illustrate the behaviour of \ac{pro} posteriors under model misspecification.
Based on the insights developed from these examples, we will rigorously define different regimes of model misspecification that are important for the theoretical guarantees we will derive.

\subsection{Gaussian Location Model}\label{sec:init-comparison}
In contrast to Bayes and Gibbs posteriors, \ac{pro} posteriors \textit{adapt} to model misspecification.
We illustrate what this means by comparing the predictive distributions associated to $Q_n$ and $Q_n^{\dagger}$ for the model class  $\mathcal{M}_\Theta = \{P_\theta:\,\mathsf{N}(\theta,\sigma^2 = 1)\}$ and a range of different data-generating distributions $P_0$.
The Gaussian model is  particularly educational here as its predictive distribution is obtained from the posterior via smoothing with a Gaussian kernel of unit scale.
For predictive scores based on squared MMD,  Figure \ref{fig:initial-comparison} 
plots the results for 
the well-specified setting where 
$P_0 \in \mathcal{M}_{\Theta}$ and three misspecified  settings where $P_0\notin\mathcal{M}_\Theta$.
While  \ac{pro} posteriors behave similarly to Gibbs posteriors if the model is well-specified, they  continue to produce reasonable predictive distributions under misspecification.
In contrast, Gibbs posterior inferences are blind to misspecification.

As our example uses Gaussians, one may misinterpret the phenomenon displayed as a pathology.
This is not the case: in parametric models, Bayes posteriors and Gibbs posteriors are usually asymptotically Gaussian, and will concentrate around a single element of $\mathcal{M}_{\Theta}$.
As we formally show later on, this property is not shared by \ac{pro} posteriors. 
Instead, they stabilise towards the predictively optimal average of elements in $\mathcal{M}_{\Theta}$ for $P_0$. %
As a result---and unlike Bayes posteriors and Gibbs posteriors---\ac{pro} posteriors produce predictively meaningful parameter uncertainty even under severe forms of misspecification.

\begin{figure}[t!]
\centering
\includegraphics[width=1\textwidth]{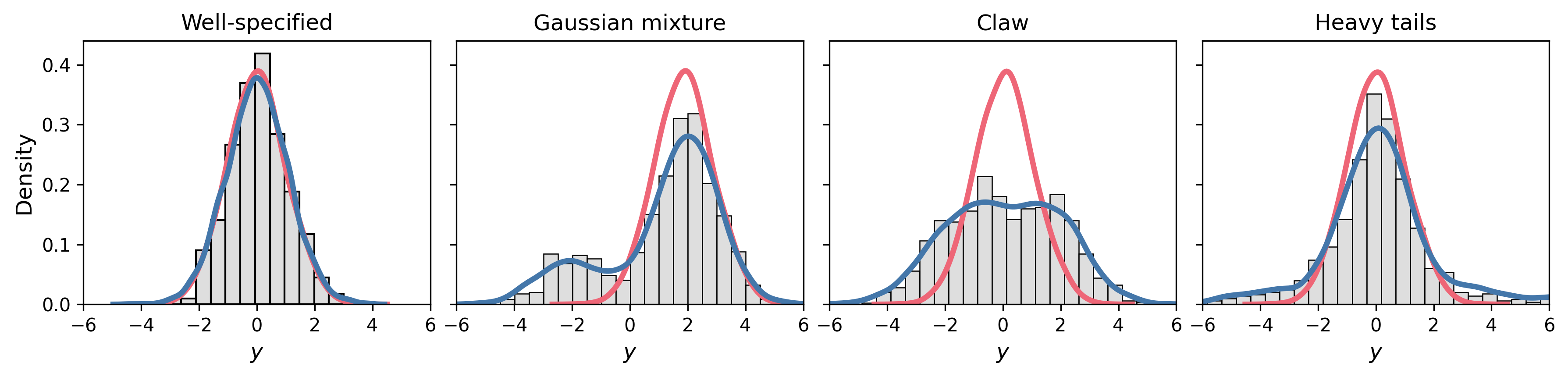}
\caption{Comparison of {\color{procolour}\textbf{\ac{pro} posteriors}} and {\color{gibbscolour}\textbf{Gibbs posteriors}} in a well-specified and three misspecified regimes.
\textbf{\color{gray}Grey bars} indicate a histogram of the observed data.
}
\label{fig:initial-comparison}
\end{figure}

\subsection{Linear Regression}
\label{sec:misspecification-regimes}

Throughout, robustness to model misspecification is an important motivation in our development of \ac{pro} posteriors.
Robustness is also one of the key drivers behind many Gibbs posteriors, usually to guard against Huber contamination \citep{altamirano2023robustGP,altamirano2023robustCP,matsubara2022robust,matsubara2023generalisedDFD,laplante2025robust,duran2024outlier, marusic2025theoretical} .
Whenever the nature of misspecification is more complicated than can be described through an outlier model of the Huber type, these approaches fail to provide robustness.

To contrast this behaviour with that of \ac{pro} posteriors, we consider a linear regression model which for $x_i \in \mathbb{R}$, $z_i \in \mathbb{R}^2$ and $i=1,\dots n$ is given by 
\begin{IEEEeqnarray}{rCl}
    x_i & = & z_i^{\top}\theta + \varepsilon_i.
    \nonumber
\end{IEEEeqnarray}
Here, we make two assumptions:  the parameter of interest $\theta \in \mathbb{R}^2$ is fixed but unknown, and  $\varepsilon_i \overset{\text{iid}}{\sim}\mathcal{N}(0,\sigma^2)$ for a fixed and known noise level $\sigma^2$.
Relative to this model for the data, we explore three different forms of misspecification:
\begin{enumerate}
    \item[(T)] 
    Heavy-tailed observation noise $\varepsilon_i \overset{\text{iid}}{\sim}   t_{3}(0, \sigma^2)$;
    \item[(CR)] 
    Random coefficients $\theta_2 = 2\cdot (-1)^{\xi_i}$, with $\xi_i \overset{\text{iid}}{\sim} \operatorname{Ber}(1/2)$  sampled from a Bernoulli;
    \item[(NT)] 
    The combined effect of $\varepsilon_i \overset{\text{iid}}{\sim}   t_{3}(0, \sigma^2)$ and $\theta_2 = 2\cdot (-1)^{\xi_i}$ with $\xi_i \overset{\text{iid}}{\sim} \operatorname{Ber}(1/2)$. 
\end{enumerate}

\begin{figure}[t!]
    \centering
    \includegraphics[width=\linewidth]{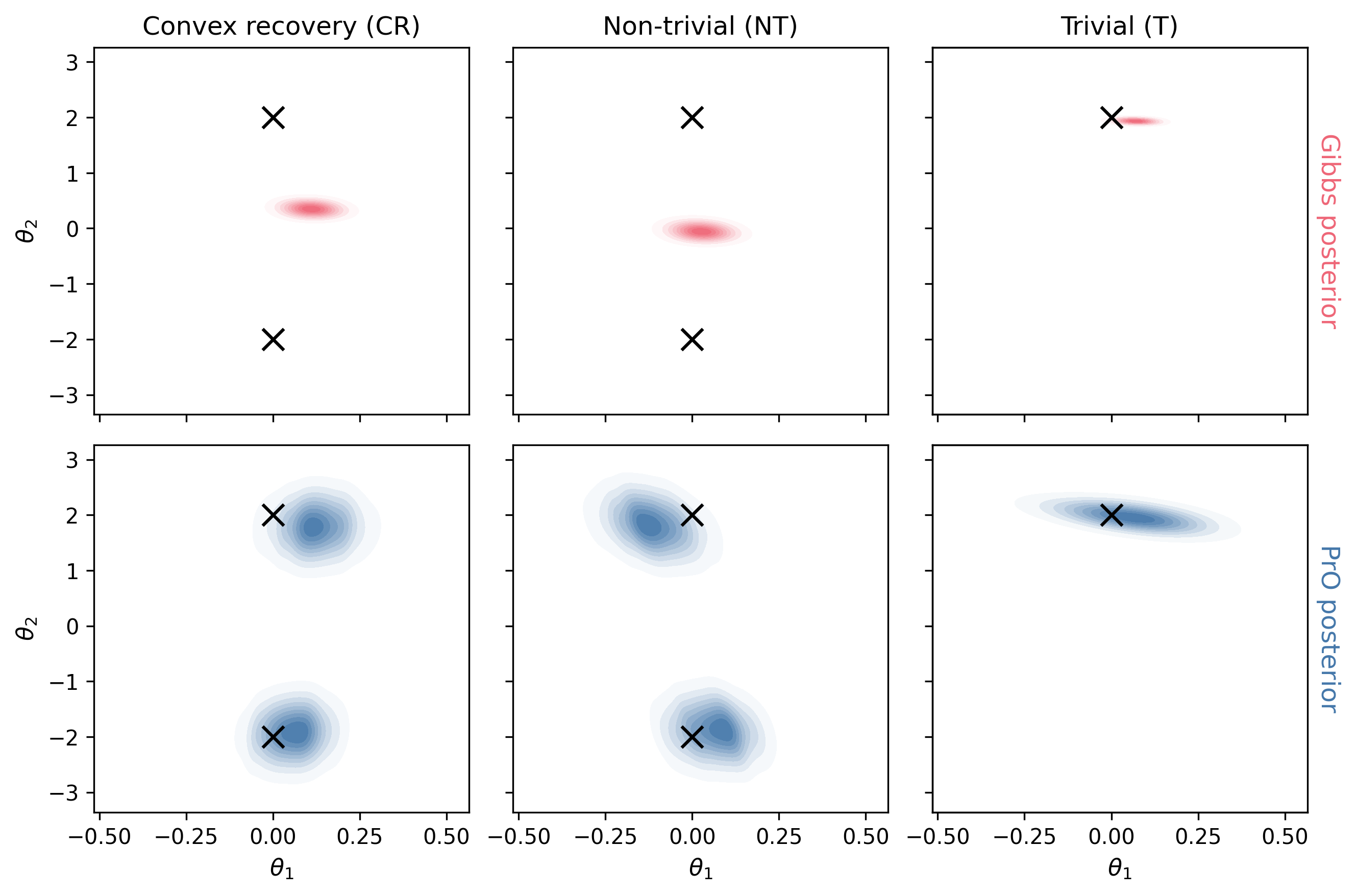}
    \caption{
    The \textcolor{gibbscolour}{\textbf{Gibbs}} and \textcolor{procolour}{\textbf{\ac{pro}}} posterior distributions under different forms of model misspecification.
    Coefficients $\theta_1, \theta_2$ are shown as black crosses.
    }
    \label{fig:linear-regression-posteriors}
\end{figure}

The first setting  constitutes a form of \textit{trivial (T)} misspecification: the inflated noise leads to overconfident inferences, but does not compromise the meaning of the model parameter $\theta$.
In line with this, we would expect both Gibbs and \ac{pro} posteriors to perform reasonably well at identifying the general location of this parameter.
In contrast, the second setting fundamentally compromises the interpretation of inferences on $\theta$.
However, it allows \textit{convex recovery (CR)}: a 2-component mixture over the misspecified linear regression model recovers the actual data-generating process.
While Gibbs posteriors will be adversely affected by this setting, \ac{pro} posteriors will be able to recover the true data-generating process exactly.
The last setting merges both of the previous complications, and constitutes a form of \textit{non-trivial (NT)} misspecification: not even convex combinations of the original model class will recover the true data-generating process exactly.
Still, we expect that the \ac{pro} posterior will perform well, since a convex combination of the original model will get us very close to the true data-generating mechanism.

\begin{figure}[b!]
    \centering
    \includegraphics[width=\linewidth]{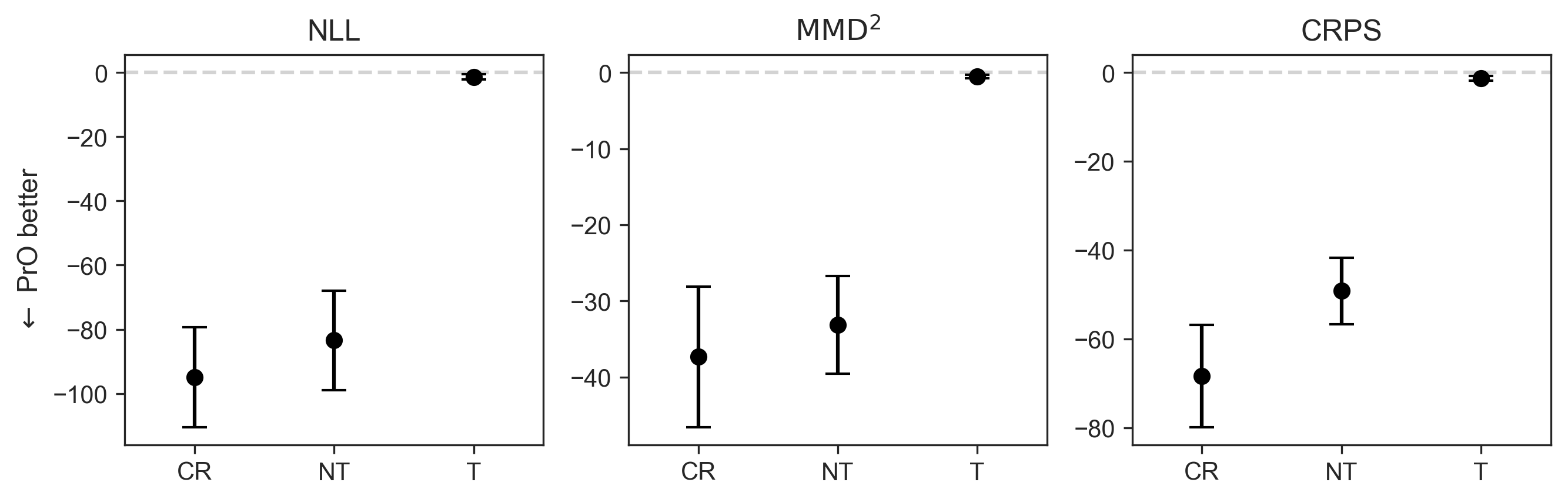}
    \caption{
    Mean and standard error of \textcolor{gibbscolour}{\textbf{Gibbs}} posterior's predictive loss relative to \textcolor{procolour}{\textbf{\ac{pro}}} posteriors evaluated using negative log likelihood (NLL), squared MMD ($\operatorname{MMD}^2$), and Continuous Ranked Probability Score (CRPS) on a test set for different forms of misspecification.
    }
    \label{fig:linear-regression-pred-diff}
\end{figure}

Our numerical results confirm these intuitions, and are summarised in Figure~\ref{fig:linear-regression-posteriors}, where we compare posterior inferences for Gibbs and \ac{pro} posteriors based on the MMD in all three settings, with implementation details and further information  deferred to Appendix \ref{sec:synthetic-regression-supp}.
Unlike the Gibbs posterior, the \ac{pro} posterior performs  uncertainty quantification for both convex recovery and the non-trivially misspecified setting by accounting for the distortions of the data-generating mechanism relative to the posited simple linear regression.
Even in the trivially misspecified setting, the \ac{pro} posterior's parameter uncertainty seems a more appropriate reflection of the true underlying uncertainty.
The reason is that the \ac{pro} posterior is driven by its implied predictive distribution, and therefore is more sensitive to the inflated noise in the trivially misspecified setting.
Figure
\ref{fig:linear-regression-pred-diff} translates the impact of parameter uncertainty into predictive performance, and shows an even clearer  advantage of the resulting \ac{pro} posterior predictive distributions.

\subsection{A  Taxonomy of Model Misspecification}

Based on the previous examples, it will prove useful to rigorously define this so far intuitively motivated taxonomy of model misspecification.
As we shall see later, doing so will prove   critical for comparing the behaviours of  Gibbs and \ac{pro} posteriors.
\begin{definition}
    \label{def:misspecification}
    We say that $\mathcal{M}_{\Theta}$ is \textit{well-specified} if $P_0 \in \mathcal{M}_{\Theta}$, and \textit{misspecified} if $P_0 \notin \mathcal{M}_{\Theta}$.
    Further, we call a misspecified $\mathcal{M}_{\Theta}$ \textit{trivially misspecified}  with respect to a scoring rule $\mathcal{S}$ if there exists $\theta' \in \Theta$ so that
    $\inf_{Q\in\mathcal{P}(\Theta)}\mathcal{S}(P_Q, P_0) \geq  \mathcal{S}(P_{\theta'}, P_0)$, and \textit{non-trivially misspecified} if $\mathcal{S}(P_Q, P_0) < \inf_{\theta \in \Theta}\mathcal{S}(P_{\theta}, P_0)$ {for some $Q \in \mathcal{P}(\Theta)$.}
    Further, we say that a misspecified $\mathcal{M}_{\Theta}$ allows \textit{convex recovery} if there exists $Q \in \mathcal{P}(\Theta)$ such that $P_Q = P_0$.
\end{definition}

To aid understanding, Figure \ref{Fig:misspec-fig} provides an illustration of these different forms of model misspecification.
What we call non-trivial  misspecification is a generalisation of what
\citet{grunwald2017inconsistency} termed ``{bad misspecification}''.
Convex recovery is a special case of this, and implies that the extended model class $\{P_Q: Q\in\mathcal{P}(\Theta)\}$ obtained via convex combinations of $\mathcal{M}_{\Theta}$ contains $P_0$.

\begin{figure}[H]%
\begin{center}
\centering
    \begin{subfigure}{0.32\textwidth}
        \centering
        \resizebox{\linewidth}{!}{\begin{tikzpicture}[x=1pt,y=1pt,scale=0.6]
  \coordinate (C) at (272.5,154);

  \fill[gray!15]
    (C) -- 
    plot[smooth cycle, tension=1.2] coordinates {
      (300,90)
      (370,120)
      (330,220)
      (220,230)
      (166,154)
    };

  \draw[thick]
    (C) -- 
    plot[smooth cycle, tension=1.2] coordinates {
      (300,90)
      (370,120)
      (330,220)
      (220,230)
      (166,154)
    };

  \draw[black, thick] (140, 80) -- (400, 80);
  \draw[black, thick] (400, 80) -- (400, 270);
  \draw[black, thick] (140, 270) -- (400, 270);
  \draw[black, thick] (140, 270) -- (140, 80);

  \draw[dashed, gibbscolour, thick]
    (195, 93) -- (209, 112);

  \node at (360,230) {{\Large $\mathcal{M}_\Theta$}};
  \filldraw[gibbscolour] (209, 112) circle (3pt) node[above=0.4cm, right=-0.5cm] {$P_{\theta^\star} = \arg\min_{\theta}\mathcal{S}(P_{\theta}, P_0)$};
  
  \filldraw (195, 93) circle (3pt) node[left] {$P_0$};

\end{tikzpicture}}
        \caption{Trivial misspecification.}
    \end{subfigure}
    \hfill
    \begin{subfigure}{0.32\textwidth}
        \centering
        \resizebox{\linewidth}{!}{\begin{tikzpicture}[x=1pt,y=1pt, scale=0.6]
  \coordinate (C) at (272.5,154);

  \fill[gray!20]
    (C) -- 
    plot[smooth cycle, tension=1.2] coordinates {
      (272.5,154)
      (300,100)
      (360,120)
      (330,220)
      (220,230)
      (166,154)
    };

  \draw[thick]
    (C) -- 
    plot[smooth cycle, tension=1.2] coordinates {
      (272.5,154)
      (300,100)
      (360,120)
      (330,220)
      (220,230)
      (166,154)
    };

  \draw[dashed, procolour, thick]
    (175,147) -- (300,100);
  \draw[dashed, gibbscolour, thick]
    (237.5 - 13, 123.5 - 30) -- (270,155);

  \draw[black, thick] (140, 80) -- (400, 80);
  \draw[black, thick] (400, 80) -- (400, 270);
  \draw[black, thick] (140, 270) -- (400, 270);
  \draw[black, thick] (140, 270) -- (140, 80);

  \node at (360,230) {{\Large $\mathcal{M}_\Theta$}};
  \filldraw[black] (300,100) circle (3pt) node[anchor=south west]{$P_{\theta_2}$};
  \filldraw[black] (175,147) circle (3pt) node[anchor=south west] {$P_{\theta_1}$};
  \filldraw[procolour] (245.5, 121.5) circle (3pt) node[above=0.3cm, left=-0.3cm]{$P_{Q^\star}$};
  \filldraw[gibbscolour] (270,155) circle (3pt) node[anchor=south west]{$P_{\theta^\star}$};
  
  \filldraw (237.5 - 13, 123.5 - 30) circle (3pt) node[right] {$P_0$};

\end{tikzpicture}}
        \caption{Non-trivial misspecification.}
    \end{subfigure}
    \hfill
    \begin{subfigure}{0.32\textwidth}
        \centering
        \resizebox{\linewidth}{!}{\begin{tikzpicture}[x=1pt,y=1pt, scale=0.6]
  \coordinate (C) at (272.5,154);

  \fill[gray!20]
    (C) -- 
    plot[smooth cycle, tension=1.2] coordinates {
      (272.5,154)
      (300,100)
      (360,120)
      (330,220)
      (220,230)
      (166,154)
    };

  \draw[thick]
    (C) -- 
    plot[smooth cycle, tension=1.2] coordinates {
      (272.5,154)
      (300,100)
      (360,120)
      (330,220)
      (220,230)
      (166,154)
    };

  \draw[dashed, procolour, thick]
    (175,147) -- (300,100);
  \draw[dashed, gibbscolour, thick]
    (245, 122) -- (270,155);

  \draw[black, thick] (140, 80) -- (400, 80);
  \draw[black, thick] (400, 80) -- (400, 270);
  \draw[black, thick] (140, 270) -- (400, 270);
  \draw[black, thick] (140, 270) -- (140, 80);

  \node at (360,230) {{\Large $\mathcal{M}_\Theta$}};
  \filldraw[black] (300,100) circle (3pt) node[anchor=south west]{$P_{\theta_2}$};
  \filldraw[black] (175,147) circle (3pt) node[anchor=south west] {$P_{\theta_1}$};
  \filldraw[black] (245.5, 121.5) circle (3pt) node[below left]{$\color{black}{P_0} = \color{procolour}{P_{Q^\star}}$};
  \filldraw[gibbscolour] (270,155) circle (3pt) node[anchor=south west]{$P_{\theta^\star}$};
  
\end{tikzpicture}}
        \caption{Convex recovery.}
    \end{subfigure}
\caption{
Illustration of key ideas introduced in Definition \ref{def:misspecification}.
Panel (a) depicts trivial  model misspecification: relative to the chosen score $\mathcal{S}$, a singular element of $\mathcal{M}_{\Theta}$ provides the best fit for $P_0$, and there is no benefit to averaging.
In contrast, Panel (b) illustrates the case of non-trivial model misspecification: constructing ${Q^\star} := \argmin_{Q}\mathcal{S}(P_Q, P_0)$ as a convex combination of $P_{\theta_1}$ and $P_{\theta_2}$ yields a better predictive for $P_0$ than any one point in $\mathcal{M}_{\Theta}$.
Panel (c) provides pictorial intuition for  convex recoverability: it is the special case where $P_{Q^{\star}}$ and $P_0$ coincide.}
\label{Fig:misspec-fig}
\end{center}
\end{figure}

\section{Formal Guarantees}\label{sec:theory-short}

The motivating example in  the previous section
shows empirically that \ac{pro} posteriors  \textit{adapt} to model misspecification: when the model is well-specified, they behave like Gibbs or Bayes posteriors, but deliver fundamentally different behaviours when the model is misspecified. In this section, we clarify this adaptivity and formally show that \ac{pro} posteriors are predictively superior to Bayes and Gibbs posteriors. 
Throughout, we will  often use the shorthand \textit{predictive scoring rule} to refer to a scoring rule $S(P_Q, x)$  applied directly to score a predictive distribution $P_Q = \int P_{\theta} \dt Q(\theta)$. 
Additionally, for two sequences $a_n,b_n$ we write $a_n\lesssim b_n$ to mean that $a_n\le C b_n$ for all $n$ large enough and some universal constant $C>0$. Further, we say that $a_n\asymp b_n$ if $a_n\lesssim b_n$ and $b_n\lesssim a_n$.

\subsection{Regularity conditions}

Throughout, we impose a minimal convexity assumption satisfied by virtually all scoring rules of interest---including kernel scoring rules and the logarithmic score.
\begin{assumption}\label{ass:convex} The scoring rule $S(\cdot,x)$ is convex in its first argument for all $x\in\mathcal{X}$, and its corresponding divergence $\mathcal{D}_S$ is jointly convex in both arguments. 
\end{assumption}

In addition to convexity of the scoring rule, we require regularity conditions on their expectations. 
While less restrictive conditions can be obtained, the following  regularity condition is easy to state, and suffices for our theoretical analysis.

\begin{assumption}\label{ass:entropy}
Let $r_n$ be a positive sequence such that $r_n\rightarrow 0$ as $n\rightarrow\infty$ and $\mathcal{P}_{\Pi}(\Theta):= \{Q\in\mathcal{P}(\Theta):Q\ll\Pi\}$, it holds that
$ \E\left[\sup_{Q\in\mathcal{P}_{\Pi}(\Theta)}\left|\frac{1}{n}\sum_{i=1}^{n}S(P_Q,X_i)-\mathcal{S}(P_Q,P_0)\right|\right]\lesssim r_n$.
\end{assumption}

As the speed at which the sequence $r_n$ decays depends on the scoring rule, we state it in full generality.
For the special case of bounded scoring rules, $r_n$ will always be of order $\log(n)/{n}^{1/2}$.
Further, even for unbounded scoring rules such as the logarithmic scoring rule, the rate  $\log(n)/{n}^{1/2}$ is achievable at the expense of additional restrictions and technicalities using techniques such as those in 
\cite{ghosal2001entropies}. 
For further discussion and details, see Section \ref{sec:theoretical-details}.

Though Assumption \ref{ass:entropy} is satisfied by many scoring rules, more interpretable alternatives exist.
In particular, Assumption \ref{ass:convex} and Jensen's inequality imply that 
\begin{IEEEeqnarray}{rCl}
    S\left( P_Q, x\right)
    & = &
    \int S\left(  P_{\theta}, x \right) \dt Q(\theta)
    - \Delta(Q, x),
    \label{eq:predictive-score-decomposition}
\end{IEEEeqnarray}
for  all $x \in \mathcal{X}$ and some non-negative remainder term $\Delta:\mathcal{P}(\Theta) \times \mathcal{X} \to \mathbb{R}_{+}$.
In many cases, such as the log predictive score, 
this remainder does not simplify further. However, kernel scoring rules (\citealp{gneiting2007probabilistic}) 
can be written as an expectation over a product measure in $\Theta$ for any $x\in\mathcal{X}$ (see Lemma \ref{lemma:kernel_scores} in Appendix \ref{app:lemmas}), so that the predictive score is
\begin{IEEEeqnarray}{rCl}
S\left(P_Q,x\right)
& = &
\int \underbrace{\left\{S(P_{\theta_1},x_i) 
- 
\delta(x_i;\theta_1,\dots,\theta_k)\right\}}_{ \textstyle
=:L(\theta_{1:k}, x)}
\dt Q^k(\theta_{1:k}); \quad Q^{k}(\theta_{1:k})=\bigotimes_{j=1}^{k}Q(\theta_j).
\label{eq:tractable_score_defn}
\end{IEEEeqnarray}
for some function $\delta: \mathcal{X} \times \Theta^k \to \mathbb{R}_+$ with  $k \in \mathbb{N}_+$.
This  representation allows us to replace Assumption \ref{ass:entropy} with more interpretable and testable conditions. 

\begin{assumptionp}{\ref*{ass:entropy}$'$}
\label{ass:Global}
{The predictive scoring rule satisfies \eqref{eq:tractable_score_defn} and for any $\varphi \in \Theta$, if  $\theta_{1:k} = (\varphi, \varphi, \dots \varphi)$, then $\delta(x;\varphi,\dots,\varphi) = 0$, for all $x \in \mathcal{X}$.}
Further, for any $\lambda\in\mathbb{R}$, $\mathcal{L}(\theta_{1:k}):=\E_{}[L(\theta_{1:k}, X)]$,  and some $C>0$, 
			$
			\bigintssss \mathbb{E}\left[e^{\lambda \{\mathcal{L}(\theta_{1:k})-\frac{1}{n}\sum_{i=1}^nL(\theta_{1:k}, X_i)\}}\right]\dt \Pi^k(\theta_{1:k})\le e^{\lambda^2C^2/n}.
		$
\end{assumptionp}

Here, the first part of Assumption \ref{ass:Global} is a mild regularity condition, and  automatically satisfied by kernel scoring rules for $k=2$. 
The second part imposes a Hoeffding-type moment condition that generally has to be verified, and which is very similar to those classically arising in the study of Gibbs posteriors. 
Though easier to verify, certain scores such as the logarithmic scoring rule violate Assumption \ref{ass:Global}, so that the more abstract conditions imposed by Assumption \ref{ass:entropy} must be maintained instead.

\subsection{Key Results}\label{sec:key-results}

Having established the necessary regularity conditions, we proceed by studying the impact of different misspecification regimes on \ac{pro} posteriors $Q_n$, and compare them to Gibbs posteriors $Q_n^{\dagger}$ constructed using the same scoring rule. 
To state our results, we define 
\begin{equation}\label{eq:rates}
\nu_n=\begin{cases}
    \log(n)/{n}^{1/2},&\text{ if Assumption \ref{ass:Global} is satisfied;}\\
    \max\{\log(n)/{n}^{1/2},r_n\},& \text{ if Assumption \ref{ass:entropy} is satisfied,}
\end{cases}
\end{equation}
and take $\lambda_n \asymp {n}^{1/2}$ up to logarithmic factors.
This allows us to state our first key result on the predictive performance of \ac{pro} posteriors.
\begin{theorem}\label{thm:DI-master}
Assumption \ref{ass:convex} is satisfied, and Assumption  \ref{ass:entropy} or \ref{ass:Global} holds. Under prior regularity conditions, the following holds for $n$ sufficiently large:
\begin{itemize}
    \myitem{$\operatorname{(WS)}$} 
    \label{item:master-theorem-DI-WS}
If the model is well-specified, $\E \left[\mathcal{D}_S\left( P_{Q_n},P_0\right)\right]
    - \E [\mathcal{D}_S(P_{ Q_n^\dagger},P_0)] \lesssim \nu_n %
    $;
 \myitem{$\operatorname{(NT)}$} 
    \label{item:master-theorem-DI-non-trivially-NT}
    If the model is non-trivially misspecified, $\E \left[\mathcal{D}_S\left( P_{Q_n},P_0\right)\right]
    <\E [\mathcal{D}_S(P_{ Q_n^\dagger},P_0)]$, 
    \\
    and %
    $\E \left[\mathcal{D}_S\left( P_{Q_n},P_0\right)\right]
    - \inf_{Q \in \mathcal{P}(\Theta)} \mathcal{D}_{\mathcal{S}}(P_Q, P_0) \lesssim \nu_n;$%
    \myitem{$\operatorname{(CR)}$} 
    \label{item:master-theorem-DI-non-trivially-CR}  
    If the model allows convex recovery, then $\E \left[\mathcal{D}_S\left( P_{Q_n},P_0\right)\right]\lesssim \nu_n.$ %
\end{itemize}
\end{theorem}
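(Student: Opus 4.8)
The plan is to derive all three parts from the exact-score oracle inequality in Lemma~\ref{lemma:pac-bayes-bound-master}. Because the product measures have i.i.d.\ coordinates, the chain rule for relative entropy gives $\KL(Q^k\|\Pi^k)=k\,\KL(Q\|\Pi)$, so the penalty collapses and the lemma reads
\begin{equation*}
\E\!\left[\mathcal{D}_S(P_{Q_n},P_0)\right]\;\le\;\inf_{Q\in\mathcal{P}(\Theta)}\left\{\mathcal{D}_S(P_Q,P_0)+\frac{\KL(Q\|\Pi)}{\lambda_n}\right\}+\frac{\lambda_n k C_k^2}{n}.
\end{equation*}
Every part then reduces to inserting a well-chosen comparison measure $Q$ and tuning $\lambda_n$.

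For part (CR), convex recovery furnishes $Q^\star$ with $P_{Q^\star}=P_0$, so $\inf_Q\mathcal{D}_S(P_Q,P_0)=0$. When $\KL(Q^\star\|\Pi)<\infty$ I would feed $Q^\star$ into the bound directly and take $\lambda_n\asymp n^{1/2}$, giving a rate of $O(n^{-1/2})$; in general I would instead use an absolutely continuous approximation of $Q^\star$ (equivalently, a localisation of the prior) whose relative entropy grows like $\log n$ while its residual divergence stays $O(n^{-1/2})$, and balance against the Hoeffding remainder $\lambda_n kC_k^2/n$ by taking $\lambda_n\asymp n^{1/2}$; this yields the stated $\log(n)/n^{1/2}$ bound. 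Part (WS) is the special case $P_0=P_{\theta_0}\in\mathcal{M}_\Theta$, where the Dirac at $\theta_0$ realises convex recovery; feeding the prior conditioned to a shrinking ball $B_n=\{\theta:\|\theta-\theta_0\|\le r_n\}$ into the bound gives $\KL(Q\|\Pi)=\log(1/\Pi(B_n))\asymp\log n$ and $\mathcal{D}_S(P_Q,P_0)=O(r_n)$, so again $\E[\mathcal{D}_S(P_{Q_n},P_0)]\le\log(n)/n^{1/2}$. The stated \emph{difference} bound is then immediate: properness gives $\mathcal{D}_S(P_{Q_n^\dagger},P_0)\ge 0$, whence $\E[\mathcal{D}_S(P_{Q_n},P_0)]-\E[\mathcal{D}_S(P_{Q_n^\dagger},P_0)]\le\E[\mathcal{D}_S(P_{Q_n},P_0)]\le\log(n)/n^{1/2}$.

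Part (NT) has two halves. The rate statement is proved exactly as in (CR), except the comparison measure now localises near the optimiser $Q^\star=\argmin_Q\mathcal{D}_S(P_Q,P_0)$, leaving the irreducible offset $\inf_Q\mathcal{D}_S(P_Q,P_0)$ plus the same $\log(n)/n^{1/2}$ term. For the strict-domination statement, set $D_\star:=\inf_Q\mathcal{D}_S(P_Q,P_0)$ and $D_\dagger:=\inf_\theta\mathcal{D}_S(P_\theta,P_0)$; non-trivial misspecification (Definition~\ref{def:misspecification}) is precisely $D_\star<D_\dagger$. The oracle inequality with $\lambda_n\asymp n^{1/2}$ forces $\limsup_n\E[\mathcal{D}_S(P_{Q_n},P_0)]\le D_\star$, whereas a Gibbs-posterior concentration argument shows $Q_n^\dagger$ collapses onto $\theta^\star=\argmin_\theta\mathcal{S}(P_\theta,P_0)$, so that $P_{Q_n^\dagger}\to P_{\theta^\star}$ and hence $\E[\mathcal{D}_S(P_{Q_n^\dagger},P_0)]\to D_\dagger$. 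Choosing $\varepsilon$ with $2\varepsilon<D_\dagger-D_\star$ separates the two expectations for all $n$ large enough, giving the strict inequality.

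I expect the genuine obstacle to be the relative-entropy control underlying the $\log n$ factor: using the deferred regularity conditions, one must show that $\mathcal{D}_S(P_Q,P_0)$ is smooth enough in $Q$ near the relevant optimiser that localising the prior to a radius $r_n\asymp n^{-1/2}$ perturbs the divergence by only $O(n^{-1/2})$, while the prior charges that neighbourhood with mass $\gtrsim r_n^{\,d}$ (with $d=\dim\Theta$), so that $\KL(Q\|\Pi)=O(\log n)$. A second, subtler point is the Gibbs-side limit $\E[\mathcal{D}_S(P_{Q_n^\dagger},P_0)]\to D_\dagger$ needed for strict domination in (NT): because $\mathcal{D}_S$ is only \emph{jointly convex} (Assumption~\ref{ass:convex}), Jensen delivers the easy upper bound $\mathcal{D}_S(P_{Q_n^\dagger},P_0)\le\int\mathcal{D}_S(P_\theta,P_0)\,\dt Q_n^\dagger(\theta)$ but not the matching lower bound, so pinning the limit at $D_\dagger$ (rather than $D_\star$) hinges on genuine posterior concentration of $Q_n^\dagger$ at $\theta^\star$ together with continuity of the map $P\mapsto\mathcal{D}_S(P,P_0)$ and a uniform-integrability argument to exchange limit and expectation.
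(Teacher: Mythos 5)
Your proposal follows essentially the same route as the paper: the oracle inequality of Lemma~\ref{lemma:pac-bayes-bound-master} with $\KL(Q^k\|\Pi^k)=k\,\KL(Q\|\Pi)$, localised comparison measures (the prior restricted to shrinking divergence-balls for (WS) and the discrete case of (CR), or $Q^\star$ and a finite-KL approximation of it for (NT) and the continuous case of (CR)) with $\lambda_n\asymp n^{1/2}$, and the Gibbs-posterior bound of Lemma~\ref{lem:gibbs} for the comparative statements. The subtle point you flag --- that strict domination in (NT) needs a genuine lower bound $\E[\mathcal{D}_S(P_{Q_n^\dagger},P_0)]\ge \mathcal{D}_S(P_{\theta^\star},P_0)-\varepsilon$ for large $n$, which joint convexity alone cannot supply since Jensen points the wrong way --- is precisely the step the paper also resolves only via (implicit) concentration of $Q_n^\dagger$ at $\theta^\star$, so your account is, if anything, more explicit about where the real work lies.
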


Theorem \ref{thm:DI-master} evaluates the \ac{pro} posterior predictive $P_{Q_n}$ and its relationship to the Gibbs posterior predictive $P_{Q_n^{\dagger}}$ in terms of the divergence $\mathcal{D}_{S}$ induced by the score chosen to construct both posteriors. %
\ref{item:master-theorem-DI-WS} concerns the well-specified setting. 
Here, Gibbs posteriors typically concentrate towards the correct model parameter at the usual $\log(n) / n^{1/2}$ rate. 
As the result shows, the difference in predictive performance between  \ac{pro} posteriors and Gibbs posteriors in this setting is small, and usually of order $\log(n) / n^{1/2}$.
For the case of non-trivial misspecification studied in \ref{item:master-theorem-DI-non-trivially-NT}, the situation changes: 
now, the \ac{pro} posteriors strictly outperforms Bayes and Gibbs posteriors.
In particular, this performance gap will be of order $\inf_{\theta \in \Theta}\mathcal{D}_S(P_{\theta}, P_0)-\inf_{Q \in \mathcal{P}(\Theta)}\mathcal{D}_S(P_{Q}, P_0)$, which grows larger as the benefits of averaging over $\mathcal{M}_{\Theta}$ become more pronounced.
To confirm this intuition, the result in \ref{item:master-theorem-DI-non-trivially-CR} considers the special case of convex recovery, where $\inf_{Q \in \mathcal{P}(\Theta)}\mathcal{D}_S(P_{Q}, P_0) = 0$.
As expected, the  \ac{pro} posterior recovers the true data-generating process at rate $\nu_n$.
In contrast, the Gibbs posterior predictive $P_{Q_n^{\dagger}}$  \textit{never} recovers the true data generating process in this setting, and  instead collapses onto a point mass at $\theta^{\star} = \argmin_{\theta \in \Theta}\mathcal{D}_S(P_{\theta}, P_0)$.
Lastly, the result also shows that the only case in which the choice of scoring rule has a notable impact is the non-trivially misspecified case: when the model satisfies convex recovery or is well-specified, \textit{any}  scoring rule will recover the same predictive distribution for $n$ sufficiently large.
This suggests that \ac{pro} posteriors are generally less sensitive to the choice of scoring rule than Gibbs posteriors. 

The above result validates the use of \ac{pro} posteriors whenever predictive performance matters, but sheds little light on the qualitative behaviour of \ac{pro} posteriors as expressions of parameter uncertainty.
Given that they no longer collapse to a point mass, is it safe to assume that they behave well as the sample size increases? 
Indeed, in the context of parameter inference, convergence to a limiting distribution as more data are observed is fundamental to ensure that one is learning a well-defined target.
Our second result assuages these concerns, and shows that \ac{pro} posteriors \textit{do} indeed converge to a well-defined target given by the predictively optimal model average $Q^* := \argmin_{Q\in\mathcal{P}(\Theta)}\mathcal{D}_S(P_Q,P_0)$.
Below, we state this convergence  relative to a distance $\dt$ defined on  probability distributions and dependent on additional regularity conditions (see Section \ref{sec:theoretical-details} for details).
\begin{corollary}\label{corollary:fake-concentration-dtf}
Assumption \ref{ass:convex}  and Assumption \ref{ass:entropy} or \ref{ass:Global}  holds. 
Under convex recovery and  additional regularity conditions, there exists $\alpha > 0$ so that
$
\mathbb{E}[\dt(Q_n,Q^\star)] \leq C(\log n/n^{1/2})^{1/\alpha}.
$ 
\end{corollary}

While the above result relies on convex recovery, our empirical results strongly suggest that this is an artifact of our proof technique, and that this behaviour should be expected to hold under much weaker conditions.

\subsection{Predictive calibration}
\label{sec:calibration}

Beyond achieving predictive optimality, a natural question is whether \ac{pro} posterior predictive distributions are calibrated. The literature entertains various  modes of calibration but in this section we only discuss two variants: a weak probabilistic notion and a stronger almost sure notion. 
Under mild conditions, this weaker probabilistic notion holds automatically whenever the model class allows for convex recovery.
\begin{corollary}
    \label{corollary:pointwise-coverage}
    Suppose that the conditions of Theorem \ref{thm:DI-master} hold, that the model allows for convex recovery, and that $\mathcal{D}_{S}$ metrises weak convergence.
    For $C \subset \mathcal{X}$ any fixed and  measurable $P_0$-continuity set,  $|P_{Q_n}(C) - P_0(C)|$ converges to zero in probability as $n\to\infty$.
\end{corollary}

The above demonstrates that so long as an average over models can correctly capture the data-generating process, \ac{pro} posteriors produce {predictions that are probabilistically and marginally calibrated in the sense of \citet{gneiting2007probabilistic}.} %
The requirement that the chosen scoring rule metrise weak convergence is exceedingly mild.
For instance, it is satisfied by the MMD based on characteristic kernel functions such as Gaussian or Mat\'ern kernels, the logarithmic score, the Cumulative Ranked Probability Score, as well as the energy score.
On top of pointwise coverage guarantees, the logarithmic scoring rule leads to {a stronger calibration guarantee beyond those given by probabilistic and marginal calibration.} %
\begin{theorem}\label{thm:calibration}

{For $S(P,x)=-\log \dt P(x)$ denoting the logarithmic scoring rule, assume} that  the conditions of Theorem \ref{thm:DI-master} hold, and that the model allows for convex recovery.
Then, letting $(C_n)_{n\geq1}$ be any sequence of potentially data-dependent measurable sets $C_n \subset \mathcal{X}$ so
that $P_{Q_n}(C_n)=1-\alpha$
almost surely, it holds that
$\E[P_0(C_n)] \longrightarrow  1-\alpha$ as $n\to\infty$.
\end{theorem}

This results in Corollary \ref{corollary:pointwise-coverage} and Theorem \ref{thm:calibration} are a consequence of the \ac{pro} posterior's emphasis on prediction and model misspecification. Bayes posteriors satisfy frequentist calibration guarantees only if the model is well-specified, but generally fail to deliver predictively calibrated inferences under model misspecification. In contrast, since \ac{pro} posteriors construct uncertainty through predictive performance, they inherit asymptotically valid predictive coverage so long as convex recovery of $P_0$ is feasible.

Should the model not be convexly recoverable,  the  \ac{pro} posteriors predictive will generally not be calibrated.
Indeed, as Theorem \ref{thm:DI-master} shows, there will be an irreducible and persistent difference between $P_{Q_n}$ and $P_0$ outside this setting, even as $n\to\infty$.
Practically speaking however, our experiments demonstrate that it is reasonable to expect the \ac{pro} posterior predictive distribution  to provide approximately correct coverage whenever the irreducible difference between $P_{Q_n}$ and $P_0$ is small (see e.g., Section \ref{sec:redshift}).

 \section{Comparisons}

It is worth contrasting \ac{pro} posterior uncertainty  with a set of ideas and methods known under the umbrella of \textit{predictive Bayes} \cite[see e.g.][]{berti21class, fortini20quasibayes, fortini23predictionbased, fong2023martingale, fortini25exchangeability}.
Despite similar sounding names, the roles played by prediction in the construction of \ac{pro} posteriors and predictive Bayes approaches are fundamentally different:
the \ac{pro} posterior is geared towards \textit{learning the predictive} from {seen data}, predictive Bayes approaches instead \textit{posits predictive distributions}  to model the {unseen data}. 
Put differently, while \ac{pro} posteriors make the predictive distribution the object of {posterior inference}, predictive Bayes methods specify predictive distributions instead of  priors and likelihoods.

Conversely,  priors and likelihoods can be recast as a convenient way of  specifying a sequence of predictive distributions on unseen data:
formalising this in a recent read paper at the Royal Statistical Society, \citet{fong2023martingale} used Doob's Theorem to argue that the  source of Bayesian uncertainty is unseen data.
{Taken to its logical extreme, if we observe the entire population $P_0$, no model or parameter uncertainty remains, and Bayesian methods become perfectly confident in the conclusions they draw from data.}
Indeed, this is the phenomenon underlying posterior concentration, which in turn also applies to various generalisations of Bayesian uncertainty, including Gibbs and Martingale posteriors.

While posterior concentration is clearly desirable if there exists some value of $\theta\in\Theta$ that can make accurate predictions about the true data-generating process $P_0$, most scientists would argue that posterior concentration is no longer a faithful expression of how model uncertainty interacts with the data volume when none of the considered model parameterisations can fully  explain the natural phenomenon under study.
\ac{pro} uncertainty remedies  this: when the model is misspecified, \ac{pro} posteriors generally {do not concentrate} onto a single parameterisation as more data are observed.
Instead, they account for the  irreducible uncertainty that arises as a consequence of model misspecification, and stabilise towards a non-trivial and predictively optimal average over models as more observations are collected.
This is aligned with how most scientists conceive of model uncertainty: as the model fails to make accurate predictions, one is uncertain about its ability to represent the data-generating process---even if one were to collect further observations.
Conversely, if the model should be well-specified  so that $P_0 \in \mathcal{M}_{\Theta}$, \ac{pro} uncertainty behaves similarly to Bayes and Gibbs posteriors: as accurate predictions are possible using a single model parameterisation, \ac{pro} posteriors concentrate around the corresponding parameter value as more data is observed.

\subsection{Setting for Numerical Examples}
\label{sec:golf}

To round off the above discussion with numerical illustrations, we  contrast \ac{pro} posteriors with several other related approaches.
We do so using the golf putting data of \citet{gelman_probability_2002}, which compares the probability of a successful put $(x \in \{0, 1\})$ by professional golfers to the distance ($z \in \mathbb{R}$) from which the shot was made for $n = 5,988$ shots.
Throughout, we fit this data using the logistic regression model 
\begin{IEEEeqnarray}{rCl}
P^{\mathsf{LRM}}_{\theta}(x \mid z) & = &
\sigma(\theta^\top  z)^{x}\left\{1-\sigma(\theta^\top z)\right\}^{1-x} \text{ for } \;
\sigma(\theta^\top z) = 
\left\{1 +  \exp(\theta_1 + \theta_2z ) \right\}^{-1},
\label{eq:logistic-regression}
\end{IEEEeqnarray}
and chart uncertainty estimates across several methods that derive from this set-up.

\begin{figure}[ht!]
\centering
\includegraphics[width=\linewidth]{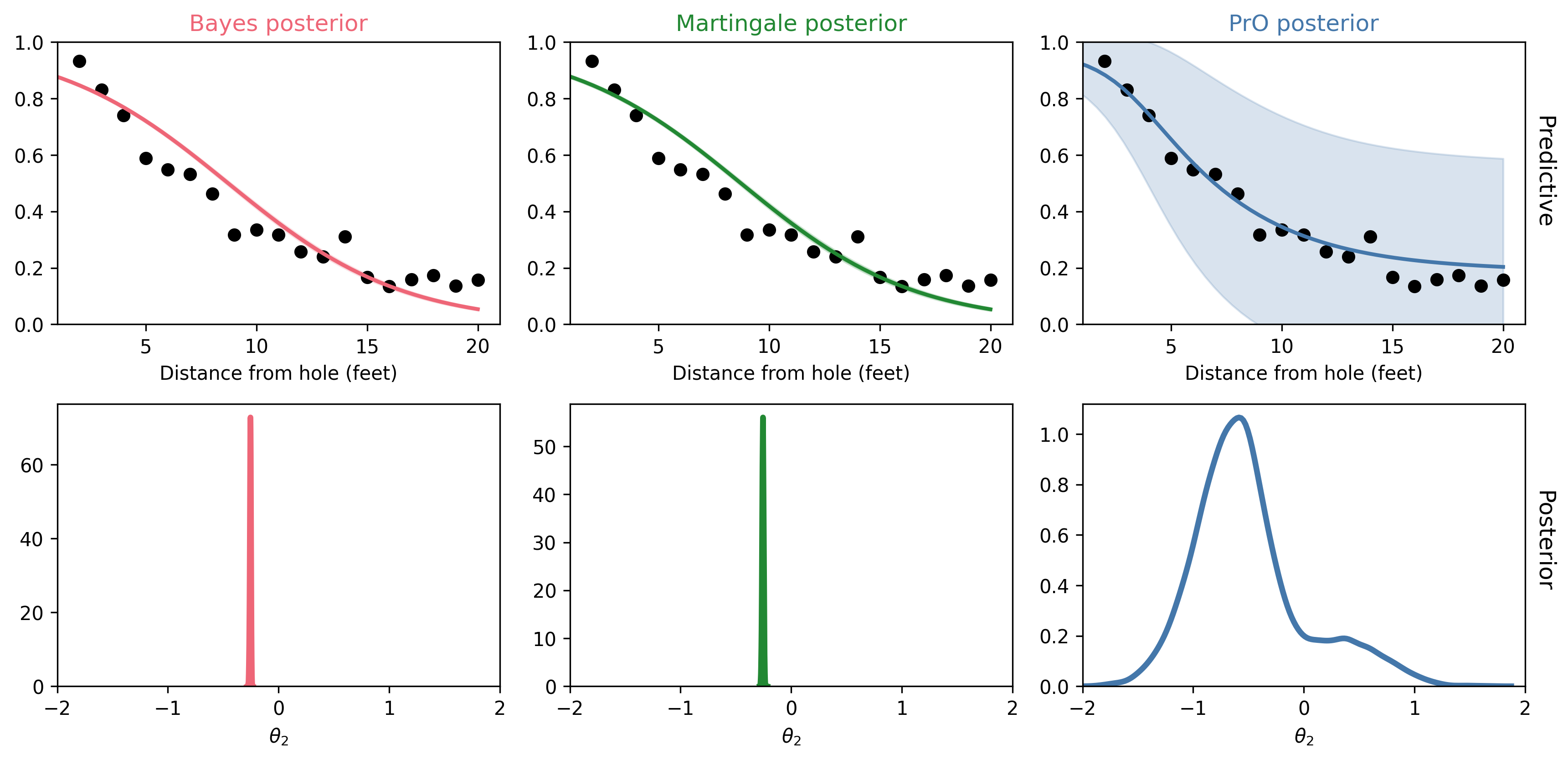}
\caption{
\textit{Golf putting data}\/. Points represent the proportion of successful putts made by professional golfers as a function of the distance from the hole.
The first row shows the predictive mean and one standard deviation intervals of the \textcolor{gibbscolour}{\textbf{Bayes}}, \textcolor{mgpcolour}{\textbf{Martingale}}, and \textcolor{procolour}{\textbf{\ac{pro}}} posteriors. 
The second row shows the marginals of the corresponding parameter posteriors.
}
\label{fig:golf}
\end{figure}

\subsection{Bayes Posterior and Martingale Posterior Uncertainty}
First, we compare \ac{pro} posterior uncertainty with classical Bayes posteriors and a Martingale posterior \citep{fong2023martingale} based on the Bayesian bootstrap and the negative log likelihood loss \citep[see][]{lyddon2019general, rubin1981bayesian}, and display the results in Figure \ref{fig:golf}.
As we can see, these Bayes  and Martingale posteriors  concentrate around a single parameter value. 
Their predictive distributions mirror this, and are qualitatively similar to what one would obtain by using the maximum likelihood estimator to construct a plug-in predictive distribution \citep[see][]{mclatchie_predictive_2025}.
This  sharply contrasts with  the \ac{pro} posterior: its predictive distribution indicates that while the success of short puts can be estimated with some fidelity, the success of longer puts is associated with substantive levels of uncertainty.
One possible explanation of this phenomenon is that the logistic regression model is too coarse, and that there are in fact two sub-groups of professional golfers. 
This hypothesis seems to be backed up by the \ac{pro} posterior on $\theta_2$, a parameter modelling the effect  of distance on the probability of a successful put: unlike the Bayes or  Martingale posterior, the marginal \ac{pro} posterior of $\theta_2$ has two modes which could account for these two sub-groups.
To investigate this more carefully, we also compute the posterior marginal effect of distance on the probability of a successful put, which for a posterior $Q$ is given by  
\begin{IEEEeqnarray}{rCl}
    \int \frac{\partial P^\mathsf{LRM}_\theta(x \mid z)}{\partial z}\dt Q(\theta) & = &\int \theta_2 \cdot 
    P^\mathsf{LRM}_\theta(x=1 \mid z)
    P^\mathsf{LRM}_\theta(x=0 \mid z)
    \dt Q(\theta),
    \nonumber
\end{IEEEeqnarray}
and we plot this effect as a function of the distance $z$ in Figure \ref{fig:golf-marginal-effects}.
Inferences across the Bayes, Martingale, and \ac{pro} posteriors  agrees on the qualitative nature of the marginal effect of distance:
while the marginal effect of distance is negative and decreasing until around 5--7 feet, it increases thereafter and monotonically in the range of 7--20 feet, and rises close to a zero marginal effect.

\begin{figure}[t!]
\centering
\includegraphics[width=\linewidth]{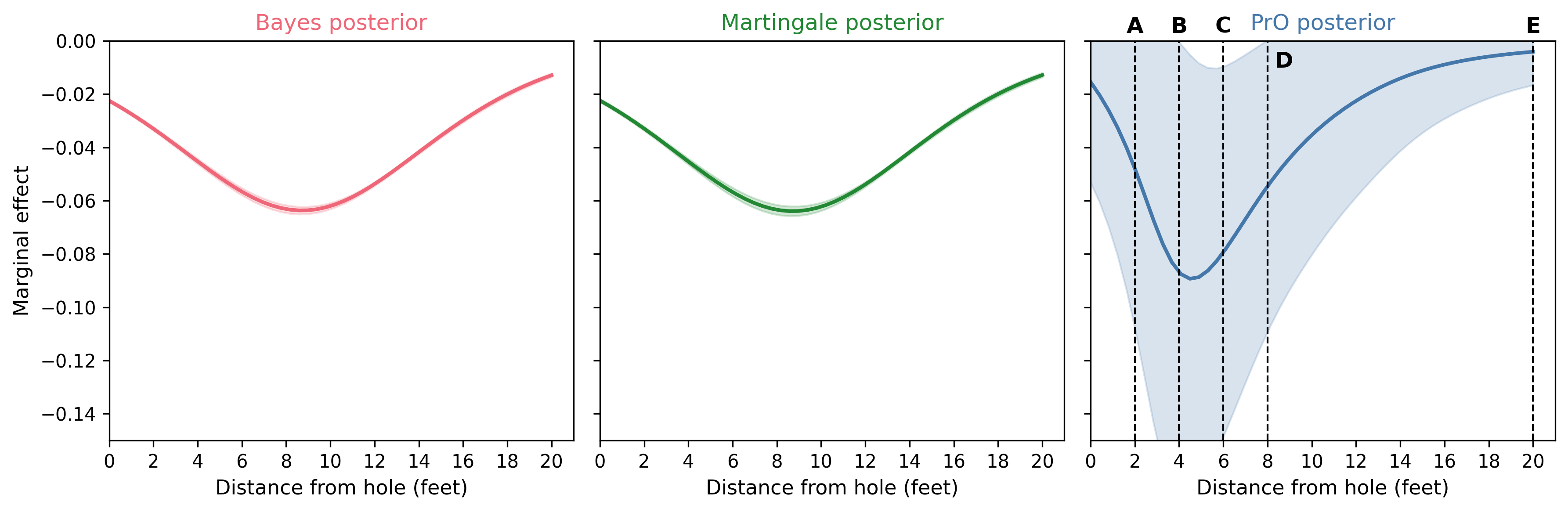}
\includegraphics[width=\linewidth]{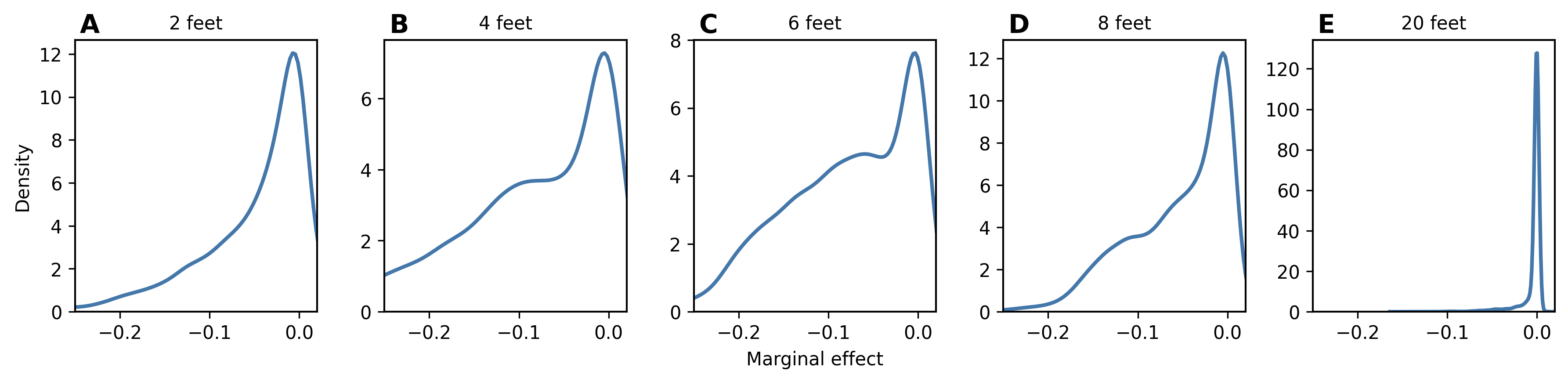}
\vspace*{-3em}
\caption{%
\textit{Top row}\/:
the mean (solid line) and one standard error intervals (shaded area) of the posterior marginal effects of the \textcolor{gibbscolour}{\textbf{Bayes}}, \textcolor{mgpcolour}{\textbf{Martingale}}, and \textcolor{procolour}{\textbf{\ac{pro}}} posteriors as a continuous function of distance in the golf putting data. 
\textit{Bottom row}\/:
we investigate the posterior marginal effect under \textcolor{procolour}{\textbf{\ac{pro}}} posterior at $2,4,6,8$, and $20$ feet.}
\label{fig:golf-marginal-effects}
\end{figure}

While all posteriors agree on the shape of this curve, the accompanying uncertainty differs substantively.
Unlike for the Bayes and Martingale posteriors, the marginal effects  obtained from the \ac{pro} posterior at each distance are rich multimodal distributions, and allow a more fine-grained model criticism.
More precisely, they show that while posterior uncertainty of the marginal effect is unimodal for very short and very long distances, it bifurcates into a bimodal distribution for  middle distances, which once again reaffirms our previous observations:  there seem to be two relevant sub-groups in the data, and middle distances seem the most informative at discriminating between them.

We note that this ease at which \ac{pro} posteriors can be used for model criticism
has led previous authors to speculate that one may be able to use related methods as a diagnostic tool for model misspecification \citep[see][]{lai2024predictive,chazal_computable_2025}.
While this has previously been argued on the basis of heuristics, our findings in Section \ref{sec:theory-short} provide the  statistical tools required for making these ideas more rigorous.

\subsection{Bayesian Mixture Models}
The above example raises an immediate question: given that the \ac{pro} posteriors seems to indicate the presence of a group structure in the model, should we think of \ac{pro} posteriors as mixture models?
In fact, since the \ac{pro} posterior objective \eqref{eq:pro-posterior-def} directly targets the marginalised predictive $\int P_{\theta} \mathsf{d}Q(\theta)$---an object which we might call a mixture---how are the aims of \ac{pro} posterior inference different from those of mixture modelling?

To answer this, consider the Bayesian mixture model given by
\begin{IEEEeqnarray}{rCl}
    P_{\theta^{(1:M)}, \omega^{(1:M)}}(x \mid z)
    & = &
    \sum_{m=1}^M
    \omega^{(m)} 
    \cdot 
    P^{\mathsf{LRM}}_{\theta^{(m)}}(x \mid z),
    \nonumber
\end{IEEEeqnarray}
where the vector $\omega^{(1:M)}$ is non-negative and sums to one. 
One of the key practical challenges in Bayesian mixture modelling consists in choosing priors on $M$ and the mixture weights $\omega^{(1:M)}$.
This is a key difference: \ac{pro} posteriors make no attempt at \textit{modelling}  mixtures over $P_{\theta}^{\textsf{LRM}}$, and thus require no such priors.
Instead, their priors are formulated with respect to the typically more interpretable parameters of the base model $P_{\theta}^{\textsf{LRM}}$.
Thus, while the \ac{pro} posterior $Q_n$ is a predictively oriented mixture over $P_{\theta}^{\textsf{LRM}}$, it is not a mixture \textit{model}, since the mixing measure itself requires neither a statistical model nor a prior.

In contrast, mixture models  explicitly \textit{model} the mixture over $P_{\theta}^{\textsf{LRM}}$.
This changes the parameter of interest: 
relative to the simple logistic model in \eqref{eq:logistic-regression}, a mixture model with $M$ components inflates the number of parameters by a factor of $M$.
Though the resulting mixture $P_{\theta^{(1:M)},\, \omega^{(1:M)}}$ is more flexible than $P_{\theta}^{\textsf{LRM}}$, it is also less parsimonious, and more difficult to interpret. 
In fact, mixture models are usually used as density estimators, with parameters in each component being synthetic artifacts without inherent meaning, so that 
 $P_{\theta^{(1:M)},\, \omega^{(1:M)}} = P_{\theta^{\rho(1:M)},\, \omega^{\rho(1:M)}} $ for any permutation $\rho$.

The above reveals several important practical differences between \ac{pro} posteriors and Bayesian mixture models: 
the latter abide by  standard Bayesian inference, which comes at the price of having to model the mixture and changing the parameter of interest.
In contrast, \ac{pro} posteriors  continue to draw inferences with respect to a simpler and more interpretable parameter.
This allows for conclusions,  summaries, and model criticisms that lack clear counterparts in mixture models.
For instance, since
\begin{IEEEeqnarray}{rCl}
    \log\left\{ 
        \frac{P_{\theta}^{\mathsf{LRM}}(x=1\mid z)}{
        P_{\theta}^{\mathsf{LRM}}(x=0\mid z)
        }
    \right\}
    & = & \theta_1 + \theta_2 z,
    \nonumber
\end{IEEEeqnarray}
the parameter $\theta_2$ is directly interpretable as the linear effect of distance on the log-odds of a successful put.
The \ac{pro} posterior uncertainty which indicates two sub-groups of golfers will therefore map directly onto the log-odds of the inferred model.
We illustrate this phenomenon in Figure \ref{fig:golf-log-odds}, which shows that for larger put distances, the skill differences between these two groups become more pronounced (as measured in log odds).
Mixture models would not admit this visualisation: they make it impossible to express $(\theta^{(1:M)}, \omega^{(1:M)})$ in terms of their linear contributions to the log-odds of the underlying model $P_{\theta^{(1:M)},\, \omega^{(1:M)}}$.

\begin{figure}[t!]
\centering
\includegraphics[width=\linewidth]{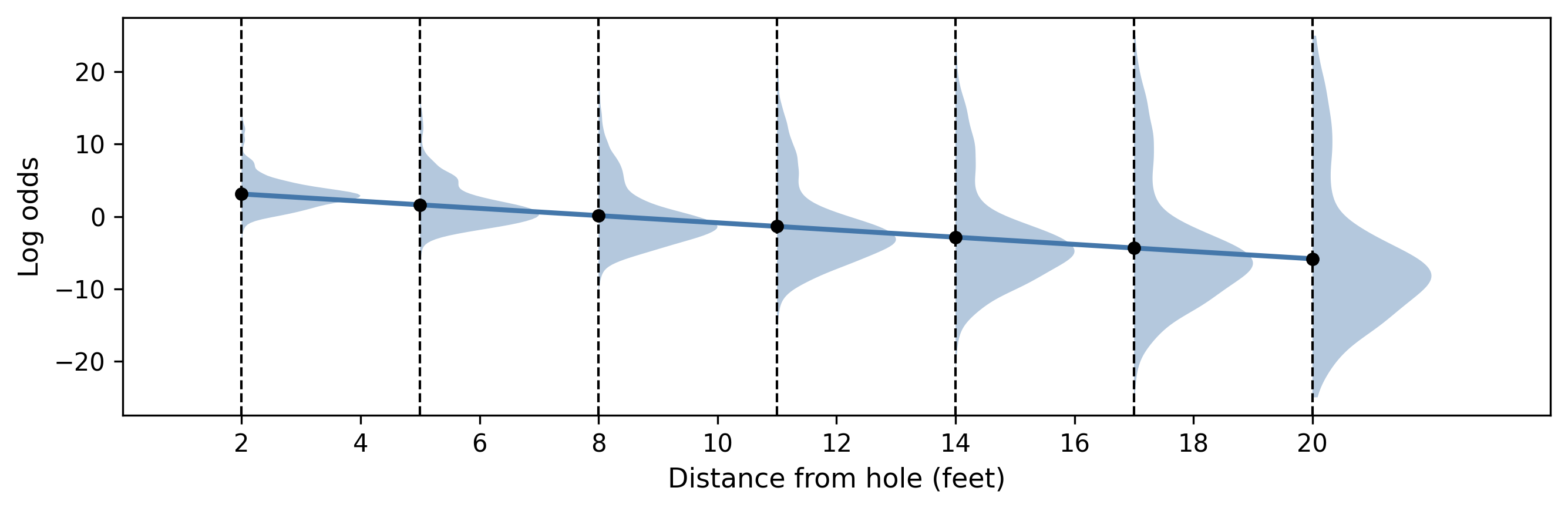}
\vspace*{-3em}
\caption{%
The distribution of \textcolor{procolour}{\textbf{\ac{pro}}} posterior log-odds for the golf putting data.
The black dots represent the posterior mean at those distances, and the solid line represents the posterior mean as a continuous function of distance.}
\label{fig:golf-log-odds}
\end{figure}

\section{Illustrations}
\label{sec:experiments}
We supplement the discussion of the previous section by illustrating the promise of \ac{pro} posteriors on three statistical problems.  
To set the scene, Section \ref{sec:computation} introduces the Mean Field Langevin algorithm used for computation.
In Section \ref{sec:binary-classification}, we highlight the implications of convex recovery in a binary classification problem.
Moving past independently sampled data, we illustrate that \ac{pro} posteriors outperform Gibbs posteriors even when the theory's assumptions are violated in Section~\ref{sec:gmrf}.
There, we investigate the impact of using different scoring rules in a graphical model for spatially dependent house prices in Boston.
Finally, in Section~\ref{sec:redshift}, we provide a practical demonstration of the theory developed in Section~\ref{sec:calibration} through a redshift prediction task from astronomy.
\subsection{Computation via Wasserstein Gradient Flows}
\label{sec:computation}

The tools for the computation of \ac{pro} posteriors are limited.
In fact, most previous work targeting related objects did not even consider the possibility of exact computation, and instead focused on approximating $Q_n$ with a parameterised variational family \citep[see][]{jankowiak2020deep, jankowiak2020parametric,masegosa2020learning,morningstar2022pacm, lai2024predictive}.
The only pre-existing work directly targeting exact computation is  \citet{shenprediction2025}, which built on the  ideas in \citet{wild2023rigorous} and the mean-field Langevin algorithm of \citet{del2013mean} to derive an asymptotically exact sampler for the special case where the scoring rule is the squared MMD.
Here, we explain how this can be generalised to any \ac{pro} posterior.

We may rewrite the objective minimised in \eqref{eq:pro-posterior-def} as
\begin{IEEEeqnarray}{rCl}
    \mathscr{L}(Q)
    & = &
    \frac{\lambda_n}{n} \sum_{i=1}^n  S\left(P_Q, x_i\right) 
    -
    \int \log \dt \Pi(\theta) \,\dt Q(\theta)
    +
    \int \log \dt Q(\theta)\, \dt Q(\theta).
    \label{eq:functional-for-WGF}
\end{IEEEeqnarray}
For functionals like this, the Wasserstein gradient flow (WGF) can be used to construct a gradient descent scheme via interacting particle systems derived from a McKean-Vlasov process.
If the Wasserstein gradient of $\frac{1}{n} \sum_{i=1}^n  S\left(P_Q, x_i\right)$  is given by $\mathcal{W}(Q)$, the Wasserstein gradient of $\mathscr{L}(Q)$ with respect to $Q$ evaluated at $\vartheta \in \Theta$ is
\begin{IEEEeqnarray}{rCl}
    \nabla_{\operatorname{W}} \mathscr{L}(Q)[\vartheta]
    & = &
    \lambda_n
       \mathcal{W}(Q)[\vartheta]
    -
    \nabla_\vartheta \log \dt \Pi(\vartheta)
    +
    \nabla_\vartheta \log \dt Q(\vartheta).
    \nonumber
\end{IEEEeqnarray}
The WGF $\{Q_t\}_{t>0}$ defined by this gradient is  the continuous-time process which evolves in the direction of steepest descent relative to the convex functional $Q \mapsto \mathscr{L}(Q)$ given by $\nabla_{\operatorname{W}}\mathscr{L}(Q)$.
As $\mathscr{L}(Q)$ is convex, and as the minimiser of $\mathscr{L}(Q)$ is the \ac{pro} posterior, the WGF acts as a continuously-valued version of gradient descent on probability measures, so that $Q_t$ converges to the \ac{pro} posterior as $t\to\infty$ under mild conditions.

To implement this gradient descent algorithm numerically, we resort to the WGF's mean-field representation, which gives rise to the stochastic differential equation (SDE) 
\begin{IEEEeqnarray}{rCl}
    \dt \vartheta_t & = &
    - \left\{
    \lambda_n \mathcal{W}(Q_t)[\vartheta_t]
        - \nabla_\vartheta \log \dt \Pi(\vartheta_t)
    \right\}
    \dt t
    +
    \sqrt{2}
    \dt B_t, \;\; \text{ for } \vartheta_t \sim Q_t,
    \nonumber
\end{IEEEeqnarray}
and where the Brownian motion $B_t$  appears as the WGF of entropy corresponds to the heat equation \citep{jordan1998variational}.
Unfortunately, the above SDE still depends on the unknown measure $Q_t$, which makes exact forward simulation impossible.
To remedy this, we use the approximations  $Q_t^{(j)}(\theta) \approx \frac{1}{p}\sum_{\ell=1, \ell\neq j}^p\updelta_{\vartheta_t^{(\ell)}}$, where $\updelta_x$ denotes the Dirac measure at $x$, and $\vartheta_t^{(j)}$ for $j=1,2, \dots p$ are a set of $p$ interacting particles. 
This finally results in a system of $p$ interacting particles whose evolution is governed by the $p$ equations
\begin{IEEEeqnarray}{rCl}
    \dt \vartheta_t^{(j)}
    & = &
    - \left\{
    \mathcal{W}(Q_t^{(j)})[\vartheta_t^{(j)}]
    - \nabla_\vartheta \log \dt \Pi(\vartheta_t^{(j)})\right\}\dt t
    +
    \sqrt{2}\dt B_t^{(j)}, \quad j=1,2\dots p,
    \nonumber
\end{IEEEeqnarray}
where $\{B_t^{(j)}:j=1,2,\dots p\}$ denotes $p$ independent Brownian motions.
To obtain a high-fidelity approximation of the \ac{pro} posterior from this particle system, one discretises it into time steps $t_1, t_2, \dots $, and evolves it for a sufficiently long time.
Afterwards, one averages over all time steps past an initial burn-in  $\tau$ and approximates the \ac{pro} posterior as
\begin{IEEEeqnarray}{rCl}
    Q_n \approx \frac{1}{|\{t_i: t_i > \tau\}|}
    \sum_{t_i > \tau}
    \wh{Q}[t_i] \quad \text{ for }\quad
    \wh{Q}[t_i]
    =
    \frac{1}{p}
    \sum_{j=1}^p
    \updelta_{\vartheta_{t_i}^{(j)}}.
    \nonumber
\end{IEEEeqnarray}
Here, the notation highlights that this approximation should be thought of as an average of $|\{t_i: t_i > \tau\}|$  measures $\wh{Q}[t]$, each of which would recover $Q_n$ exactly if there were no discretisation error and both $p\to\infty$ and $t\to\infty$
\citep[see e.g. Lemma 6 in Appendix E of][]{wild2023rigorous}.
We showcase this algorithm for the squared MMD and the logarithmic scoring rule in Appendix \ref{appendix:WGF-for-log-score-and-mmd}, and elaborate on practical considerations in Appendix \ref{appendix:WGF-practicalities-and-implementation}. 
Code to replicate the results is freely available at \url{https://github.com/yannmclatchie/pyprop}.

\subsection{Binary classification}\label{sec:binary-classification}
To strengthen intuitions about the predictive benefits conferred by \ac{pro} posteriors, we construct a  synthetic data set in which a generalised linear model with logistic link function allows for convex recovery as set out in Definition \ref{def:misspecification}. 
Labels $y_i$ are assigned deterministically as a function of the covariate vector $(x_i^{(0)}, x_i^{(1)})$ which is sampled uniformly on $[-2,2]^2$. 
When the covariate vector falls into the top-left quadrant of $[-2, 2]^2$ (so that $x_i^{(0)} < 0, x_i^{(1)} > 0$), $y_i=0$.
In contrast, the bottom-left quadrant (for which $x_i^{(0)} > 0, x_i^{(i)} < 0$) yields $y_i=1$, while labels in the remaining two quadrants are sampled as $y_i \sim \operatorname
{Ber}(0.5)$. 

We generate $n=1,000$ observations from this model and compute both the corresponding Bayes posterior and the \ac{pro} posterior based on the logarithmic scoring rule.
We compare their posterior predictive distributions in Figure~\ref{fig:binary-classification}, and conclude that Bayesian inference produces a subpar decision boundary which leads to an overconfident misinterpretation about the data-generating process in the top-right and bottom-left quadrants.
In contrast, the \ac{pro} posterior identifies the correct decision boundary, and recovers a predictive distribution $P_{Q_n}$ for which $\KL(P_0\|P_{Q_n})$ is close to zero.
This mirrors the predictions of Theorems \ref{thm:DI-master}, and is a consequence of the fact that the \ac{pro} posterior predictive allows for convex recovery of the exact decision boundary. 
Figure \ref{fig:binary-classification-asymptotic} further illustrates this by probing the effect of $n$, and confirms that the theory derived is practically relevant even for relatively small data set sizes.

\begin{figure}[t!]
    \centering
    \includegraphics[width=0.666\linewidth]{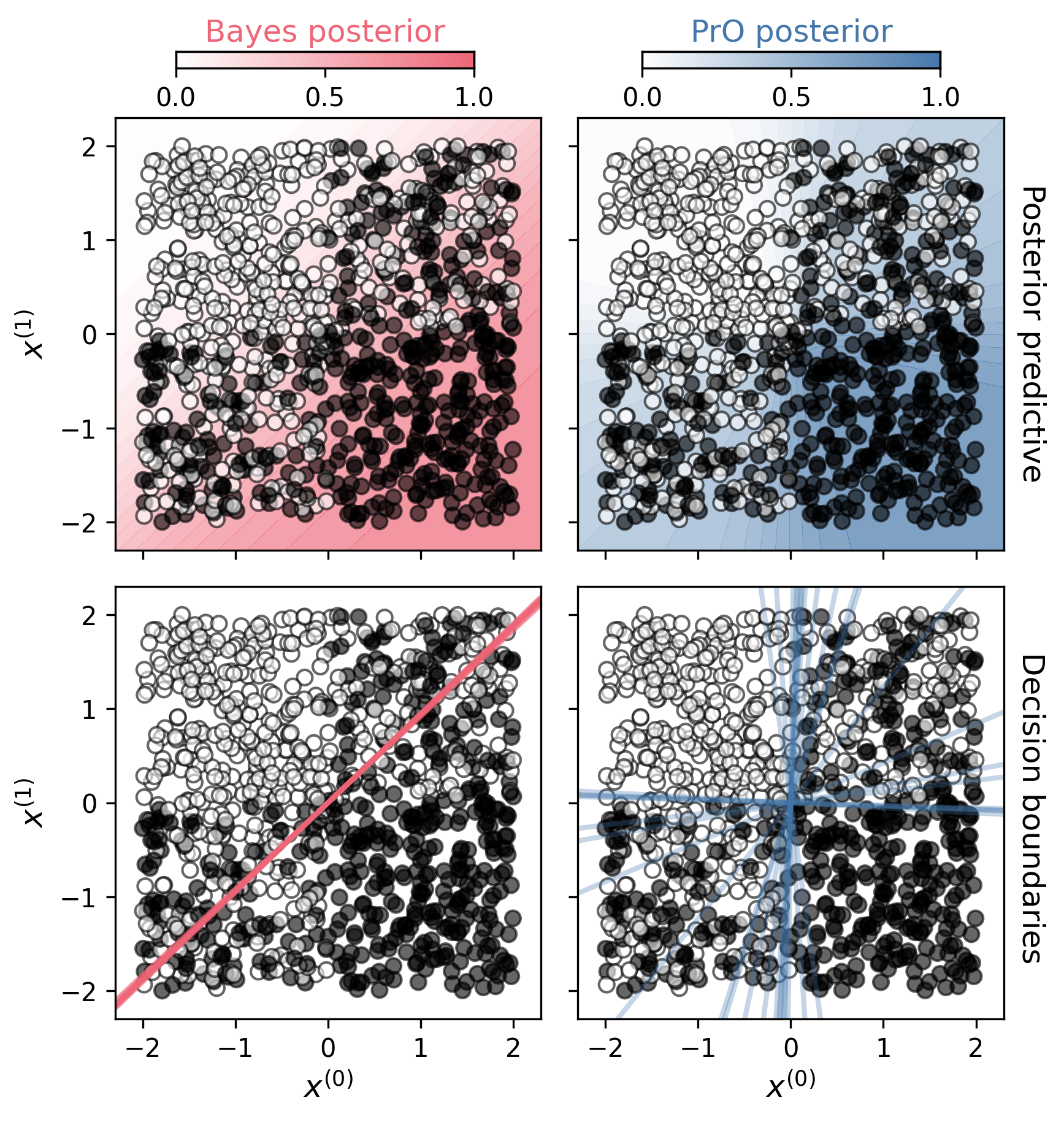}
    \caption{ 
    \textit{Top row:} \textcolor{gibbscolour}{\textbf{Bayes}} and \textcolor{procolour}{\textbf{\ac{pro}}} posterior predictive distributions overlaid with the raw data.
    \textit{Bottom row:} Decision boundaries induced by individual draws from the posterior predictives. 
    While the Bayes posterior concentrates onto a single decision boundary, the \ac{pro} posterior is  multimodal.
    }
    \label{fig:binary-classification}
\end{figure}
\begin{figure}[h!]
    \centering
    \includegraphics[width=0.66\linewidth]{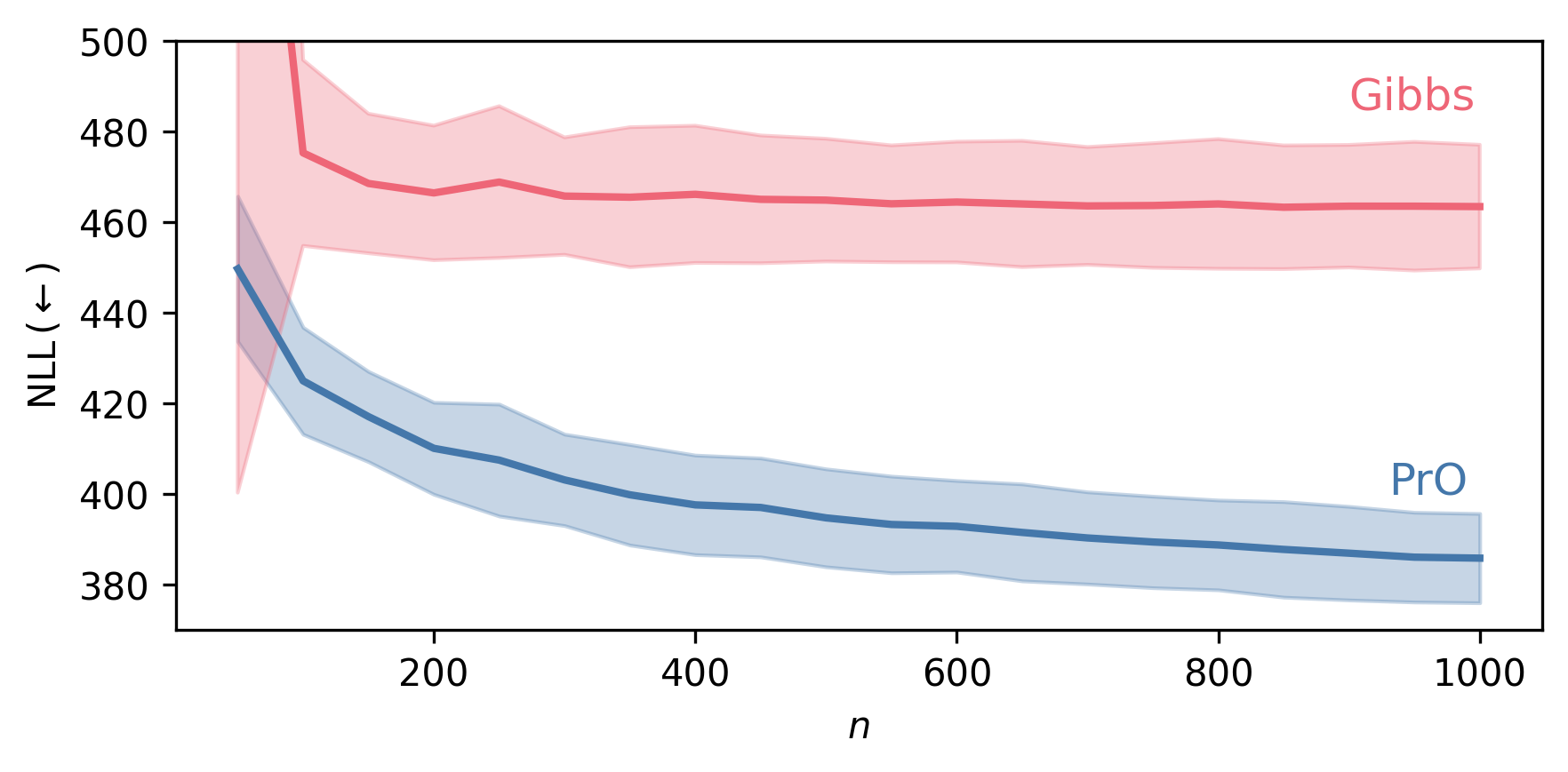}
    \caption{
    Constructing \textcolor{gibbscolour}{\textbf{Bayes}} and \textcolor{procolour}{\textbf{\ac{pro}}} posteriors using $n$ observations, we plot their
    negative log likelihood (NLL) on test data, with standard error intervals obtained from $20$ replications.
    }
    \label{fig:binary-classification-asymptotic}
\end{figure}

\subsection{Spatial autoregressive modelling}
\label{sec:gmrf}
Throughout, our theoretical developments have assumed independently sampled data.
In practice however, prediction is often most interesting for sequentially or spatially dependent data.
While our current analysis does not explicitly account for this, the following example shows that \ac{pro} posteriors continue to function as expected 
in spatial settings.
Here, we model the median house price across $n = 506$ different census tracts in Boston in the 1970s and 1980s.
These data were originally collected by \citet{harrison_hedonic_1978}, and are summarised graphically in the left-most panel of Figure~\ref{fig:gmrf}. 
Here, census tracts $i$ and $j$ are displayed as nodes in an undirected graph with  adjacency matrix $\Omega$ whose entry at $(i,j)$ is given by $\omega_{i,j} = 0$ unless the tracts $i$ and $j$ share at least one vertex, in which case $\omega_{i,j} = 1$.

Median house values are then modelled as the conditional auto-regression
\begin{equation*}
    y_i \mid y_1, \dots y_{i-1}, y_{i+1}, \dots y_n \sim \normal\left(\theta\sum_{j\ne i} \tilde \omega_{i,j}y_j,\sigma^2\right),
\end{equation*}
where $\tilde \omega_{i,j}$ denote the rescaled $\omega_{i,j}$ so that $\sum_{j\ne i} \tilde \omega_{i,j} = 1$ for all $i$ and where $\sigma^2 > 0$ is a fixed constant which we estimate from data (for a more detailed explanation, see   Appendix~\ref{appendix:reg-mmd}).
From this model, we construct two Gibbs and \ac{pro} posteriors for $\theta$ using the squared MMD and log score respectively; 
see Appendix~\ref{app:supp-gmrf} for further details.

\begin{figure}[h!]
    \centering
    \includegraphics[width=\linewidth]{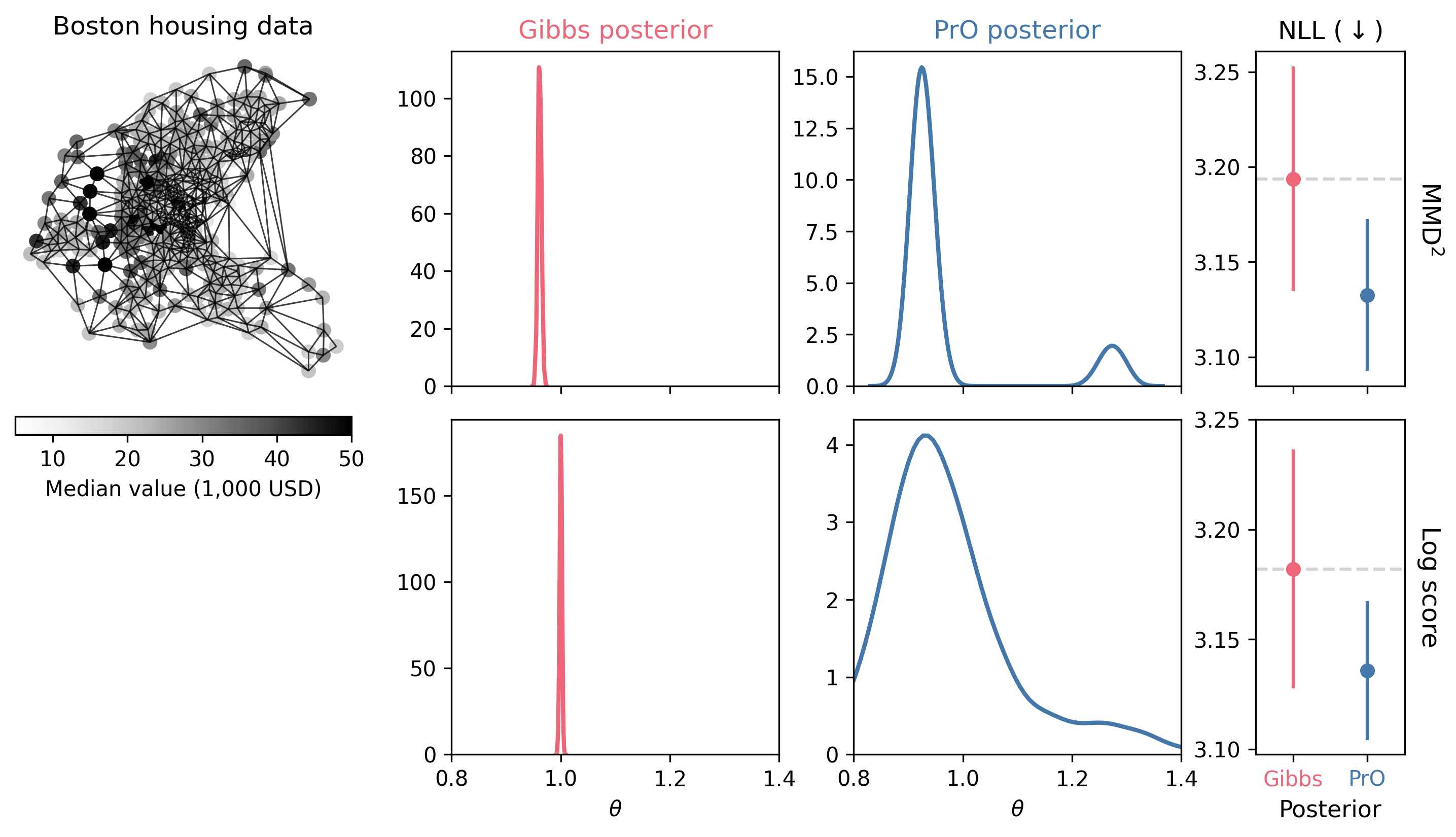}
    \caption{
    \textit{Top left:} illustration of the Boston housing data set.
    Nodes correspond to census tracts, and shade represents price level.
    Census tracts are connected by edges if they are spatially adjacent.
    \textit{Middle columns:}
    {\bfseries\color{gibbscolour} Gibbs} and {\bfseries\color{procolour} \ac{pro}} parameter posteriors under the MMD score (top row) and the log score (bottom row).
    \textit{Right column:} mean and standard error of the leave-one-node-out negative log likelihood (lower is better).}
    \label{fig:gmrf}
\end{figure}

Importantly, the underlying data distribution is bimodal, and has its primary mode at around $\$20,000$ and a second smaller mode at $\$50,000$ (see Appendix~\ref{app:supp-gmrf}), with the second mode is due to censoring values exceeding $\$50,000$ \citep{gilley_harrison_1996}.
In other words, the data set has two sub-groups: 
tracts with median prices close to $\$20,000$, and tracts with median prices are more than twice that value.
Our results in Figure~\ref{fig:gmrf} show the effect on inference:
while the Gibbs posterior fails to capture this hidden structure in the data, the \ac{pro} posterior transfers it to the parameter space. 
Its bi-modality reflects the fact that while there is a small group of census tracts with very high house prices surrounded by cheaper areas, a much larger group will be surrounded by similarly priced tracts.
While the former group is  accounted for through a smaller mode around $\theta \in [1.2, 1.4]$, the latter is captured by the larger mode around  $\theta \in [0.8, 1.0]$.
As the rightmost panel of Figure~\ref{fig:gmrf} confirms, this is not only a richer description of parameter uncertainty, but also leads to better predictive performance.
Confirming our earlier observation that the choice of scoring rule seems to matter less for \ac{pro} posteriors, these results and interpretations are qualitatively the same whether one uses the squared MMD or the logarithmic scoring rule.
\subsection{Photometric redshift prediction}
\label{sec:redshift}
We conclude our numerical demonstrations with a large-sample regression task.
Using approximately $78{,}000$ galaxies and quasars from the seventeenth Sloan Digital Sky Survey (SDSS) published in \citet{abdurrouf_seventeenth_2022}, we predict each source's redshift.
We do so by considering five variables: 
the differences in brightness between wavelengths in ultraviolet  and green, 
green and red,
red and near-infrared, 
and two near-infrared bands,
as well as the magnitude of the object's red band.
Additionally, we use an intercept term and obtain all polynomial and interaction terms up to order two, which yields a regression model with $21$ parameters---similarly to what was previously done in  \citet{connolly1995slicing} and \citet{brunner1997toward}.
On this model, we then compute the \ac{pro} posterior with the logarithmic scoring rule and contrast its inferences with those obtained from a standard Bayes posterior in Figure~\ref{fig:sdss}.
Further details can be found in Appendix~\ref{app:supp-sdss}.
\begin{figure}[t!]
    \centering
    \includegraphics[width=\linewidth]{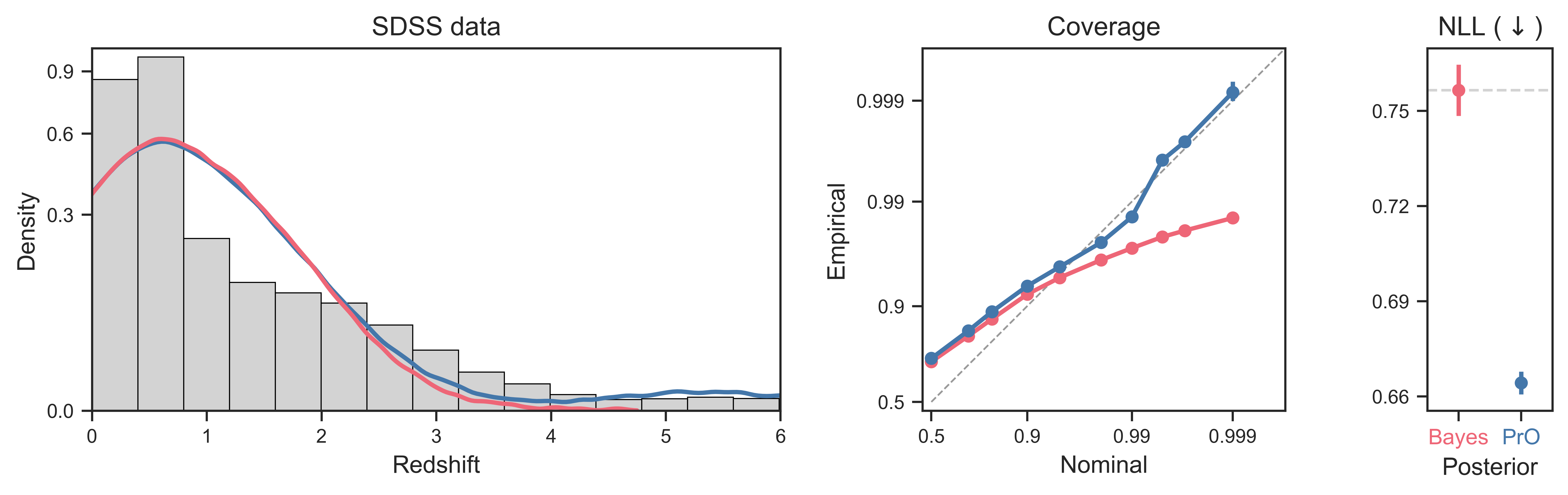}
    \caption{
    \textit{Left:} 
     {\bfseries\color{gibbscolour} Bayes} and {\bfseries\color{procolour} \ac{pro}} posterior predictive densities fit on a training set (in square root scale) and averaged over the test set of a representative cross-validation fold, overlaid with all of the observed data in the grey histogram.
    \textit{Middle:}
    nominal and empirical coverage of the predictive intervals in logit scale; 
    mean and standard error shown across $10$ cross-validation folds.
    \textit{Right:} 
    mean and standard error of the $10$-fold cross-validated negative log likelihood (lower is better).}
    \label{fig:sdss}
\end{figure}

As the \textit{Left} panel of Figure~\ref{fig:sdss} shows, the redshift distribution is strongly right-skewed: 
the majority of observations have low redshift, with a long and heavy tail of high-redshift sources.
This is challenging for the Bayes posterior, as the tails are wider than the  model can account for.
In contrast, the \ac{pro} posterior can average over multiple models to produce predictive distributions which are well-adapted to this feature.
As a result, its coverage is far superior to that of the Bayes posterior (see \textit{Middle} panel of Figure~\ref{fig:sdss}), which is in line with the theoretical results obtained in Section \ref{sec:calibration}.
Unsurprisingly given the skewed nature of the redshift distribution, this feature is particularly notable in the tails: the Bayes posterior's predictive intervals become systematically over-confident for nominal coverage levels $>0.9$.
A direct result is that the \ac{pro} posterior attains a better overall predictive performance than the Bayes posterior as measured with cross-validated negative log predictive density (see \textit{Right} panel of Figure~\ref{fig:sdss}), where the improvement is most likely due to the tail events which caused the Bayes posterior's miscalibration.

\section{Discussion}
\label{sec:discussion}
In this paper, we have advocated for a new statistical paradigm that characterizes epistemic uncertainty as a function of a model's predictive capability.%
This led to the \ac{pro} posterior, a method for uncertainty quantification that predictively dominates competing  Bayesian and generalised Bayesian methods.
We proved that \ac{pro} posteriors adapt to model misspecification and avoid posterior concentration around a single  model parameterisation whenever this improves predictive performance.

While we have presented   evidence which advocates the adoption of \ac{pro} posteriors for inference and uncertainty quantification, much remains to be done.
Perhaps the most pressing concern is  computation: sampling schemes obtained from interacting particle systems require further work before they can become an attractive off-the-shelf algorithmic framework.
Na\"ive implementations are infeasible for high-dimensional problems, generally computationally demanding, and require careful tuning of several hyperparameters.
However, since research on improving such samplers has recently intensified \citep[see][]{suzuki2023convergence, fu2023mean, liu2023polyak, chazal_computable_2025}, we are hopeful that more purpose-built algorithms can overcome these limitations.

Beyond computation, there are several open theoretical questions.
The most obvious of these concerns the choice of scoring rule for \ac{pro} posteriors.
While this choice is relatively unimportant in the case of convex recovery, we expect there to be important qualitative differences for certain forms of non-trivial misspecification.
For example, with Huber-contaminated  data-generating processes of the form $P_0 = (1-\varepsilon)P_{Q^*} + \varepsilon \cdot c$ for some $Q^* \in \mathcal{P}(\Theta)$ and some contamination $c \in \mathcal{P}(\mathcal{X})$, existing research on parametric inference \citep[see e.g.][]{cherief2020mmd,cherief2022finite,dellaporta2022robust,alquier2023estimation,alquier_universal_2024} would indicate that \ac{pro} posteriors based on outlier-robust scoring rules like the squared maximum mean discrepancy should recover  posteriors very similar to $Q^*$, while logarithmic scoring rules would behave completely differently. 
A second  question concerns the degree to which data can be allowed to be dependent: 
while our theory was derived under the assumption of independent data, our experiments suggest that the predictive benefits of \ac{pro} posteriors extend beyond this setting.

In the bigger picture, \ac{pro} posteriors have interesting connections to a number of other approaches grounded in frequentist statistics and particularly NPMLE, and studying their intersection may result in mutually beneficial insights.
Similarly, there are parallels with the efforts to develop optimal aggregation procedures for predictors.
For example,  \citet{nemirovski2000topics} introduces \textit{convex aggregation} as the statistical task of finding the optimal convex combination of predictors.
Intriguingly, for model classes like bounded regression with a finite number of candidates, the minimax-optimal convex aggregation procedure with respect to the squared risk can be shown to coincide precisely with  \eqref{eq:pro-posterior-def} when the KL-regularisation term is removed \citep[see, e.g.,][]{lecue2013empirical,yang_minimax_2014}.
This also suggests that \ac{pro} posteriors may be close to minimax-optimal convex aggregation procedures for some problems.

\subsection*{Acknowledgements}

Science is a community effort; and this manuscript in particular has benefited tremendously from interactions with many brilliant colleagues.
Most notably, we feel deep gratitude towards Prof. Chris Oates for many conversations and insightful remarks.
Further, we would like to thank Dr. Louis Sharrock and Dr. Sam Power for their help with the computational aspects of this paper.
We are also grateful for correspondence with Prof. Pierre Alquier that pointed to the connection with PAC-Bayesian majority voting.
Additionally, we owe a tremendous debt of gratitude to several colleagues who took time to read an earlier version of this manuscript and gave incredibly insightful comments---often pages and pages of them!---that significantly improved the paper; most notably Prof. David Dunson, Prof. Jeff Miller, Prof. Aki Vehtari, Prof. Pierre Jacob, Dr. Andres Masegosa, Dr. Edwin Fong, Dr. Jack Jewson, and Dr. Takuo Matsubara.
Lastly, we thank an anonymous panel of reviewers at JRSS-B's discussion track for several helpful pointers.
DTF acknowledges funding from the Australian Research Council (DE200101070, DP200101414), YM from the Heilbronn Institute for Mathematical Research (EP/V521917/1), and JK was supported through the UK's Engineering and Physical Sciences Research Council (EPSRC) via EP/W005859/1.
The authors acknowledge the use of the UCL Myriad High Performance Computing Facility (Myriad@UCL), and associated support services, in the completion of this work.

{
\small
\spacingset{1.0} %
\bibliographystyle{chicago}
\bibliography{Bayes_comp,bib2}

\begin{thebibliography}{}

\bibitem[\protect\citeauthoryear{{Abdurro’uf et al.}}{{Abdurro’uf et al.}}{2022}]{abdurrouf_seventeenth_2022}
{Abdurro’uf et al.} (2022).
\newblock The seventeenth data release of the sloan digital sky surveys: Complete release of manga, mastar, and apogee-2 data.
\newblock {\em The Astrophysical Journal Supplement Series\/}~{\em 259\/}(2), 35.

\bibitem[\protect\citeauthoryear{Aitchison}{Aitchison}{1975}]{aitchison1975goodness}
Aitchison, J. (1975).
\newblock Goodness of prediction fit.
\newblock {\em Biometrika\/}~{\em 62\/}(3), 547--554.

\bibitem[\protect\citeauthoryear{Aliberti, Aoyama, Balzani, Bashir, Benton, Bijnens, Biloshytskyi, Blum, Boito, Bruno, et~al.}{Aliberti et~al.}{2025}]{aliberti2025anomalous}
Aliberti, R., T.~Aoyama, E.~Balzani, A.~Bashir, G.~Benton, J.~Bijnens, V.~Biloshytskyi, T.~Blum, D.~Boito, M.~Bruno, et~al. (2025).
\newblock The anomalous magnetic moment of the muon in the standard model: an update.
\newblock {\em Physics Reports\/}~{\em 1143}, 1--158.

\bibitem[\protect\citeauthoryear{Alquier}{Alquier}{2024}]{alquier2024user}
Alquier, P. (2024).
\newblock User-friendly introduction to {PAC}-{B}ayes bounds.
\newblock {\em Foundations and Trends{\textregistered} in Machine Learning\/}~{\em 17\/}(2), 174--303.

\bibitem[\protect\citeauthoryear{Alquier, Ch{\'e}rief-Abdellatif, Derumigny, and Fermanian}{Alquier et~al.}{2023}]{alquier2023estimation}
Alquier, P., B.-E. Ch{\'e}rief-Abdellatif, A.~Derumigny, and J.-D. Fermanian (2023).
\newblock Estimation of copulas via maximum mean discrepancy.
\newblock {\em Journal of the American Statistical Association\/}~{\em 118\/}(543), 1997--2012.

\bibitem[\protect\citeauthoryear{Alquier and Gerber}{Alquier and Gerber}{2024}]{alquier_universal_2024}
Alquier, P. and M.~Gerber (2024, February).
\newblock Universal robust regression via maximum mean discrepancy.
\newblock {\em Biometrika\/}~{\em 111\/}(1), 71--92.
\newblock Publisher: Oxford University Press (OUP).

\bibitem[\protect\citeauthoryear{Alquier, Ridgway, and Chopin}{Alquier et~al.}{2016}]{alquier2016properties}
Alquier, P., J.~Ridgway, and N.~Chopin (2016).
\newblock On the properties of variational approximations of {G}ibbs posteriors.
\newblock {\em The Journal of Machine Learning Research\/}~{\em 17\/}(1), 8374--8414.

\bibitem[\protect\citeauthoryear{Altamirano, Briol, and Knoblauch}{Altamirano et~al.}{2023a}]{altamirano2023robustGP}
Altamirano, M., F.-X. Briol, and J.~Knoblauch (2023a).
\newblock Robust and conjugate {G}aussian process regression.
\newblock {\em arXiv preprint arXiv:2311.00463\/}.

\bibitem[\protect\citeauthoryear{Altamirano, Briol, and Knoblauch}{Altamirano et~al.}{2023b}]{altamirano2023robustCP}
Altamirano, M., F.-X. Briol, and J.~Knoblauch (2023b).
\newblock Robust and scalable {B}ayesian online changepoint detection.
\newblock In {\em International Conference on Machine Learning}. PMLR.

\bibitem[\protect\citeauthoryear{Banerjee, Balasubramanian, and Ghosal}{Banerjee et~al.}{2025}]{banerjee_improved_2025}
Banerjee, S., K.~Balasubramanian, and P.~Ghosal (2025).
\newblock Improved finite-particle convergence rates for stein variational gradient descent.

\bibitem[\protect\citeauthoryear{Berti, Dreassi, Pratelli, and Rigo}{Berti et~al.}{2021}]{berti21class}
Berti, P., E.~Dreassi, L.~Pratelli, and P.~Rigo (2021).
\newblock A class of models for {{Bayesian}} predictive inference.
\newblock {\em Bernoulli\/}~{\em 27\/}(1), 702--726.

\bibitem[\protect\citeauthoryear{Bissiri, Holmes, and Walker}{Bissiri et~al.}{2016}]{bissiri:etal:2016}
Bissiri, P.~G., C.~C. Holmes, and S.~G. Walker (2016).
\newblock A general framework for updating belief distributions.
\newblock {\em J. Royal Statist. Society Series B\/}~{\em 78\/}(5), 1103--1130.

\bibitem[\protect\citeauthoryear{Brunner, Connolly, Szalay, and Bershady}{Brunner et~al.}{1997}]{brunner1997toward}
Brunner, R.~J., A.~J. Connolly, A.~S. Szalay, and M.~A. Bershady (1997).
\newblock Toward more precise photometric redshifts: Calibration via ccd photometry.
\newblock {\em The Astrophysical Journal\/}~{\em 482\/}(1), L21--L24.

\bibitem[\protect\citeauthoryear{Chazal, Kanagawa, Shen, Korba, and Oates}{Chazal et~al.}{2025}]{chazal_computable_2025}
Chazal, C., H.~Kanagawa, Z.~Shen, A.~Korba, and C.~J. Oates (2025, September).
\newblock A {Computable} {Measure} of {Suboptimality} for {Entropy}-{Regularised} {Variational} {Objectives}.
\newblock arXiv:2509.10393 [stat].

\bibitem[\protect\citeauthoryear{Ch{\'e}rief-Abdellatif and Alquier}{Ch{\'e}rief-Abdellatif and Alquier}{2020}]{cherief2020mmd}
Ch{\'e}rief-Abdellatif, B.-E. and P.~Alquier (2020).
\newblock {MMD}-{B}ayes: Robust {B}ayesian estimation via maximum mean discrepancy.
\newblock In {\em Symposium on Advances in Approximate {B}ayesian Inference}, pp.\  1--21. PMLR.

\bibitem[\protect\citeauthoryear{Ch{\'e}rief-Abdellatif and Alquier}{Ch{\'e}rief-Abdellatif and Alquier}{2022}]{cherief2022finite}
Ch{\'e}rief-Abdellatif, B.-E. and P.~Alquier (2022).
\newblock Finite sample properties of parametric mmd estimation: robustness to misspecification and dependence.
\newblock {\em Bernoulli\/}~{\em 28\/}(1), 181--213.

\bibitem[\protect\citeauthoryear{Connolly, Csabai, Szalay, Koo, Kron, and Munn}{Connolly et~al.}{1995}]{connolly1995slicing}
Connolly, A.~J., I.~Csabai, A.~S. Szalay, D.~C. Koo, R.~G. Kron, and J.~A. Munn (1995).
\newblock Slicing through multicolor space: Galaxy redshifts from broadband photometry.
\newblock {\em The Astronomical Journal\/}~{\em 110}, 2655.

\bibitem[\protect\citeauthoryear{Del~Moral}{Del~Moral}{2013}]{del2013mean}
Del~Moral, P. (2013).
\newblock Mean field simulation for monte carlo integration.
\newblock {\em Monographs on Statistics and Applied Probability\/}~{\em 126\/}(26), 6.

\bibitem[\protect\citeauthoryear{Dellaporta, Knoblauch, Damoulas, and Briol}{Dellaporta et~al.}{2022}]{dellaporta2022robust}
Dellaporta, C., J.~Knoblauch, T.~Damoulas, and F.-X. Briol (2022).
\newblock Robust {B}ayesian inference for simulator-based models via the {MMD} posterior bootstrap.
\newblock In {\em International Conference on Artificial Intelligence and Statistics}, pp.\  943--970. PMLR.

\bibitem[\protect\citeauthoryear{Di~Valentino, Mena, Pan, Visinelli, Yang, Melchiorri, Mota, Riess, and Silk}{Di~Valentino et~al.}{2021}]{di2021realm}
Di~Valentino, E., O.~Mena, S.~Pan, L.~Visinelli, W.~Yang, A.~Melchiorri, D.~F. Mota, A.~G. Riess, and J.~Silk (2021).
\newblock In the realm of the hubble tension—a review of solutions.
\newblock {\em Classical and Quantum Gravity\/}~{\em 38\/}(15), 153001.

\bibitem[\protect\citeauthoryear{Dunker, Mendoza, and Reale}{Dunker et~al.}{2025}]{dunker2025regularized}
Dunker, F., E.~Mendoza, and M.~Reale (2025).
\newblock Regularized maximum likelihood estimation for the random coefficients model.
\newblock {\em Econometric Reviews\/}~{\em 44\/}(2), 192--213.

\bibitem[\protect\citeauthoryear{Duran-Martin, Altamirano, Shestopaloff, S{\'a}nchez-Betancourt, Knoblauch, Jones, Briol, and Murphy}{Duran-Martin et~al.}{2024}]{duran2024outlier}
Duran-Martin, G., M.~Altamirano, A.~Shestopaloff, L.~S{\'a}nchez-Betancourt, J.~Knoblauch, M.~Jones, F.-X. Briol, and K.~P. Murphy (2024).
\newblock Outlier-robust kalman filtering through generalised bayes.
\newblock In {\em International Conference on Machine Learning}, pp.\  12138--12171. PMLR.

\bibitem[\protect\citeauthoryear{Fong, Holmes, and Walker}{Fong et~al.}{2023}]{fong2023martingale}
Fong, E., C.~Holmes, and S.~G. Walker (2023).
\newblock Martingale posterior distributions.
\newblock {\em Journal of the Royal Statistical Society Series B: Statistical Methodology\/}~{\em 85\/}(5), 1357--1391.

\bibitem[\protect\citeauthoryear{Fortini and Petrone}{Fortini and Petrone}{2020}]{fortini20quasibayes}
Fortini, S. and S.~Petrone (2020).
\newblock Quasi-{{Bayes Properties}} of a {{Procedure}} for {{Sequential Learning}} in {{Mixture Models}}.
\newblock {\em Journal of the Royal Statistical Society Series B: Statistical Methodology\/}~{\em 82\/}(4), 1087--1114.

\bibitem[\protect\citeauthoryear{Fortini and Petrone}{Fortini and Petrone}{2023}]{fortini23predictionbased}
Fortini, S. and S.~Petrone (2023).
\newblock Prediction-based uncertainty quantification for exchangeable sequences.
\newblock {\em Philosophical Transactions of the Royal Society A: Mathematical, Physical and Engineering Sciences\/}~{\em 381\/}(2247), 20220142.

\bibitem[\protect\citeauthoryear{Fortini and Petrone}{Fortini and Petrone}{2025}]{fortini25exchangeability}
Fortini, S. and S.~Petrone (2025).
\newblock Exchangeability, prediction and predictive modeling in {{Bayesian}} statistics.
\newblock {\em Statistical Science\/}~{\em 40\/}(1), 40--67.

\bibitem[\protect\citeauthoryear{Frazier, Knoblauch, Jewson, and Drovandi}{Frazier et~al.}{2025}]{frazier2025impact}
Frazier, D.~T., J.~Knoblauch, J.~Jewson, and C.~Drovandi (2025).
\newblock Exact sampling of gibbs measures with estimated losses.
\newblock {\em arXiv preprint arXiv:2404.15649\/}.

\bibitem[\protect\citeauthoryear{Fu and Wilson}{Fu and Wilson}{2023}]{fu2023mean}
Fu, Q. and A.~C. Wilson (2023).
\newblock Mean-field underdamped langevin dynamics and its spacetime discretization.
\newblock In {\em Forty-first International Conference on Machine Learning}.

\bibitem[\protect\citeauthoryear{Garreau, Jitkrittum, and Kanagawa}{Garreau et~al.}{2018}]{garreau_large_2018}
Garreau, D., W.~Jitkrittum, and M.~Kanagawa (2018, October).
\newblock Large sample analysis of the median heuristic.
\newblock arXiv:1707.07269 [math].

\bibitem[\protect\citeauthoryear{Gelman and Nolan}{Gelman and Nolan}{2002}]{gelman_probability_2002}
Gelman, A. and D.~Nolan (2002, September).
\newblock A {Probability} {Model} for {Golf} {Putting}.
\newblock {\em Teaching Statistics\/}~{\em 24\/}(3), 93--95.
\newblock Publisher: Wiley.

\bibitem[\protect\citeauthoryear{Ghosal, Ghosh, and van~der Vaart}{Ghosal et~al.}{2000}]{ghosal:ghosh:vdv:00}
Ghosal, S., J.~K. Ghosh, and A.~W. van~der Vaart (2000).
\newblock Convergence rates of posterior distributions.
\newblock {\em Ann. Statist.\/}~{\em 28\/}(2), 500--531.

\bibitem[\protect\citeauthoryear{Ghosal and Van Der~Vaart}{Ghosal and Van Der~Vaart}{2001}]{ghosal2001entropies}
Ghosal, S. and A.~W. Van Der~Vaart (2001).
\newblock Entropies and rates of convergence for maximum likelihood and bayes estimation for mixtures of normal densities.
\newblock {\em Annals of Statistics\/}, 1233--1263.

\bibitem[\protect\citeauthoryear{Gilley and Pace}{Gilley and Pace}{1996}]{gilley_harrison_1996}
Gilley, O.~W. and R.~Pace (1996, November).
\newblock On the {Harrison} and {Rubinfeld} {Data}.
\newblock {\em Journal of Environmental Economics and Management\/}~{\em 31\/}(3), 403--405.

\bibitem[\protect\citeauthoryear{Giummolè, Mameli, Ruli, and Ventura}{Giummolè et~al.}{2019}]{giummole_objective_2019}
Giummolè, F., V.~Mameli, E.~Ruli, and L.~Ventura (2019, September).
\newblock Objective {Bayesian} inference with proper scoring rules.
\newblock {\em TEST\/}~{\em 28\/}(3), 728--755.

\bibitem[\protect\citeauthoryear{Gneiting, Balabdaoui, and Raftery}{Gneiting et~al.}{2007}]{gneiting2007probabilistic}
Gneiting, T., F.~Balabdaoui, and A.~E. Raftery (2007).
\newblock Probabilistic forecasts, calibration and sharpness.
\newblock {\em Journal of the Royal Statistical Society: Series B (Statistical Methodology)\/}~{\em 69\/}(2), 243--268.

\bibitem[\protect\citeauthoryear{Gneiting and Raftery}{Gneiting and Raftery}{2007}]{gneiting2007strictly}
Gneiting, T. and A.~E. Raftery (2007).
\newblock Strictly proper scoring rules, prediction, and estimation.
\newblock {\em Journal of the American statistical Association\/}~{\em 102\/}(477), 359--378.

\bibitem[\protect\citeauthoryear{Gr{\"u}nwald and van Ommen}{Gr{\"u}nwald and van Ommen}{2017}]{grunwald2017inconsistency}
Gr{\"u}nwald, P. and T.~van Ommen (2017).
\newblock Inconsistency of bayesian inference for misspecified linear models, and a proposal for repairing it.
\newblock {\em Bayesian Analysis\/}~{\em 12\/}(4), 1069--1103.

\bibitem[\protect\citeauthoryear{Harrison and Rubinfeld}{Harrison and Rubinfeld}{1978}]{harrison_hedonic_1978}
Harrison, D. and D.~L. Rubinfeld (1978, March).
\newblock Hedonic housing prices and the demand for clean air.
\newblock {\em Journal of Environmental Economics and Management\/}~{\em 5\/}(1), 81--102.

\bibitem[\protect\citeauthoryear{Hu, Ren, Siska, and Szpruch}{Hu et~al.}{2020}]{hu_mean-field_2020}
Hu, K., Z.~Ren, D.~Siska, and L.~Szpruch (2020, December).
\newblock Mean-{Field} {Langevin} {Dynamics} and {Energy} {Landscape} of {Neural} {Networks}.
\newblock arXiv:1905.07769 [math].

\bibitem[\protect\citeauthoryear{Jankowiak, Pleiss, and Gardner}{Jankowiak et~al.}{2020a}]{jankowiak2020deep}
Jankowiak, M., G.~Pleiss, and J.~Gardner (2020a).
\newblock Deep sigma point processes.
\newblock In {\em Proceedings of the 36th Conference on Uncertainty in Artificial Intelligence}.

\bibitem[\protect\citeauthoryear{Jankowiak, Pleiss, and Gardner}{Jankowiak et~al.}{2020b}]{jankowiak2020parametric}
Jankowiak, M., G.~Pleiss, and J.~Gardner (2020b).
\newblock Parametric {G}aussian process regressors.
\newblock In {\em Proceedings of the 37th International Conference on Machine Learning}.

\bibitem[\protect\citeauthoryear{Jewson, Smith, and Holmes}{Jewson et~al.}{2018}]{jewson2018principles}
Jewson, J., J.~Q. Smith, and C.~Holmes (2018).
\newblock Principles of {B}ayesian inference using general divergence criteria.
\newblock {\em Entropy\/}~{\em 20\/}(6), 442.

\bibitem[\protect\citeauthoryear{Jiang and Tanner}{Jiang and Tanner}{2008}]{jiang_gibbs_2008}
Jiang, W. and M.~A. Tanner (2008, October).
\newblock Gibbs posterior for variable selection in high-dimensional classification and data mining.
\newblock {\em The Annals of Statistics\/}~{\em 36\/}(5).

\bibitem[\protect\citeauthoryear{Jordan, Kinderlehrer, and Otto}{Jordan et~al.}{1998}]{jordan1998variational}
Jordan, R., D.~Kinderlehrer, and F.~Otto (1998).
\newblock The variational formulation of the fokker--planck equation.
\newblock {\em SIAM journal on mathematical analysis\/}~{\em 29\/}(1), 1--17.

\bibitem[\protect\citeauthoryear{Jordan-Squire}{Jordan-Squire}{2015}]{jordan2015convex}
Jordan-Squire, C. (2015).
\newblock {\em Convex Optimization over Probability Measures}.
\newblock Ph.\ D. thesis, University of Washington.

\bibitem[\protect\citeauthoryear{Knoblauch, Jewson, and Damoulas}{Knoblauch et~al.}{2022}]{knoblauch2019generalized}
Knoblauch, J., J.~Jewson, and T.~Damoulas (2022).
\newblock An optimization-centric view on {B}ayes' rule: Reviewing and generalizing variational inference.
\newblock {\em The Journal of Machine Learning Research\/}~{\em 23\/}(1), 5789--5897.

\bibitem[\protect\citeauthoryear{Knoblauch, Jewson, and Damoulas}{Knoblauch et~al.}{2018}]{knoblauch2018doubly}
Knoblauch, J., J.~E. Jewson, and T.~Damoulas (2018).
\newblock Doubly robust {B}ayesian inference for non-stationary streaming data with beta-divergences.
\newblock {\em Advances in Neural Information Processing Systems\/}~{\em 31}.

\bibitem[\protect\citeauthoryear{Lai and Yao}{Lai and Yao}{2024}]{lai2024predictive}
Lai, J. and Y.~Yao (2024).
\newblock Predictive variational inference: Learn the predictively optimal posterior distribution.
\newblock {\em arXiv preprint arXiv:2410.14843\/}.

\bibitem[\protect\citeauthoryear{Laird}{Laird}{1978}]{laird1978nonparametric}
Laird, N. (1978).
\newblock Nonparametric maximum likelihood estimation of a mixing distribution.
\newblock {\em Journal of the American Statistical Association\/}~{\em 73\/}(364), 805--811.

\bibitem[\protect\citeauthoryear{Laplante, Altamirano, Duncan, Knoblauch, and Briol}{Laplante et~al.}{2025}]{laplante2025robust}
Laplante, W., M.~Altamirano, A.~Duncan, J.~Knoblauch, and F.-X. Briol (2025).
\newblock Robust and conjugate spatio-temporal gaussian processes.
\newblock {\em arXiv preprint arXiv:2502.02450\/}.

\bibitem[\protect\citeauthoryear{Lecu{\'e}}{Lecu{\'e}}{2013}]{lecue2013empirical}
Lecu{\'e}, G. (2013).
\newblock Empirical risk minimization is optimal for the convex aggregation problem.
\newblock {\em Bernoulli\/}.

\bibitem[\protect\citeauthoryear{Lindsay}{Lindsay}{1995}]{lindsay1995mixture}
Lindsay, B.~G. (1995).
\newblock {\em Mixture Models: Theory, Geometry, and Applications}.
\newblock Institute of Mathematical Statistics.

\bibitem[\protect\citeauthoryear{Liu, Majka, and Szpruch}{Liu et~al.}{2023}]{liu2023polyak}
Liu, L., M.~B. Majka, and {\L}.~Szpruch (2023).
\newblock Polyak--{\l}ojasiewicz inequality on the space of measures and convergence of mean-field birth-death processes.
\newblock {\em Applied Mathematics \& Optimization\/}~{\em 87\/}(3), 48.

\bibitem[\protect\citeauthoryear{Liu, Fisher, Shen, Tant, Zhao, Curtis, and Oates}{Liu et~al.}{2025}]{liu_detecting_2025}
Liu, Q., M.~A. Fisher, Z.~Shen, K.~Tant, X.~Zhao, A.~Curtis, and C.~J. Oates (2025, December).
\newblock Detecting {Model} {Misspecification} in {Bayesian} {Inverse} {Problems} via {Variational} {Gradient} {Descent}.
\newblock arXiv:2512.01667 [stat].

\bibitem[\protect\citeauthoryear{Lyddon, Holmes, and Walker}{Lyddon et~al.}{2019}]{lyddon2019general}
Lyddon, S., C.~Holmes, and S.~Walker (2019).
\newblock General {B}ayesian updating and the loss-likelihood bootstrap.
\newblock {\em Biometrika\/}~{\em 106\/}(2), 465--478.

\bibitem[\protect\citeauthoryear{Martin and Syring}{Martin and Syring}{2022}]{martin_direct_2022}
Martin, R. and N.~Syring (2022).
\newblock Direct {Gibbs} posterior inference on risk minimizers: {Construction}, concentration, and calibration.
\newblock In {\em Handbook of {Statistics}}, Volume~47, pp.\  1--41. Elsevier.

\bibitem[\protect\citeauthoryear{Marusic, Medina, and Rush}{Marusic et~al.}{2025}]{marusic2025theoretical}
Marusic, J., M.~A. Medina, and C.~Rush (2025).
\newblock A theoretical framework for m-posteriors: frequentist guarantees and robustness properties.
\newblock {\em arXiv preprint arXiv:2510.01358\/}.

\bibitem[\protect\citeauthoryear{Masegosa}{Masegosa}{2020}]{masegosa2020learning}
Masegosa, A. (2020).
\newblock Learning under model misspecification: Applications to variational and ensemble methods.
\newblock {\em Advances in Neural Information Processing Systems\/}~{\em 33}, 5479--5491.

\bibitem[\protect\citeauthoryear{Matsubara, Knoblauch, Briol, and Oates}{Matsubara et~al.}{2022}]{matsubara2022robust}
Matsubara, T., J.~Knoblauch, F.-X. Briol, and C.~J. Oates (2022).
\newblock Robust generalised {B}ayesian inference for intractable likelihoods.
\newblock {\em Journal of the Royal Statistical Society Series B: Statistical Methodology\/}~{\em 84\/}(3), 997--1022.

\bibitem[\protect\citeauthoryear{Matsubara, Knoblauch, Briol, and Oates}{Matsubara et~al.}{2023}]{matsubara2023generalisedDFD}
Matsubara, T., J.~Knoblauch, F.-X. Briol, and C.~J. Oates (2023).
\newblock Generalised {B}ayesian inference for discrete intractable likelihood.
\newblock {\em Journal of the American Statistical Association\/}~(just-accepted), 1--21.

\bibitem[\protect\citeauthoryear{McLatchie, Fong, Frazier, and Knoblauch}{McLatchie et~al.}{2025}]{mclatchie_predictive_2025}
McLatchie, Y., E.~Fong, D.~T. Frazier, and J.~Knoblauch (2025, July).
\newblock Predictive performance of power posteriors.
\newblock {\em Biometrika\/}~{\em 112\/}(3), asaf034.

\bibitem[\protect\citeauthoryear{Morningstar, Alemi, and Dillon}{Morningstar et~al.}{2022}]{morningstar2022pacm}
Morningstar, W.~R., A.~Alemi, and J.~V. Dillon (2022).
\newblock Pacm-bayes: Narrowing the empirical risk gap in the misspecified bayesian regime.
\newblock In {\em International Conference on Artificial Intelligence and Statistics}, pp.\  8270--8298. PMLR.

\bibitem[\protect\citeauthoryear{Nemirovski}{Nemirovski}{2000}]{nemirovski2000topics}
Nemirovski, A. (2000).
\newblock Topics in non-parametric statistics.
\newblock {\em Ecole d’Et{\'e} de Probabilit{\'e}s de Saint-Flour\/}~{\em 28}, 85.

\bibitem[\protect\citeauthoryear{Pedregosa, Varoquaux, Gramfort, Michel, Thirion, Grisel, Blondel, Prettenhofer, Weiss, Dubourg, Vanderplas, Passos, Cournapeau, Brucher, Perrot, and Duchesnay}{Pedregosa et~al.}{2011}]{scikit-learn}
Pedregosa, F., G.~Varoquaux, A.~Gramfort, V.~Michel, B.~Thirion, O.~Grisel, M.~Blondel, P.~Prettenhofer, R.~Weiss, V.~Dubourg, J.~Vanderplas, A.~Passos, D.~Cournapeau, M.~Brucher, M.~Perrot, and E.~Duchesnay (2011).
\newblock Scikit-learn: Machine learning in {P}ython.
\newblock {\em Journal of Machine Learning Research\/}~{\em 12}, 2825--2830.

\bibitem[\protect\citeauthoryear{Robert et~al.}{Robert et~al.}{2007}]{robert2007bayesian}
Robert, C.~P. et~al. (2007).
\newblock {\em The Bayesian choice: from decision-theoretic foundations to computational implementation}, Volume~2.
\newblock Springer.

\bibitem[\protect\citeauthoryear{Roberts and Rosenthal}{Roberts and Rosenthal}{1998}]{mala_2}
Roberts, G.~O. and J.~S. Rosenthal (1998).
\newblock Optimal scaling of discrete approximations to langevin diffusions.
\newblock {\em Journal of the Royal Statistical Society: Series B (Statistical Methodology)\/}~{\em 60\/}(1), 255--268.

\bibitem[\protect\citeauthoryear{Roberts and Tweedie}{Roberts and Tweedie}{1996}]{mala_1}
Roberts, G.~O. and R.~L. Tweedie (1996).
\newblock {Exponential convergence of Langevin distributions and their discrete approximations}.
\newblock {\em Bernoulli\/}~{\em 2\/}(4), 341 -- 363.

\bibitem[\protect\citeauthoryear{Rubin}{Rubin}{1981}]{rubin1981bayesian}
Rubin, D.~B. (1981).
\newblock The bayesian bootstrap.
\newblock {\em The annals of statistics\/}, 130--134.

\bibitem[\protect\citeauthoryear{Schmon, Cannon, and Knoblauch}{Schmon et~al.}{2020}]{schmongeneralized2020}
Schmon, S.~M., P.~W. Cannon, and J.~Knoblauch (2020).
\newblock Generalized posteriors in approximate bayesian computation.
\newblock In {\em Third Symposium on Advances in Approximate Bayesian Inference}.

\bibitem[\protect\citeauthoryear{Sharrock and Nemeth}{Sharrock and Nemeth}{2025}]{sharrock_tuning-free_2025}
Sharrock, L. and C.~Nemeth (2025, October).
\newblock Tuning-{Free} {Sampling} via {Optimization} on the {Space} of {Probability} {Measures}.
\newblock arXiv:2510.25315 [stat].

\bibitem[\protect\citeauthoryear{Shen, Knoblauch, Power, and Oates}{Shen et~al.}{2025}]{shenprediction2025}
Shen, Z., J.~Knoblauch, S.~Power, and C.~J. Oates (2025).
\newblock Prediction-centric uncertainty quantification via mmd.
\newblock In {\em The 28th International Conference on Artificial Intelligence and Statistics}.

\bibitem[\protect\citeauthoryear{Sivula, Magnusson, Matamoros, and Vehtari}{Sivula et~al.}{2025}]{sivula_uncertainty_2025}
Sivula, T., M.~Magnusson, A.~A. Matamoros, and A.~Vehtari (2025, January).
\newblock Uncertainty in {Bayesian} {Leave}-{One}-{Out} {Cross}-{Validation} {Based} {Model} {Comparison}.
\newblock {\em Bayesian Analysis\/}~{\em -1\/}(-1).

\bibitem[\protect\citeauthoryear{Suzuki, Wu, and Nitanda}{Suzuki et~al.}{2023}]{suzuki2023convergence}
Suzuki, T., D.~Wu, and A.~Nitanda (2023).
\newblock Convergence of mean-field langevin dynamics: time-space discretization, stochastic gradient, and variance reduction.
\newblock {\em Advances in Neural Information Processing Systems\/}~{\em 36}, 15545--15577.

\bibitem[\protect\citeauthoryear{Syring and Martin}{Syring and Martin}{2019}]{syring2019calibrating}
Syring, N. and R.~Martin (2019).
\newblock Calibrating general posterior credible regions.
\newblock {\em Biometrika\/}~{\em 106\/}(2), 479--486.

\bibitem[\protect\citeauthoryear{van~der Vaart}{van~der Vaart}{1998}]{vdv}
van~der Vaart, A.~W. (1998).
\newblock {\em Asymptotic statistics}, Volume~3.
\newblock Cambridge: Cambridge University Press.

\bibitem[\protect\citeauthoryear{Wild, Ghalebikesabi, Sejdinovic, and Knoblauch}{Wild et~al.}{2023}]{wild2023rigorous}
Wild, V.~D., S.~Ghalebikesabi, D.~Sejdinovic, and J.~Knoblauch (2023).
\newblock A rigorous link between deep ensembles and (variational) bayesian methods.
\newblock {\em Advances in Neural Information Processing Systems\/}~{\em 36}, 39782--39811.

\bibitem[\protect\citeauthoryear{Wilkins-Reeves, Chen, and Chan}{Wilkins-Reeves et~al.}{2026}]{wilkins2026data}
Wilkins-Reeves, S., Y.-C. Chen, and K.~C.~G. Chan (2026).
\newblock Data harmonization via regularized nonparametric mixing distribution estimation.
\newblock {\em The Annals of Applied Statistics\/}~{\em 20\/}(1), 260--284.

\bibitem[\protect\citeauthoryear{Yang and Dunson}{Yang and Dunson}{2014}]{yang_minimax_2014}
Yang, Y. and D.~B. Dunson (2014, March).
\newblock Minimax {Optimal} {Bayesian} {Aggregation}.
\newblock arXiv:1403.1345 [math].

\bibitem[\protect\citeauthoryear{Zellner}{Zellner}{1988}]{zellner_optimal_1988}
Zellner, A. (1988, November).
\newblock Optimal information processing and {Bayes}'s theorem.
\newblock {\em The American Statistician\/}~{\em 42\/}(4), 278.

\bibitem[\protect\citeauthoryear{Zhang}{Zhang}{2006}]{zhang_e-entropy_2006}
Zhang, T. (2006, October).
\newblock From $\varepsilon$-entropy to {KL}-entropy: {Analysis} of minimum information complexity density estimation.
\newblock {\em The Annals of Statistics\/}~{\em 34\/}(5).

\end{thebibliography}
}

\newpage

\section{Detailed Theoretical Developments}
\label{sec:theoretical-details}

Herein, we fully develop and explain the results summarised in Theorem \ref{thm:DI-master}, and discuss some additional results left out from the main paper. Throughout all results, we take $\lambda_n = n^{1/2}/\sqrt{kC^2\log(n)^2}$ for some universal constant $C$ which depends on the particular assumptions imposed,  and  where we set $k=1$ when Assumption \ref{ass:entropy} is satisfied.

\subsection{A Preliminary Result}
Recall that $\mathcal{D}_S$ represents the divergence associated with the scoring rule $S(\cdot,x)$, where $\mathcal{D}_S(P,P_0)=\mathcal{S}(P,P_0)-\mathcal{S}(P_0,P_0)$, and $\mathcal{S}(P,P_0)=\E_{X\sim P_0}[S(P,X)]$. 

Regardless of the fact that that different scores can require slightly different regularity conditions, \ac{pro} posteriors deliver similar theoretical results in both cases.  %

\begin{lemma}
\label{lemma:pac-bayes-bound-master}
Assumption \ref{ass:convex} is satisfied. If Assumption \ref{ass:entropy} or Assumption \ref{ass:Global} holds, then 
\begin{IEEEeqnarray}{rCl}
        \E\left[\mathcal{D}_S(P_{Q_n},P_0)\right]
		\le \inf_{Q\in\mathcal{P}(\Theta)}
        \left\{
        \mathcal{D}_S\left(P_Q, P_0\right) +\frac{\KL({Q\|\Pi})}{\lambda_n}\right\}+\nu_n.
        \nonumber
\end{IEEEeqnarray} for $\nu_n$ as defined in \eqref{eq:rates}. 
\end{lemma}

Lemma \ref{lemma:pac-bayes-bound-master} delivers a meaningful preliminary bound on the predictive performance of \ac{pro} posteriors. Further, this result applies to \ac{pro} posterior based on \textit{any predictive score} that satisfies our Assumptions.

\subsection{Generic Misspecification}

Under mild regularity conditions,  Gibbs posteriors based on scoring rules concentrate towards $\theta^{\star} := \argmin_{\theta\in\Theta}\mathcal{S}(P_{\theta}, P_0)$, which defines the model that is closest to $P_0$ in terms of $\mathcal{S}$.
In fact, results like this only require existence of $\theta^{\star}$ and mild conditions on the probability mass that $\Pi$ has around $\theta^{\star}$. 
The next assumption provides such conditions, and is sufficient for Gibbs posterior concentration to occur.
In the remainder, we will use these assumptions to compare the behaviour of Gibbs posteriors with that of \ac{pro} posteriors as $n$ increases.

\begin{assumption}
\label{ass:prior-mass-condition}
There exists $\theta^{\star}=\argmin_{\theta\in\Theta}\mathcal{S}(P_\theta,P_0)$, and  constants $c_S>0$ and $d_{S}>0$ and some $r_0 > 0$ such that, for any $r \leq r_0$, and $\mathcal{B}_r:=\left\{\theta \in \Theta : \mathcal{D}_S(P_{\theta},P_{0}) - \mathcal{D}_S(P_{\theta^\star},P_{0}) \leq r\right\}$, $\Pi\left(\mathcal{B}_r\right) \geq ({r}/{c_S})^{d_{S}}$ .
\end{assumption}

The above is slightly stronger than the classical prior mass condition of \citet{ghosal:ghosh:vdv:00}, insofar as the condition is essentially geared toward parametric models: 
while it may not be satisfied outside of Euclidean spaces, it is  mild and unrestrictive when $\Theta$ is a Euclidean space. 
In fact, the above assumption is considered standard in the PAC-Bayes literature, where interest traditionally is more focused on parametric models (see, e.g., Section 4.4 of \citealp{alquier2024user}); 
however, we note that this condition can be replaced with the standard prior mass condition, in general rates, without meaningfully altering our main results. Using Assumption \ref{ass:prior-mass-condition}, we obtain the first result comparing predictive performance between \ac{pro} posteriors and Gibbs posteriors in the case where we allow for generic model misspecification.

\begin{theorem}
\label{theorem:Gibbs-vs-pro-posterior-misspecification}
Under Assumptions \ref{ass:convex}-\ref{ass:prior-mass-condition}, 
$
\E [\mathcal{D}_S\left( P_{Q_n},P_0\right)-\mathcal{D}_S(P_{\theta^\star},P_0)]\lesssim 
\nu_n.
$ Furthermore, 
$
\E [\mathcal{D}_S\left( P_{Q^\dagger_n},P_0\right)-\mathcal{D}_S(P_{\theta^\star},P_0)]\lesssim 
\log(n)/n^{1/2}.
$
\end{theorem}

In words, 
Theorem \ref{theorem:Gibbs-vs-pro-posterior-misspecification} says that the \ac{pro} posteriors predictive performance is comparable to that of the Gibbs posterior based on the same scoring rule. While this may be reassuring, it does not show that the \ac{pro} posterior confers a clear advantage.
The reason for this is that Assumption \ref{ass:prior-mass-condition} does not rule out the case where the predictively optimal average $Q^{\star} \in \arginf_{Q \in \mathcal{P}(\Theta)}\mathcal{S}(P_Q, P_0)$ is a point mass at $\theta^{\star} = \argmin_{\theta \in \Theta}\mathcal{D}_{\mathcal{S}}(P_{\theta}, P_0)$, the score-minimising value.
In this case---which is unlikely to occur in sufficiently complex data sets---Gibbs and \ac{pro} posteriors alike would allocate most of their probability mass around $\theta^{\star}$ as $n\to\infty$, and would not produce meaningfully different predictive distributions.

\subsection{Non-Trivial Misspecification}
By excluding trivial forms of misspecification, we can obtain a clearer idea of the advantages \ac{pro} posteriors maintain over Bayes and Gibbs posteriors. Following the idea of \textit{bad misspecification} presented by \citet{grunwald2017inconsistency}, the more general notion of non-trivial misspecification in Definition \ref{def:misspecification} will ensure that averaging over $\mathcal{M}_{\Theta}$ is predictively advantageous. 

\begin{assumption}
\label{ass:non-trivial-misspecification-exact}
There exists $
Q^{\star}\in\arg\inf_{Q\in\mathcal{P}(\Theta)} \mathcal{D}_S\left(P_Q,P_0\right)
$, such that $\KL(Q^\star\|\Pi)<\infty$ and 
$
\mathcal{D}_S\left( P_{\theta^{\star}} ,P_0\right)>\mathcal{D}_S\left(
    P_{Q^{\star}},P_0\right).
$ Further, the mapping $\theta\mapsto P_\theta$ is continuous and identifiable. 
\end{assumption}

When the magnitude of misspecification is small, the Gibbs posterior is already highly performant, so that the \ac{pro} posterior can only achieve a small amount (if at all) of predictive superiority.
Non-trivial misspecification is therefore sufficient for the \ac{pro} posterior to achieve predictive domination in any score.

\begin{theorem}
\label{theorem:generalisation-guarantee-exact}
Under Assumptions
\ref{ass:convex}-\ref{ass:non-trivial-misspecification-exact} and for $n$ large enough but finite, it holds that $\mathbb{E}\left[\mathcal{D}_S\left(
P_{Q_n},P_0 \right)\right] < \mathbb{E}[\mathcal{D}_S(
P_{Q_n^\dagger}
,P_0)]$, and  $\mathbb{E}\left[\mathcal{D}_S\left(
P_{Q_n}
,P_0 \right)-\mathcal{D}_S(
P_{Q^\star},P_0 )\right]\lesssim \nu_n$. %
\end{theorem}
The above result constitutes part \ref{item:master-theorem-DI-non-trivially-NT} in Theorem \ref{thm:DI-master}.
The result shows that under suitable regularity conditions, \ac{pro} posteriors strictly dominate both Bayes posteriors  and Gibbs posteriors in terms of their predictive performance.
Crucially, this is not an asymptotic result: 
it holds in expectation for finite samples.
While the result shows the superior predictive performance of \ac{pro} posteriors, it leaves several key questions unanswered: 
what is the qualitative behaviour of \ac{pro} posteriors? 
Can we characterise the shape they take, at least asymptotically? 
And how practically significant is the performance gap?
Unfortunately, it is not possible to establish these properties in full generality: 
without further assumptions on the nature of the misspecification, we could be arbitrarily close to being in the trivially misspecified regime. In the next section, we further refine the nature of misspecification we consider to deduce an answer to these questions.

\subsection{Misspecification with Convex Recovery}
From Definition \ref{def:misspecification}, we recall that convex recovery implies the existence of some $Q \in \mathcal{P}(\Theta)$ for which $P_Q$ can recover the data-generating process $P_0$. Interestingly,  the \ac{pro} posterior recovers the truth so long as $P_0$ is convexly recoverable as either a purely discrete or continuous mixture. 
\begin{assumption}\label{ass:prior-mass-convex-recovery-general} One of the following is satisfied.
\begin{itemize}
    \item[(i)] There exists $Q^\star\in\mathcal{P}(\Theta)$, with $\KL(Q^\star\|\Pi)<\infty$, such that $P_0=\int P_\theta \dt Q^\star(\theta)$.
    \item[(ii)]For some $J \in \mathbb{N}$, 
$P_0=\sum_{j=1}^J \omega_j P_{\theta_j^\star}$, where $(\omega_j)_{j=1}^J\in\Delta^{J-1}$ are non-negative weights summing to unity, and 
$\theta_j^{\star} \in \Theta$. For $\mathcal{B}_{r,j} = \{\theta:\mathcal{D}_S(P_{\theta},P_{\theta_j^\star})\leq r_{j}\}$, there are constants $c_S,d_S>0$ so that, for any $r\le r_0$, $\Pi(\mathcal{B}_{r,j})\ge (r/c_S)^{d_S}$  for $j=1,2,\dots, J$.
\end{itemize}
\end{assumption}
Assumption \ref{ass:prior-mass-convex-recovery-general} restricts the nature of the true model to be either a purely discrete or continuous mixture with components $P_\theta$. Assumption \ref{ass:prior-mass-convex-recovery-general}(i) requires that this mixture is absolutely continuous with respect to the prior, via the condition $\KL(Q^\star\|\Pi)<\infty$, while Assumption \ref{ass:prior-mass-convex-recovery-general}(ii) is its  discrete {analogue}, and is an extension of Assumption \ref{ass:prior-mass-condition} to $J$ components. Similar to Assumption \ref{ass:prior-mass-condition}, Assumption \ref{ass:prior-mass-convex-recovery-general}(ii) requires that the prior should place enough mass near each component of the discrete mixture representing the true data generating process $P_0$.
\begin{theorem}
\label{theorem:generalisation-guarantee-misspecification-mixture-exact-convex}
Under Assumptions \ref{ass:convex}-\ref{ass:prior-mass-convex-recovery-general},
and for $n$ sufficiently large but finite, 
$
\mathbb{E}\left[\mathcal{D}_S\left(P_{Q_n},P_0 \right)\right]<\mathbb{E}[\mathcal{D}_S(P_{Q_n^\dagger},P_0)]$, and $\mathbb{E}\left[\mathcal{D}_S\left(
P_{Q_n}
,P_0 \right)\right] \lesssim \nu_n$.%
\end{theorem}
The above result constitutes part \ref{item:master-theorem-DI-non-trivially-CR} in Theorem \ref{thm:DI-master}, and shows that, up to logarithmic factors, we recover the true data generating process $P_0$ at the standard parametric rate of $n^{-1/2}$ when Assumption \ref{ass:Global} is satisfied; if Assumption \ref{ass:entropy} is satisfied, we recover $P_0$ at the slower of the rates $r_n$ and the standard parametric rate.

\subsubsection{Correct Model Specification}\label{sec:correct_appendix_DI}

Theorems \ref{theorem:generalisation-guarantee-exact} and \ref{theorem:generalisation-guarantee-misspecification-mixture-exact-convex} prove that the \ac{pro} posterior has superior predictive properties to Gibbs and Bayes posteriors under model misspecification, but it is unclear what penalty we pay when using the \ac{pro} posterior in the context of a well-specified model.
The next result shows that there is little to worry about.

\begin{corollary}
\label{corollary:concentration-if-model-correct}
If  Assumptions \ref{ass:convex}-\ref{ass:entropy} and \ref{ass:prior-mass-convex-recovery-general}(ii) hold for $J=1$ and $\theta^{\star}_1 = \theta^{\star}$,  then
$\E\left[\mathcal{D}_S\left(P_{Q_n},P_0\right)\right]-\E[\mathcal{D}_S(P_{Q^\dagger_n},P_0)]\lesssim \nu_n
$.%
\end{corollary}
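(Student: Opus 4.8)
The plan is to bound $\E[\mathcal{D}_S(P_{Q_n},P_0)]$ directly and then exploit the nonnegativity of the Gibbs divergence. Since $J=1$ forces $P_0 = P_{\theta^\star}$, the model is well-specified and the predictively optimal measure is the point mass at $\theta^\star$. The crucial structural consequence is that, by part \ref{item:loss-type-DI} of Assumption \ref{ass:score-approximation-DI-and-MS}, the diversity term $\delta$ vanishes at point masses and is in any case nonnegative, so a test measure localised near $\theta^\star$ will make the product-measure structure in Lemma \ref{lemma:pac-bayes-bound-master} collapse to an essentially Gibbs-type objective. This is exactly why no $\operatorname{Gap}(k)$ term survives in the well-specified regime.

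First I would invoke the generic (first) inequality of Lemma \ref{lemma:pac-bayes-bound-master}, which under Assumptions \ref{ass:convex}, \ref{ass:score-approximation-DI-and-MS}\,\ref{item:loss-type-DI} and \ref{ass:alt_bern} gives $\E[\mathcal{S}(P_{Q_n},P_0)] \le \E\inf_{Q}\{\int L_n(\theta_{1:k})\dt Q^k + \KL(Q^k\|\Pi^k)/(k\lambda_n)\} + \lambda_n k C_k^2/n$. I would then bound the infimum from above by evaluating at the localised prior $Q_r(\cdot) := \Pi(\cdot\cap\mathcal{B}_{r,1})/\Pi(\mathcal{B}_{r,1})$, where $\mathcal{B}_{r,1} = \{\theta:\mathcal{D}_S(P_\theta,P_{\theta^\star})\le r\}$ is the ball appearing in Assumption \ref{ass:prior-mass-convex-recovery-general}(ii). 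Because $Q_r$ is data-independent, the outer expectation passes through, leaving $\int \mathcal{L}(\theta_{1:k})\dt Q_r^k + \KL(Q_r^k\|\Pi^k)/(k\lambda_n) + \lambda_n k C_k^2/n$, where $\mathcal{L}(\theta_{1:k}) = \E[L(\theta_{1:k},X)]$ as in Assumption \ref{ass:alt_bern}.

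The two pieces are then treated separately. For the complexity term, $\KL(Q_r^k\|\Pi^k) = k\,\KL(Q_r\|\Pi) = -k\log\Pi(\mathcal{B}_{r,1}) \le k\,d_S\log(c_S/r)$ by the prior mass bound, so its contribution is at most $d_S\log(c_S/r)/\lambda_n$. For the loss term, the DI structure gives $L(\theta_{1:k},x) = S(P_{\theta_1},x) - \delta(\theta_{1:k},x) \le S(P_{\theta_1},x)$ since $\delta\ge0$; taking expectations and integrating against $Q_r^k$ yields $\int\mathcal{L}\dt Q_r^k \le \int\mathcal{S}(P_{\theta_1},P_0)\dt Q_r(\theta_1) = \mathcal{S}(P_0,P_0) + \int\mathcal{D}_S(P_{\theta_1},P_{\theta^\star})\dt Q_r(\theta_1) \le \mathcal{S}(P_0,P_0)+r$, where the last step uses that $Q_r$ is supported on $\mathcal{B}_{r,1}$. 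Subtracting the constant $\mathcal{S}(P_0,P_0)$ converts the left-hand side into $\E[\mathcal{D}_S(P_{Q_n},P_0)]$ and leaves $\E[\mathcal{D}_S(P_{Q_n},P_0)] \le r + d_S\log(c_S/r)/\lambda_n + \lambda_n k C_k^2/n$.

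Finally I would optimise the free quantities: with $\lambda_n \asymp \sqrt{n/(kC_k^2)}$ as fixed in Section \ref{sec:theoretical-details} and $r\asymp n^{-1/2}$, both the stochastic term and the complexity term are of order $\log(n)/\sqrt{n}$ (the fixed $k$ and $C_k$ being absorbed into $C$), so $\E[\mathcal{D}_S(P_{Q_n},P_0)] \le C\log(n)/\sqrt{n}$. Since $\mathcal{D}_S(P_{Q_n^\dagger},P_0)\ge0$ by the nonnegativity in Definition \ref{def:scoring-rule}(ii), dropping $\E[\mathcal{D}_S(P_{Q_n^\dagger},P_0)]$ only increases the left-hand side, and the stated difference bound follows at once. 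The one step demanding genuine care is the loss reduction: it hinges on $\delta\ge0$ to dominate the DI loss by the marginal score $S(P_{\theta_1},\cdot)$, which is precisely what allows the approximation error to disappear here — in contrast to the misspecified theorems, where $Q^\star$ is not a point mass and the diversity term cannot be discarded without incurring $\operatorname{Gap}(k)$.
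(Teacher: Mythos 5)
Your proposal is correct and follows essentially the same route as the paper: the paper's proof simply cites the intermediate bound $\E[\mathcal{D}_S(P_{Q_n},P_0)]\le \mathcal{D}_S(P_{\theta^\star},P_0)+\inf_{r>0}\{r+d_S\log(c_S/r)/\lambda_n\}+\lambda_n k C_k^2/n$ already established in the proof of Theorem \ref{theorem:Gibbs-vs-pro-posterior-misspecification}, whereas you re-derive it from Lemma \ref{lemma:pac-bayes-bound-master} by restricting the infimum to the localised prior on $\mathcal{B}_{r,1}$, dropping the non-negative diversity term, and bounding the KL via the prior-mass condition---which is exactly what that proof does. The final step of using $\mathcal{D}_S(P_{\theta^\star},P_0)=0$ in the well-specified case and the non-negativity of $\E[\mathcal{D}_S(P_{Q_n^\dagger},P_0)]$ matches the paper's conclusion as well.
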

The above result constitutes part \ref{item:master-theorem-DI-WS} in Theorem \ref{thm:DI-master} and shows that there is little evidence for a loss in predictive statistical efficiency when using \ac{pro} posteriors in the well-specified regime: the worst case difference relative to Gibbs posteriors based on the same scoring rule is bounded by a vanishing sequence of order $\nu_n$, which is $\log(n)/{n}^{1/2}$ in the case of Assumption \ref{ass:Global} and $\max\{\log(n)/{n}^{1/2},r_n\}$ in the case of Assumption \ref{ass:entropy}.

Since Gibbs posteriors concentrate onto the true parameter $\theta^*$ in well-specified models, Corollary \ref{corollary:concentration-if-model-correct} raises the question of whether or not \ac{pro} posteriors behave the same.
Under additional assumptions, we can demonstrate that this is indeed the case. 
For $P,Q\in\mathcal{P}(\Theta)$, let $\dt_1(P,Q)$ denote a metric on the space of probability measures. 
Further, let $\dt_2(P,Q)$ be a symmetric discrepancy on $\mathcal{P}(\mathcal{X})$. 
Here, $\dt_2(P,Q)$ need not be a metric, but is assumed to satisfy a weak triangle inequality.
\begin{assumption}\label{ass:concentration-dtf}
\begin{enumerate*}[
    label=(\roman*), 
]
    \item The map $\theta\mapsto P_\theta$ is continuous and identifiable, and there exist $Q^\star\in\mathcal{P}(\Theta)$ such that $Q^\star=\argmin_{Q\in\mathcal{P}(\Theta)}\mathcal{D}_S(P_{Q},P_0)$;\label{ass:concentration-dtf-i}
    \item there exists constant $C$, and $\varepsilon>0$ such that, for any $Q,Q'\in\mathcal{P}(\Theta)$, if $\dt_2(P_{Q},P_{Q'})\le \varepsilon$, then  
    $
    \dt_1(Q,Q')\le C\dt_2(P_{Q},P_{Q'})
    $; and \label{ass:concentration-dtf-ii}
    \item for some $\alpha\ge1$,  $\dt_2(P_Q,P_{Q'})^\alpha\le\mathcal{D}_S(P_Q,P_{Q'}) $. \label{ass:concentration-dtf-iii}
\end{enumerate*}
\end{assumption}

Assumption \ref{ass:concentration-dtf} allows us to translate concentration in the \ac{pro} posterior predictive $P_{Q_n}$ to concentration in the \ac{pro} posterior $Q_n$ defined on the parameter space.
While part \ref{ass:concentration-dtf-i} 
is standard, part \ref{ass:concentration-dtf-ii} gives a form of identifiability condition of the posterior.
In particular, it ensures that if two predictives $P_Q, P_{Q'}$ are close, then their underlying posteriors $Q, Q'$ must also be close.
Without such an assumption, concentration in the predictive space, $P_Q$, would not necessarily translate into concentration over $Q$. 
Lastly, part \ref{ass:concentration-dtf-iii} is a type of reverse H\"{o}lder continuity, that depends on the choice of score, distance, and the model class $\mathcal{M}_{\Theta}$. 
Its purpose is to relate the score-induced discrepancy measure to a more regular measure of distance on the space of measures, which is needed to transfer convergence.
While it is difficult to establish these conditions in general, they can be shown to hold for the squared MMD and the logarithmic scoring rule for several model classes of interest.

\begin{corollary}\label{corollary:concentration-dtf}
If Assumptions \ref{ass:convex}-\ref{ass:entropy}, and \ref{ass:prior-mass-convex-recovery-general}-\ref{ass:concentration-dtf} hold, then 
$
\mathbb{E}[\dt_1(Q_n,Q_\star )] \lesssim \nu_n^{1/\alpha}. 
$ 
\end{corollary}

Under various choices of $\alpha$, $\mathrm{d}_1$, and $\mathrm{d}_2$, Corollary~\ref{corollary:concentration-dtf} applies to both the squared MMD (with $\alpha=2$ and $\mathrm{d}_2(P,Q)=\mathrm{MMD}(P,Q)$) and to the intractable exact logarithmic scoring rule (with $\alpha=2$ and $\mathrm{d}_2(P,Q)=\mathrm{TV}(P,Q)$), which satisfy Assumption~\ref{ass:concentration-dtf} part~\ref{ass:concentration-dtf-iii}. 
For the special case of regular parametric models, the local equivalence between $\mathrm{d}_2(P_\theta,P_{\theta^\star})$ and the Euclidean distance $\|\theta-\theta^\star\|_2$ around the true value $\theta^\star$ allows us to refine our conclusions.
In particular, in this setting, for any posterior $Q$ concentrating around $\theta^\star$, the squared distance $\mathrm{d}_2(P_Q,P_{\theta^\star})^2$ behaves like $\|\mathbb{E}_{\theta\sim Q}[\theta]-\theta^\star\|_{2}^2+\textnormal{Tr}(\mathbb{V}_{\theta\sim Q}[\theta])$.
Thus, if we chose a squared metric $\mathrm{d}_1(Q,Q')^2$ on posterior distributions that 
locally behaved like $\|\mathbb{E}_{\theta\sim Q}[\theta]-\mathbb{E}_{\theta\sim Q'}[\theta]\|_{2}^2 + \|\mathbb{V}_{\theta\sim Q}[\theta]-\mathbb{V}_{\theta\sim Q'}[\theta]\|^2$ for a suitable metric on matrices, we would recover posterior concentration rates towards $\theta^\star$ directly in the Euclidean distance.
Put differently, up to a local approximation, the notion of concentration used in Corollary~\ref{corollary:concentration-dtf} via such a metric $\mathrm{d}_1$ is equivalent to the classical definition of posterior concentration with respect to the Euclidean distance for regular parametric models. 
In this setting, and if  $\alpha=2$, our results would now translate to a concentration rate of $n^{-1/4}$ in the Euclidean distance, which the reader will notice is slower than the standard $n^{-1/2}$ rate that is usually achievable in parametric models. 
Critically, this behaviour is \textit{not} a feature of the \ac{pro} posterior in particular, but of the temperature scaling employed within PAC-Bayes-type bounds.
Indeed, the $n^{-1/4}$ rate we achieve in the Euclidean space is a direct consequence of minimising our PAC-Bayes-type generalisation bounds for the \ac{pro} posterior by choosing the rate $\lambda_n \propto n^{1/2}$. 
Note that this result is entirely consistent with the application of PAC-Bayes bounds to Gibbs posteriors: here too, one often obtains $\lambda_n\propto n^{1/2}$ by minimising generalisation bounds, which would then \textit{also} result in a $n^{1/4}$ concentration rate on the Euclidean space by the same arguments as before \citep[see e.g.][]{alquier2016properties}.

\doparttoc %
\faketableofcontents %
\part{} %

\newpage
\appendix

\addcontentsline{toc}{section}{Appendix} %
\part{Appendix} %
\parttoc %

\section{Notations}
In this section we re-describe the main notations introduced in the main text to ensure the appendix is as self-contained as possible. We consider observed data $\x=(x_1,\dots,x_n)$ for $x_i\in\mathcal{X}$ and $i=1,\dots,n$, where each $x_i$ is generated from the true unknown distribution $P_0$. We denote expectations of functions $g:\mathcal{X}\to\mathbb{R}$ with respect to $X\sim P_0$ by $\mathbb{E}[g(X)]$ or by $\mathbb{E}_{X\sim P_0}[g(X)]$ if we wish to emphasise the distribution over which integration occurs. The model class is given by $\mathcal{M}_{\Theta} := \{P_{\theta}: \theta\in\Theta\}$ and a prior measure $\Pi \in \mathcal{P}(\Theta)$ over elements of $\mathcal{M}_{\Theta}$. Likewise, let $\mathcal{P}(\Theta)$ denote the set of distributions over $\Theta$, and recall $P_Q=\int P_\theta \dt Q(\theta)$ for $Q\in\mathcal{P}(\Theta)$. The function $S(P,x)$ denotes a scoring rule for the distribution $P$, while $\mathcal{S}(P,P)=\E_{X\sim P}[S(P,X)]$ denotes the expected score, and $\mathcal{D}_S(P,P_0)=\mathcal{S}(P,P_0)-\mathcal{S}(P_0,P_0)$ is the divergence associated to $P$, evaluated at the true distribution $P_0$. Further, define $S_n(P_\theta)=\frac{1}{n}\sum_{i=1}^{n}S(P_\theta,x_i)$, and $L_n(\theta_{1:k})=\frac{1}{n}\sum_{i=1}^{n}L(\theta_{1:k},x_i)$, where $L(\theta_{1:k},x)$ is as defined in Assumption \ref{ass:Global}.

\section{Lemmas}\label{app:lemmas}
This section contains several useful lemmas that are used throughout the remainder of the text. 
{First, we demonstrate our claim that \ac{pro} posterior based on any kernel scoring rule are tractable with $k=2$. This implies that for scores based on squared maximum mean discrepancy (MMD$^2$) and Cumulative Ranked Probability Score (CRPS),  the function $\Delta(Q, x)$ in \eqref{eq:predictive-score-decomposition} is constant in $x$.

To prove this, take $x,x'\in\mathcal{X}$, let $\kappa(x,x')$ denote a positive-definite kernel function with feature map $\phi:\mathcal X\to\mathcal H$, such that $\kappa(x,x')=\langle\phi(x),\phi(x')\rangle_{\mathcal H}$.  For fixed $\theta\in\Theta$, define the mean embedding $\mu_\theta\in\mathcal H$ by 
$
\mu_\theta := \int \phi(z)\,\dt P_\theta(z)$ and the inner product
$
\langle\mu_\theta,\mu_{\theta'}\rangle_{\mathcal H}
=\int\!\!\int \kappa(z,z')\,\dt P_\theta(z)\dt P_{\theta'}(z').$ Following \cite{gneiting2007probabilistic}, all kernel scoring rules can be expressed as 
$S(P,x)=\E_{X,X'\sim P}[\kappa(X,X')]-2\E_{X\sim P}[\kappa(X,x)]+\kappa(x,x).$ 
\begin{lemma}\label{lemma:kernel_scores}
For any kernel scoring rule, and for all $x\in\mathcal{X}$: (i) the score $P_\theta\mapsto S(P_\theta,x)$ is convex; (ii) $
S(P_Q,x)=\int L_{}(\theta_{1:2},x)\dt Q^2(\theta_{1:2})$ with
$
L_{}(\theta_{1:2},x)= S(P_{\theta_1},x)- \frac{1}{2}\|\mu_{\theta_1}-\mu_{\theta_2}\|^2_{\mathcal{H}}.
 $
\end{lemma}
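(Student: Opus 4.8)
The plan is to exploit the Hilbert-space geometry of kernel scores by passing everything through mean embeddings. The first step is to observe that the kernel score given in the preamble collapses to a single squared distance in $\mathcal{H}$. Using $\kappa(x,x')=\langle\phi(x),\phi(x')\rangle_{\mathcal{H}}$ together with $\mu_\theta=\int\phi(z)\dt P_\theta(z)$, and writing $\mu_P:=\int\phi(z)\dt P(z)$ for a generic $P$, the three terms become $\E_{X,X'\sim P}[\kappa(X,X')]=\|\mu_P\|_{\mathcal{H}}^2$, $\E_{X\sim P}[\kappa(X,x)]=\langle\mu_P,\phi(x)\rangle_{\mathcal{H}}$, and $\kappa(x,x)=\|\phi(x)\|_{\mathcal{H}}^2$, so that
\[
S(P,x)=\|\mu_P-\phi(x)\|_{\mathcal{H}}^2 .
\]
The interchange of integration and inner product here is justified by a boundedness/integrability assumption on $\kappa$, which I would state as a standing regularity condition.

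For part (i), I would note that $P\mapsto\mu_P$ is affine in $P$, since $\mu_{\alpha P+(1-\alpha)P'}=\alpha\mu_P+(1-\alpha)\mu_{P'}$, whereas $\mu\mapsto\|\mu-\phi(x)\|_{\mathcal{H}}^2$ is convex on $\mathcal{H}$. Convexity of $P\mapsto S(P,x)$ is then immediate as the composition of a convex function with an affine map, and specialising to $P=P_\theta$ gives the stated claim.

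For part (ii), the key observation is that the mean embedding of a mixture is the $Q$-average of the component embeddings: by Fubini, $\mu_{P_Q}=\int\mu_\theta\dt Q(\theta)$. Setting $Y_\theta:=\mu_\theta-\phi(x)$, so that $\mu_{P_Q}-\phi(x)=\E_{\theta\sim Q}[Y_\theta]$, I would apply the two-independent-copies identity
\[
\|\E_{\theta\sim Q}[Y_\theta]\|_{\mathcal{H}}^2=\int\langle Y_{\theta_1},Y_{\theta_2}\rangle_{\mathcal{H}}\,\dt Q^2(\theta_{1:2}),
\]
which turns the squared-distance form of $S(P_Q,x)=\|\E_Q[Y_\theta]\|_{\mathcal{H}}^2$ into an integral against the product measure $Q^2$. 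It then remains to verify that the proposed $L$ agrees with $\langle Y_{\theta_1},Y_{\theta_2}\rangle_{\mathcal{H}}$ after symmetrisation. Since $S(P_{\theta_1},x)=\|Y_{\theta_1}\|_{\mathcal{H}}^2$ and $\mu_{\theta_1}-\mu_{\theta_2}=Y_{\theta_1}-Y_{\theta_2}$, expanding the parallelogram-type term gives
\[
L(\theta_{1:2},x)=\|Y_{\theta_1}\|_{\mathcal{H}}^2-\tfrac12\|Y_{\theta_1}-Y_{\theta_2}\|_{\mathcal{H}}^2=\tfrac12\|Y_{\theta_1}\|_{\mathcal{H}}^2-\tfrac12\|Y_{\theta_2}\|_{\mathcal{H}}^2+\langle Y_{\theta_1},Y_{\theta_2}\rangle_{\mathcal{H}}.
\]
Because $\theta_1,\theta_2$ are exchangeable under $Q^2$, the first two terms integrate to equal quantities and cancel, leaving exactly $\int\langle Y_{\theta_1},Y_{\theta_2}\rangle_{\mathcal{H}}\dt Q^2=S(P_Q,x)$, which is the claim.

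The computation is essentially routine Hilbert-space algebra, so I do not anticipate a serious obstacle. The only points requiring genuine care are the measure-theoretic interchanges — Fubini for $\mu_{P_Q}=\int\mu_\theta\dt Q(\theta)$ and for pulling the inner product through the integral — which I expect to be the main, albeit mild, difficulty and which I would dispatch under the boundedness assumption on $\kappa$.
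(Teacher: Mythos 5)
Your proof is correct and follows essentially the same route as the paper: both arguments pass through the mean embedding $\mu_\theta$, use linearity of $P\mapsto\mu_P$ plus convexity of the squared norm for (i), and for (ii) reduce everything to the variance identity $\E_{\theta\sim Q}\|\mu_\theta\|_{\mathcal H}^2-\|\E_{\theta\sim Q}\mu_\theta\|_{\mathcal H}^2=\tfrac12\int\|\mu_{\theta_1}-\mu_{\theta_2}\|_{\mathcal H}^2\,\dt Q^2(\theta_{1:2})$. The only cosmetic difference is that you verify $\int L\,\dt Q^2=S(P_Q,x)$ by an exchangeability cancellation after writing $S(P,x)=\|\mu_P-\phi(x)\|_{\mathcal H}^2$, whereas the paper derives $L$ by computing the Jensen gap $\Delta(Q,x)$ directly; the underlying algebra is identical.
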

}
\begin{proof}
To see that (i) is satisfied, use the feature map representation to write the score as
$$
S(P_\theta,x)=\|\mu_\theta\|_{\mathcal{H}}^2-2\langle\mu_\theta,\phi(x)\rangle_{\mathcal{H}}+\|\phi(x)\|_{\mathcal{H}}^2.
$$The map $P_\theta\mapsto \mu_\theta$ is linear, so that $P_\theta\mapsto\|\mu_\theta\|_{\mathcal{H}}^2$ is quadratic. Thus, $P_\theta\mapsto S(P_\theta,x)$ is the composition of a linear map and a convex function, and so is convex. To deduce the second result, write 
$$
S\left(\int P_\theta\dt Q(\theta),x\right)=\int S(P_\theta,x)\dt Q(\theta)+\underbrace{S\left(\int P_\theta\dt Q(\theta),x\right)-\int S(P_\theta,x)\dt Q(\theta)}_{=-\Delta(Q,x)\text{ in \eqref{eq:predictive-score-decomposition}}}.
$$
Expand the first integral, and use the mean embedding definition for $\mu_\theta$, to write
\begin{flalign*}
S\left(P_Q,x\right)=\int\int \int\int \kappa(z,z')\,\dt P_\theta(z)\dt P_{\theta'}(z')\dt Q(\theta)\dt Q(\theta')&=\langle \E_{\theta\sim Q}(\mu_\theta),\E_{\theta'\sim Q}(\mu_{\theta'})\rangle\\&= \|\E_{\theta\sim Q}(\mu_\theta)\|^2_{\mathcal{H}}.
\end{flalign*}
Similarly, write 
\begin{flalign*}
\int S\left(P_\theta,x\right)\dt Q(\theta)&=\int \int\int \kappa(z,z')\,\dt P_\theta(z)\dt P_{\theta}(z')\dt Q(\theta)=\E_{\theta\sim Q}[\|\mu_\theta\|^2_{\mathcal{H}}].
\end{flalign*}
Hence, for $\Delta(Q,x)$ as in \eqref{eq:predictive-score-decomposition},
\begin{flalign*}
-\Delta(Q,x)=\|\E_{\theta\sim Q}[\mu_\theta]\|^2_{\mathcal{H}}-\E_{\theta\sim Q}[\|\mu_\theta\|^2_{\mathcal{H}}]&=-\E_{\theta\sim Q}\left[\|\mu_\theta-\E_{\theta'\sim Q}[\mu_{\theta'}]\|^2_{\mathcal{H}}\right]\\&=-\frac{1}{2}\int \|\mu_\theta-\mu_{\theta'}\|^2_{\mathcal{H}}\dt Q(\theta)\dt Q(\theta').
\end{flalign*}
\end{proof}

\begin{lemma}\label{lem:gibbs}
Under Assumptions \ref{ass:convex}-\ref{ass:prior-mass-condition},  for any $\lambda_n\geq d_S/r_0$, we have
$$
\E \bigintssss \mathcal{D}_S(P_\theta,P_0) \dt Q_n^\dagger(\theta) \le \mathcal{D}_S(P_{\theta^\star},P_0) + \frac{d_S}{\lambda_n} \, \log\left(\dfrac{e\cdot c_S\lambda_n}{d_S}\right) + \frac{\lambda_nC^2}{n} \, .
$$
Furthermore, for the specific choice $\lambda_n=n^{1/2}\, \sqrt{{d_S}/{C^2} \, \log\left({e\cdot c_S}/{d_S}\right)}$, for any $n\,\geq \frac{d_S C^2}{r_0^2\,\log\left(\frac{e\cdot c_S}{d_S}\right)}$, we have 
$$   
\mathbb{E}\left[\mathcal{D}_S\left(
P_{Q_n^\dagger}
,P_0 \right)\right] \le \mathcal{D}_S(P_{\theta^\star},P_0) + C\left(\log n/n\right)^{1/2} .
$$
\end{lemma}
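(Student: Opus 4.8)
The plan is to carry out a standard Gibbs/PAC-Bayes analysis built on three ingredients: the variational characterisation of $Q_n^\dagger$ as the minimiser of $\lambda_n\int S_n(P_\theta)\dt Q(\theta)+\KL(Q\|\Pi)$ with $S_n(P_\theta)=\frac1n\sum_i S(P_\theta,x_i)$, the Donsker--Varadhan change-of-measure inequality, and the Hoeffding-type control of Assumption \ref{ass:alt_bern} specialised to the single-parameter score (the $k=1$ instance $L(\theta,x)=S(P_\theta,x)$, with constant $C$). First I would apply Donsker--Varadhan to $Q_n^\dagger$ with the centred integrand $h(\theta)=\lambda_n\{\mathcal{S}(P_\theta,P_0)-S_n(P_\theta)\}$ to obtain $\lambda_n\int\mathcal{S}(P_\theta,P_0)\dt Q_n^\dagger\le \lambda_n\int S_n(P_\theta)\dt Q_n^\dagger+\KL(Q_n^\dagger\|\Pi)+\log\int e^{h}\dt\Pi$, then replace the first two right-hand terms using optimality of $Q_n^\dagger$ against an arbitrary data-independent competitor $Q$. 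Taking expectations over $x_{1:n}\sim P_0^n$, using $\E[S_n(P_\theta)]=\mathcal{S}(P_\theta,P_0)$, Jensen to move $\E$ inside the logarithm, Fubini, and finally Assumption \ref{ass:alt_bern} to bound $\int\E[e^{h}]\dt\Pi\le e^{\lambda_n^2C^2/n}$, then dividing by $\lambda_n$ and subtracting $\mathcal{S}(P_0,P_0)$ yields the oracle inequality $\E\int\mathcal{D}_S(P_\theta,P_0)\dt Q_n^\dagger\le \int\mathcal{D}_S(P_\theta,P_0)\dt Q+\KL(Q\|\Pi)/\lambda_n+\lambda_nC^2/n$, valid for every $Q$ with $\KL(Q\|\Pi)<\infty$.

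Next I would instantiate the competitor as the localised prior $Q=\Pi_r:=\Pi(\,\cdot\,\cap\mathcal{B}_r)/\Pi(\mathcal{B}_r)$ supported on the excess-risk ball $\mathcal{B}_r$ of Assumption \ref{ass:prior-mass-condition}. By construction $\int\mathcal{D}_S(P_\theta,P_0)\dt\Pi_r\le\mathcal{D}_S(P_{\theta^\star},P_0)+r$, while $\KL(\Pi_r\|\Pi)=-\log\Pi(\mathcal{B}_r)\le d_S\log(c_S/r)$ by the prior-mass lower bound. This turns the oracle inequality into $\mathcal{D}_S(P_{\theta^\star},P_0)+r+d_S\log(c_S/r)/\lambda_n+\lambda_nC^2/n$, and it remains to minimise $r\mapsto r+\tfrac{d_S}{\lambda_n}\log(c_S/r)$ over $(0,r_0]$. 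The stationary point is $r^\star=d_S/\lambda_n$, which lies in the admissible range exactly when $\lambda_n\ge d_S/r_0$---precisely the hypothesis of the lemma---and substituting it gives $\tfrac{d_S}{\lambda_n}\log(e\,c_S\lambda_n/d_S)$, producing the first displayed bound.

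Finally, to reach a statement about the posterior predictive, I would invoke Assumption \ref{ass:convex}: since $\mathcal{D}_S(\cdot,P_0)$ is convex, Jensen gives $\mathcal{D}_S(P_{Q_n^\dagger},P_0)=\mathcal{D}_S(\int P_\theta\dt Q_n^\dagger,P_0)\le\int\mathcal{D}_S(P_\theta,P_0)\dt Q_n^\dagger$, so the bound on the averaged divergence transfers directly to $\E[\mathcal{D}_S(P_{Q_n^\dagger},P_0)]$. Substituting the stated $\lambda_n\propto n^{1/2}$ then balances the two error terms: both $\lambda_nC^2/n$ and $\tfrac{d_S}{\lambda_n}\log(e\,c_S\lambda_n/d_S)$ are of order $n^{-1/2}$ up to logarithmic factors (the logarithm entering the second term through $\log\lambda_n$), which yields the stated bound $\mathcal{D}_S(P_{\theta^\star},P_0)+C(\log n/n)^{1/2}$ with $C$ a generic constant; a direct computation confirms that the side condition $n\ge d_SC^2/(r_0^2\log(e\,c_S/d_S))$ is equivalent to $\lambda_n\ge d_S/r_0$, so that the optimisation in the previous step is admissible.

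I expect the main obstacle to be the change-of-measure step: one must order the operations---expectation, Jensen's inequality, Fubini, and the sub-Gaussian moment bound---correctly so that Assumption \ref{ass:alt_bern} can be applied to the centred score, and one must ensure the competitor $Q=\Pi_r$ is data-independent so that $\E[\int S_n(P_\theta)\dt Q]=\int\mathcal{S}(P_\theta,P_0)\dt Q$. The only other delicate point is confirming that the interior optimiser $r^\star=d_S/\lambda_n$ respects the constraint $r^\star\le r_0$, which is exactly where the hypothesis $\lambda_n\ge d_S/r_0$ enters.
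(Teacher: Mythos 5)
Your proposal is correct and follows essentially the same route as the paper's own proof: Donsker--Varadhan plus the Hoeffding-type Assumption \ref{ass:alt_bern} at $k=1$ to obtain the oracle inequality, optimality of $Q_n^\dagger$ against the localised prior $\Pi_{\mathcal{B}_r}$, the prior-mass bound $\KL(\Pi_{\mathcal{B}_r}\|\Pi)\le d_S\log(c_S/r)$, optimisation at $r^\star=d_S/\lambda_n$ under the constraint $\lambda_n\ge d_S/r_0$, and finally Jensen via Assumption \ref{ass:convex} to pass to $P_{Q_n^\dagger}$. The only cosmetic difference is that the paper works with the centred excess score $R_n(\theta)=S_n(P_\theta)-S_n(P_0)$ rather than $S_n(P_\theta)$ itself, which does not change the argument.
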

\begin{proof}
The proof follows similar arguments to those used to prove Lemma \ref{lemma:pac-bayes-bound-master} but where we take 
$
L(\theta,\theta')=R(\theta)=\mathcal{S}(P_\theta,P_0)-\mathcal{S}(P_{0},P_0)
$ and 
$
L_n(\theta,\theta')=R_n(\theta)=S_n(P_\theta)-S_n(P_{0}).
$ If  Assumption \ref{ass:Global} is satisfied for $R(\theta)-R_n(\theta)$, then 
$$
1\ge \E_{} 
\int
e^{\lambda_n\left\{R(\theta)-R_n(\theta)\right\}-\frac{\lambda_n^2C^2}{n}} \dt \Pi(\theta).
$$
Apply the Donsker-Varadhan lemma with $h(\theta)=R(\theta)-R_n(\theta)$ to obtain 
\begin{flalign*}
	1\ge \E_{} e^{\sup_{Q\in\mathcal{P}(\Theta)}\lambda_n \int\left[R(\theta)-R_n(\theta)\right]\dt Q(\theta)-\KL(Q\|\Pi)-\frac{\lambda_n^2 C^2}{n}} .
\end{flalign*}
Apply Jensen's inequity and take logs to obtain 
\begin{flalign*}
	0\ge \E\sup_{Q\in\mathcal{P}(\Theta)}\left\{\lambda_n \int\left[R(\theta)-R_n(\theta)\right]\dt Q(\theta)-\KL(Q\|\Pi)-\frac{\lambda_n^2 C^2}{n} \right\}.	
\end{flalign*}Since the above is true for any $Q\in\mathcal{P}(\Theta)$, it is also true for $Q=Q^\dagger_n$. Applying this choice and re-arranging terms yields 
\begin{flalign*}
	\E_{}\int R(\theta) \dt Q^\dagger_n(\theta)\le \E_{}\left\{\int R_n(\theta)\dt Q^\dagger_n(\theta)+\frac{\KL({Q^\dagger_n\|\Pi})}{\lambda_n}+\frac{\lambda_n C^2}{n}\right\}.
\end{flalign*}Use the definitions of $R(\theta)$ and $R_n(\theta)$, and Assumption \ref{ass:convex}, to obtain the inequality 
\begin{flalign}
&\E_{}\left[\mathcal{S}(P_{Q_n^\dagger},P_0)-\mathcal{S}(P_0,P_0)\right]\nonumber \\
	\le&\E_{}\left[\int \{\mathcal{S}(P_\theta,P_0)-\mathcal{S}(P_0,P_0)\}\dt Q^\dagger_n(\theta)\right]
    \label{eq:lemma-5-auxiliary}
    \\
    \le&\E_{}\left\{\int \left\{S_n(P_\theta)-S_n(P_{0})\right\}\dt Q^\dagger_n(\theta)+\frac{\KL({Q^\dagger_n\|\Pi})}{\lambda_n}\right\}+\frac{\lambda_n C^2}{n}
    \nonumber \\
		\le& \E_{}\inf_{Q\in\mathcal{P}(\Theta)}\left\{\int \left\{S_n(P_\theta)-S_n(P_{0})\right\}\dt Q(\theta)+\frac{\KL({Q\|\Pi})}{\lambda_n}\right\}+\frac{\lambda_n C^2}{n}
    \nonumber
	\\=&\E_{}\inf_{Q\in\mathcal{P}(\Theta)}\left\{\int \left\{S_n(P_\theta)-S_n(P_{0})\right\}\dt Q(\theta)+\frac{\KL({Q\|\Pi})}{\lambda_n}\right\}+\frac{\lambda_n C^2}{n}\label{eq:lemma-5-auxiliary-2}.
\end{flalign}

Define the distributions $\Pi_{\mathcal{B}_r}$ indexed by a radius $r>0$ with corresponding density:
$$
\dt\Pi_{\mathcal{B}_r}(\theta):=\frac{1(\theta\in\mathcal{B}_r)\dt \Pi(\theta)}{\Pi(\mathcal{B}_r)} \quad \textnormal{with} \quad \mathcal{B}_r:=\left\{\theta\in\Theta:\mathcal{D}_S(P_{\theta},P_{0})\le \mathcal{D}_S(P_{\theta^\star},P_{0}) + r \right\} .
$$
We can now further upper bound \eqref{eq:lemma-5-auxiliary} by
restricting the infimum on the right-hand side of  \eqref{eq:lemma-5-auxiliary-2} to distributions belonging to the set $\{\Pi_{\mathcal{B}_r}:r>0\}\subset \mathcal{P}(\Theta)$, which yields
\begin{align*}
    \E \bigintssss \mathcal{D}_S(P_\theta,P_0) \dt Q_n^\dagger(\theta)&\le \E_{}\inf_{r>0}\left\{\int \left\{S_n(P_\theta)-S_n(P_{0})\right\}\dt \Pi_{\mathcal{B}_r}(\theta)+\frac{\KL({\Pi_{\mathcal{B}_r}\|\Pi})}{\lambda_n}\right\}+\frac{\lambda_n C^2}{n}\\&\le \inf_{r>0}\left\{\int \mathcal{D}_S(P_\theta,P_0)\dt \Pi_{\mathcal{B}_r}(\theta)+\frac{\KL({\Pi_{\mathcal{B}_r}\|\Pi})}{\lambda_n}\right\}+\frac{\lambda_n C^2}{n}%
    \\
    & = \inf_{r>0}\left\{ \bigintssss \mathcal{D}_S(P_\theta,P_0) \dt \Pi_{\mathcal{B}_r}(\theta) + \frac{\log\left\{1/\Pi(\mathcal{B}_r)\right\}}{\lambda_n} \right\} + \frac{\lambda_nC^2}{n} \\
    & \leq \inf_{0<r\leq r_0}\left\{ \bigintssss \left\{ \mathcal{D}_S(P_{\theta^\star},P_0) + r \right\} \dt \Pi_{\mathcal{B}_r}(\theta) + \frac{d_S\,\log\left(c_S/r\right)}{\lambda_n} \right\} + \frac{\lambda_nC^2}{n} \\
    & = \mathcal{D}_S(P_{\theta^\star},P_0) + \inf_{0<r\leq r_0}\left\{ r + \frac{d_S\,\log\left(c_S/r\right)}{\lambda_n} \right\} + \frac{\lambda_nC^2}{n} \, .
\end{align*}
As soon as $\lambda_n\geq \frac{d_S}{r_0}$, taking $r=\frac{d_S}{\lambda_n}\leq r_0$ then leads to
\begin{align*}
    \E \bigintssss \mathcal{D}_S(P_\theta,P_0) \dt Q_n^\dagger(\theta) & \le \mathcal{D}_S(P_{\theta^\star},P_0) + \frac{d_S}{\lambda_n} + \frac{d_S}{\lambda_n} \, \log\left(\dfrac{c_S\lambda_n}{d_S}\right) + \frac{\lambda_nC^2}{n} \\
    & = \mathcal{D}_S(P_{\theta^\star},P_0) + \frac{d_S}{\lambda_n} \, \log\left(\dfrac{e\cdot c_S\lambda_n}{d_S}\right) + \frac{\lambda_nC^2}{n} \, ,
\end{align*}
which establishes the first result.

To obtain the second, substitute $\lambda_n=n^{1/2}\,\sqrt{\frac{d_S}{C^2} \, \log\left(\frac{e\cdot c_S}{d_S}\right)}$ in the bound, for $n\,\geq \frac{d_S C^2}{r_0^2\,\log\left(\frac{e\cdot c_S}{d_S}\right)}$, to obtain
\begin{align*}
    &\E \bigintssss \mathcal{D}_S(P_\theta,P_0) \dt Q_n^\dagger(\theta) \\& \le\mathcal{D}_S(P_{\theta^\star},P_0)  +\frac{C\sqrt{d_S}}{n^{1/2}} \frac{\log\left(n^{1/2}\,\dfrac{e\cdot c_S}{C\sqrt{d_S}}\sqrt{\log\left(\frac{e\cdot c_S}{d_S}\right)}\right)}{\sqrt{\log\left(\frac{e\cdot c_S}{d_S}\right)}}  + \frac{C\sqrt{d_S}}{n^{1/2}}\sqrt{\log\left(\frac{e\cdot c_S}{d_S}\right)} \, ,
\end{align*}
and by convexity, Assumption \ref{ass:convex} implies that  
$
\E [ \mathcal{D}_S(P_{Q_n^\dagger},P_0)]\le \E \bigintssss \mathcal{D}_S(P_\theta,P_0) \dt Q_n^\dagger(\theta),
$
which ends the proof.
\end{proof}
\subsection{Existence and uniqueness of the minimiser}\label{sec:existence-uniqueness}
It is interesting to investigate the sufficient conditions under which the objective of~\eqref{eq:functional-for-WGF} admits a unique minimiser.
We posit two assumptions in order to do so:
one on the score and the other on the prior.
\begin{lemma}
    Suppose that Assumption~\ref{ass:convex} holds and further that the score is bounded from below so that $\inf_{Q\in\mathcal{P}(\Theta)} \mathscr{L}(Q) = M > -\infty$.
    Assume further that the prior $\Pi$ admits density with respect to the Lebesgue measure $\dt\Pi(\theta)\propto \exp\{-U(\theta)\}$ for $U:\Theta\to\mathbb{R}$ such that:
    \begin{enumerate}[(i),nosep]
        \item there exist constants $c > 0$ and $c^\prime \in \mathbb{R}$ such that for all $\theta\in\Theta$,
        \begin{equation*}
            \nabla_\theta U(\theta)\cdot\theta \geq c\lVert\theta\rVert^2 + c^\prime,
        \end{equation*} 
        and,\label{ass:prior-coercivity}
        \item $\nabla_\theta U$ is Lipschitz continuous.\label{ass:prior-lipz}
    \end{enumerate}
    Then there exists a unique measure $Q_n\in\mathcal{P}(\Theta)$ such that $Q_n = \arginf_{Q\in\mathcal{P}(\Theta)}\mathscr{L}$, and further this $Q_n$ is characterised by the implicit equation
    \begin{equation*}
        \dt Q_n(\theta)\propto \dt \Pi(\theta) \cdot \exp\left\{\left. - \frac{\lambda_n}{n} \sum_{i=1}^n \frac{\delta S(P_Q, x_i)}{\delta Q}\right|_{Q = Q_n}(\theta) \right\}.
    \end{equation*}
\end{lemma}
\begin{proof}
    We have by Assumption~\ref{ass:convex} that the score $S(\cdot,\,x)$ is convex is its first argument.
    In combination with the fact that the mapping $Q\mapsto P_Q$ is linear, we have the the mapping $Q\mapsto S(P_Q,\,x)$ is convex.
    The existence and uniqueness of a minimiser is then provided under the same conditions in Proposition 2.5 of \citet{hu_mean-field_2020}, and we point the reader to their Section 4 for a proof.
    
    To characterise the form of this minimiser, we take the functional derivative of~\eqref{eq:functional-for-WGF} with respect to $Q$, set it equal to zero, and solve for $Q$.
    We begin by noting that the functional derivative of the KL regulariser evaluated at $\theta$ is given by
    \begin{equation}\label{eq:func-deriv-regulariser}
        1 + \log\left\{\frac{\dt Q}{\dt\Pi}(\theta)\right\},
    \end{equation}
    while that of the score term is
    \begin{equation}\label{eq:func-deriv-score-wrt-Q} 
        \frac{\lambda_n}{n} \sum_{i=1}^n \frac{\delta S(P_Q, x_i)}{\delta Q}(\theta).
    \end{equation}
    By the linearity of functional derivatives, setting the sum of~\eqref{eq:func-deriv-regulariser} and~\eqref{eq:func-deriv-score-wrt-Q} equal to zero and removing constant terms, we find that $Q_n$ must satisfy
    \begin{equation*}
        0 = \frac{\lambda_n}{n} \sum_{i=1}^n \left.\frac{\delta S(P_Q, x_i)}{\delta Q}\right|_{Q = Q_n}(\theta) + \log\left\{\frac{\dt Q_n}{\dt\Pi}(\theta)\right\}
    \end{equation*}
    whence rearranging yields the stated result.
\end{proof}
The prior coercivity conditions of~\ref{ass:prior-coercivity} and~\ref{ass:prior-lipz} are satisfied, for instance, by choosing the prior to be a multivariate standard Gaussian prior, common to many applications \citep{hu_mean-field_2020}.
And the condition that the infimum be attained at non-infinite $M$ is immediate, for example, for the squared MMD under bounded kernels \citep[see also][Theorem 1]{shenprediction2025}.
In the case of the log score, we refer the reader to \citet[Proposition 1]{liu_detecting_2025} who establish the necessary conditions.

\section{Proofs of Main results}

\begin{proof}[Proof of Lemma \ref{lemma:pac-bayes-bound-master}]
The result depends on slightly different arguments depending on whether Assumption \ref{ass:entropy} or Assumption \ref{ass:Global} is satisfied. Therefore, we break the result into cases.

\noindent\textbf{Scores Satisfying Assumption \ref{ass:Global}.}
Consider any fixed integer $k>1$. Recall $Q^k = \bigotimes_{j=1}^kQ$. Apply the Donsker-Varadhan Lemma with $h(\theta_{1:k})=k\lambda_n\{L(\theta_{1:k})-L_n(\theta_{1:k})\}$, 
to see that, $P_0$-almost surely,  $\forall Q\in\mathcal{P}(\Theta)$, we have 
$$
\bigintssss k\lambda_n \left\{ L(\theta_{1:k})-L_n(\theta_{1:k}) \right\} \dt Q^k(\theta_{1:k}) \le \text{KL}(Q^k\|\Pi^k) + {\log} \bigintssss e^{k\lambda_n\{L(\theta_{1:k})-L_n(\theta_{1:k})\}} \dt \Pi^k(\theta_{1:k}) . 
$$
Rearranging terms, we have $P_0$-almost surely, $\forall Q\in\mathcal{P}(\Theta)$, 
$$
\bigintssss L(\theta_{1:k}) \dt Q^k(\theta_{1:k}) \le  \bigintssss L_n(\theta_{1:k}) \dt Q^k(\theta_{1:k}) + \frac{\text{KL}(Q^k\|\Pi^k)}{k\lambda_n} + \frac{ {\log}\bigintssss e^{k\lambda_n\{L(\theta_{1:k})-L_n(\theta_{1:k})\}} \dt \Pi^k(\theta_{1:k}) }{k\lambda_n} ,
$$
and $P_0$-almost surely, for $Q=Q_n$ in particular:
$$
\bigintssss L(\theta_{1:k}) \dt Q_n^k(\theta_{1:k}) \le  \bigintssss L_n(\theta_{1:k}) \dt Q_n^k(\theta_{1:k}) + \frac{\text{KL}(Q_n^k\|\Pi^k)}{k\lambda_n} + \frac{{\log}\bigintssss e^{k\lambda_n\{L(\theta_{1:k})-L_n(\theta_{1:k})\}} \dt \Pi^k(\theta_{1:k}) }{k\lambda_n} .
$$
Taking expectation with respect to $P_0$ on both sides and applying Jensen's inequality to the log then leads to:
\begin{align*}
    \E \bigintssss L(\theta_{1:k}) \dt Q_n^k(\theta_{1:k}) & \le \E \bigintssss L_n(\theta_{1:k})  \dt Q_n^k(\theta_{1:k}) + \frac{\E\text{KL}(Q_n^k\|\Pi^k)}{k\lambda_n} + \frac{\E\log  \bigintssss e^{k\lambda_n\{L(\theta_{1:k})-L_n(\theta_{1:k})\}} \dt \Pi^k(\theta_{1:k})}{k\lambda_n} \\
    & \le \E \bigintssss L_n(\theta_{1:k}) \dt Q_n^k(\theta_{1:k}) + \frac{\E\text{KL}(Q_n^k\|\Pi^k)}{k\lambda_n} + \frac{\log \E \bigintssss e^{k\lambda_n\{L(\theta_{1:k})-L_n(\theta_{1:k})\}} \dt \Pi^k(\theta_{1:k})}{k\lambda_n} .
\end{align*}
The last term on the right-hand-side is controlled using Tonelli's theorem and Assumption \ref{ass:Global}(ii), under which:
$$
\log \E \bigintssss e^{k\lambda_n\{L(\theta_{1:k})-L_n(\theta_{1:k})\}} \dt \Pi^k(\theta_{1:k}) = \log \bigintssss \E e^{k\lambda_n\{L(\theta_{1:k})-L_n(\theta_{1:k})\}} \dt \Pi^k(\theta_{1:k}) \leq \frac{k^2\lambda_n^2C_{}^2}{n} ,
$$
which then gives
\begin{align*}
    \E \bigintssss L(\theta_{1:k}) \dt Q_n^k(\theta_{1:k}) & \le \E \bigintssss L_n(\theta_{1:k})  \dt Q_n^k(\theta_{1:k}) + \frac{\E\text{KL}(Q_n^k\|\Pi^k)}{k\lambda_n} + \frac{\lambda_nkC_{}^2}{n} \\
    & = \E\left[ \bigintssss L_n(\theta_{1:k})  \dt Q_n^k(\theta_{1:k}) + \frac{\text{KL}(Q_n^k\|\Pi^k)}{k\lambda_n}\right] + \frac{\lambda_nkC_{}^2}{n} .
\end{align*}
Recall that $Q_n$ is the minimiser of the quantity inside the expectation on the right-hand-side to finally write:
$$
\E \bigintssss L(\theta_{1:k}) \dt Q_n^k(\theta_{1:k}) \le \E\left[ \inf_{Q\in\mathcal{P}(\Theta)}\left\{ \bigintssss L_n(\theta_{1:k})  \dt Q^k(\theta_{1:k}) + \frac{\text{KL}(Q^k\|\Pi^k)}{k\lambda_n} \right\} \right] + \frac{\lambda_nkC_{}^2}{n} .
$$

Now, we lower bound $\bigintssss L(\theta_{1:k}) \dt Q_n^k(\theta_{1:k}) $ on the left-hand-side: using the convexity in Assumption \ref{ass:convex} and the structure of the loss in Assumption \ref{ass:Global}(i), we have 
\begin{flalign*}
\mathcal{S}\left(\int P_\theta \dt Q_n(\theta),P_0 \right)&=\int \mathcal{S}(P_\theta,P_0)\dt Q_n(\theta)-\mathbb{E}_{}\{\Delta(Q_n,X)\}\\&\le \int \mathcal{S}(P_\theta,P_0)\dt Q_n(\theta)-\int\delta(\theta_{1:k})\dt Q^k(\theta_{1:k})\equiv \int L(\theta_{1:k})\dt Q_n(\theta_{1:k}),
\end{flalign*} 
where $\delta(\theta_{1:k})=\E_{}[\delta(\theta_{1:k},X)]$.
Recalling  that $P_{Q_n}=\int P_\theta \dt Q_n(\theta)$, this gives us
\begin{flalign}
\E\left[\mathcal{S}\left(P_{Q_n},P_0 \right)\right] \le \E\inf_{Q\in\mathcal{P}(\Theta)}\left\{ \bigintssss L_n(\theta_{1:k})  \dt Q^k(\theta_{1:k}) + \frac{\text{KL}(Q^k\|\Pi^k)}{k\lambda_n} \right\}  + \frac{\lambda_nkC_{}^2}{n} \label{eq:new2}.
\end{flalign}
Using Assumption \ref{ass:Global}(i), we can rewrite equation \eqref{eq:new2} as 
\begin{flalign*}
\E_{} \left\{\mathcal{S}\left(P_{Q_n},P_0\right)\right\} & \leq \E\left[ \inf_{Q\in\mathcal{P}(\Theta)}\left\{ \bigintssss L_n(\theta_{1:k})  \dt Q^k(\theta_{1:k}) + \frac{\text{KL}(Q^k\|\Pi^k)}{k\lambda_n} \right\} \right] + \frac{\lambda_nkC^2}{n} \\
& = \E\left[ \inf_{Q\in\mathcal{P}(\Theta)}\left\{S_n(P_Q) + \frac{\text{KL}(Q^k\|\Pi^k)}{k\lambda_n} \right\} \right] + \frac{\lambda_nkC^2}{n} 
\\
& \leq \inf_{Q\in\mathcal{P}(\Theta)}\left\{ \mathcal{S}(P_Q,P_0) + \frac{\text{KL}(Q^k\|\Pi^k)}{k\lambda_n} \right\} + \frac{\lambda_nkC_{}^2}{n} .
\end{flalign*}
Recalling that $\mathcal{S}(P_0,P_0)=\E_{}S_n(P_0)$, we subtract $\mathcal{S}(P_0,P_0)$ from both sides and write 
\begin{flalign*}
\E\{\mathcal{D}_S(P_{Q_n},P_0)\}&=\E_{}\left\{\mathcal{S}\left(P_{Q_n},P_0\right)-\mathcal{S}(P_0,P_0)\right\} \\
& \le \inf_{Q\in\mathcal{P}(\Theta)}\left\{ \mathcal{S}(P_0,P_0) - \mathcal{S}(P_Q,P_0) + \frac{\text{KL}(Q^k\|\Pi^k)}{k\lambda_n} \right\} + \frac{\lambda_nkC_{}^2}{n}  \\
& = \inf_{Q\in\mathcal{P}(\Theta)}\left\{ \mathcal{D}(P_Q,P_0) + \frac{\text{KL}(Q^k\|\Pi^k)}{k\lambda_n} \right\} + \frac{\lambda_nkC_{}^2}{n}\\&=  \inf_{Q\in\mathcal{P}(\Theta)}\left\{ \mathcal{D}(P_Q,P_0) + \frac{\text{KL}(Q\|\Pi)}{\lambda_n} \right\} + \frac{\lambda_nkC_{}^2}{n}.
\end{flalign*}

\noindent\textbf{Scores Satisfying Assumption \ref{ass:entropy}.}
 For scores that do not satisfy Assumption \ref{ass:Global},  the above proof technique breaks down. We can, however, derive a similar bound using the variational definition of $Q_n$. 

From the definition of $\mathcal{D}(P_Q,P_0)$, we have 
\begin{flalign}
\E[\mathcal{D}_S(P_{Q_n},P_0)]&=\E[\mathcal{S}(P_{Q_n},P_0)]-\mathcal{S}(P_0,P_0)\\&=\E[{S}_n(P_{Q_n})]-\mathcal{S}(P_0,P_0)-\left\{\E[{S}_n(P_{Q_n})]-\E[\mathcal{S}(P_{Q_n},P_0)]\right\} \nonumber\\&\le \E[{S}_n(P_{Q_n})]-\mathcal{S}(P_0,P_0)+\E\sup_{Q\in\mathcal{P}_{\Pi}(\Theta)}|S_n(P_Q)- \mathcal{S}(P_{Q},P_0)  |\nonumber\\&\le \E[{S}_n(P_{Q_n})]-\mathcal{S}(P_0,P_0)+r_n,\label{eq:vary1}
\end{flalign}
where the last line follows from Assumption \ref{ass:entropy}. Using the variational definition of $Q_n$, we have 
$$
S_n(P_{Q_n})\le S_n(P_{Q_n})+\frac{\KL(Q_n\|\Pi)}{\lambda_n}=\inf_{Q\in\mathcal{P}(\Theta)}\left\{S_n(P_Q)+\frac{\KL(Q\|\Pi)}{\lambda_n}\right\}.
$$
Taking expectations then yields
\begin{flalign}
\E\left\{S_n(P_{Q_n^{}})\right\}&\le\E\left[\inf_{Q\in\mathcal{P}(\Theta)}\left\{S_n(P_{Q})+\frac{\KL(Q\|\Pi)}{\lambda_n}\right\}\right]\le   \inf_{Q\in\mathcal{P}(\Theta)}
\left\{\mathcal{S}(P_{Q},P_0)+\frac{\KL(Q\|\Pi)}{\lambda_n}\right\}.\label{eq:vary2}
\end{flalign}
Plugging equation \eqref{eq:vary2} into \eqref{eq:vary1} yields
\begin{flalign*}
\E[\mathcal{D}_S(P_{Q_n},P_0)]&=\E[\mathcal{S}(P_{Q_n},P_0)]-\mathcal{S}(P_0,P_0)\\&\le   \inf_{Q\in\mathcal{P}(\Theta)}
\left\{\mathcal{S}(P_{Q},P_0)+\frac{\KL(Q\|\Pi)}{\lambda_n}\right\}-\mathcal{S}(P_0,P_0)+r_n\\&=\inf_{Q\in\mathcal{P}(\Theta)}
\left\{\mathcal{D}_S(P_{Q},P_0)+\frac{\KL(Q\|\Pi)}{\lambda_n}\right\}+r_n.
\end{flalign*}The stated result now follows from the definition of $\nu_n$.
\end{proof}

\begin{proof}[Proof of Theorem \ref{theorem:Gibbs-vs-pro-posterior-misspecification}]
From the convexity in Assumption \ref{ass:convex}, 
$$
\mathcal{D}_S(P_Q,P_0)\le \int \mathcal{D}_S(P_\theta,P_0)\dt Q(\theta).
$$Applying bound in the RHS of Lemma \ref{lemma:pac-bayes-bound-master} delivers
\begin{flalign}
\E[\mathcal{D}_S(P_{Q_n},P_0)]&\le 
\inf_{Q\in\mathcal{P}(\Theta)}
\left\{\mathcal{D}_S(P_{Q},P_0)+\frac{\KL(Q\|\Pi)}{\lambda_n}\right\}+\nu_n\nonumber\\&\le \inf_{Q\in\mathcal{P}(\Theta)}
\left\{\int \mathcal{D}_S(P_\theta,P_0)\dt Q(\theta)+\frac{\KL(Q\|\Pi)}{\lambda_n}\right\}+\nu_n\label{eq:new_new1}.
\end{flalign}
Recalling the definition of $\mathcal{B}_r$ in Assumption \ref{ass:prior-mass-condition}, define the candidate density 
$$
\dt \Pi_{\mathcal{B}_r}(\theta):=\frac{1(\theta\in\mathcal{B}_r)\dt \Pi(\theta)}{\Pi(\mathcal{B}_r)}.
$$
Now, the RHS of the inequality in \eqref{eq:new_new1} is true for any $Q\in\mathcal{P}(\Theta)$, and remains true for  $Q=\Pi_{\mathcal{B}_n}$. 
For $0\le r\le r_0$,  over $\mathcal{B}_r$, $\mathcal{D}_S(P_\theta,P_0)\le \mathcal{D}_S(P_{\theta^\star},P_0)+r$, so that we obtain
\begin{flalign*}
\E\left[\mathcal{D}_S(P_{Q_n},P_0)\right]  &\le \inf_{r>0} \left\{\int\mathcal{D}_S(P_\theta,P_0) \dt\Pi_{\mathcal{B}_r}(\theta)+\frac{\KL(\Pi_{\mathcal{B}_r}^k\|\Pi^k)}{k\lambda_n}\right\}+\nu_n,\\&= \inf_{r>0} \left\{\int\mathcal{D}_S(P_\theta,P_0) \dt\Pi_{\mathcal{B}_r}(\theta)+\frac{\KL(\Pi_{\mathcal{B}_r}\|\Pi)}{\lambda_n}\right\}+\nu_n\\&= \inf_{r>0} \left\{\int\mathcal{D}_S(P_\theta,P_0) \dt\Pi_{\mathcal{B}_r}(\theta)+\frac{\log\left\{1/\Pi(\mathcal{B}_r)\right\}}{\lambda_n}\right\}+\nu_n,
\\
&\le \inf_{r_0\ge r>0} \left\{\int\left\{\mathcal{D}_S(P_{\theta^\star},P_0)+r\right\} \dt\Pi_{\mathcal{B}_r}(\theta)+\frac{d_S\log (c_S/r)}{\lambda_n}\right\}+\nu_n\\
&=\mathcal{D}_S(P_{\theta^\star},P_0)+\inf_{r_0\ge r>0} \left\{r+\frac{d_S\log (c_S/r)}{\lambda_n}\right\}+\nu_n \, .
\end{flalign*}
The remainder of the proof follows the same as in the proof of Lemma \ref{lem:gibbs}.

A similar argument holds for $Q_n^\dagger$.

\end{proof}

\begin{proof}[Proof of Theorem \ref{theorem:generalisation-guarantee-exact}]
Assumption \ref{ass:non-trivial-misspecification-exact} ensures the existence of some $Q^\star\in\arg\inf_{Q\in\mathcal{P}(\Theta)} \mathcal{D}_S\left(P_Q,P_0\right)$ such that
$$
{\mathsf{JenGap}}:= \mathcal{D}_S(P_{\theta^\star},P_0) - \mathcal{D}_S\left(P_{Q^\star},P_0\right)> 0.
$$
From Lemma \ref{lemma:pac-bayes-bound-master},
 we can obtain:
\begin{flalign}
    \mathbb{E} \left[\mathcal{D}_S\left(\int P_\theta \, \dt Q_n(\theta),P_0 \right)\right]
    & \le \inf_{Q\in\mathcal{P}(\Theta)} \left\{ \mathcal{D}_S\left(\int P_\theta\dt Q(\theta),P_0\right) + \frac{\text{KL}(Q\|\Pi)}{\lambda_n} \right\} + \nu_{n} \nonumber\\
    & = \inf_{Q\ll\Pi} \left\{ \mathcal{D}_S\left(\int P_\theta\dt Q(\theta),P_0\right) + \frac{\text{KL}(Q\|\Pi)}{\lambda_n} \right\} + \nu_{n} \nonumber\\
    & \leq \mathcal{D}_S\left(\int P_\theta \, \dt Q^\star(\theta),P_0 \right) + \frac{\textnormal{KL}({Q}^\star\|\Pi)}{\lambda_n} + \nu_{n}\label{eq:new_eq3}.
 \end{flalign}   
 Adding and subtracting $\mathcal{D}_S(P_{\theta^\star},P_0)$ and using the definition of $\mathsf{JenGap}$, we can re-arrange \eqref{eq:new_eq3} as
\begin{flalign*}
    \mathbb{E} \left[\mathcal{D}_S\left(\int P_\theta \, \dt Q_n(\theta),P_0 \right)\right]
    & \le
    \mathcal{D}_S(P_{\theta^{\star}},P_0) + \frac{\textnormal{KL}({Q}^\star\|\Pi)}{\lambda_n} + \nu_n - \mathsf{JenGap} .
\end{flalign*}
Since, by Assumption \ref{ass:non-trivial-misspecification-exact}, $Q^\star$ is such that $\KL(Q^\star\|\Pi)<\infty$, for $n$ large enough we have
$$
\frac{\textnormal{KL}({Q}^\star\|\Pi)}{\lambda_n} + \nu_n \leq \frac{1}{4}{\mathsf{JenGap}} \, ,
$$
so that
$$\mathbb{E} \left[\mathcal{D}_S\left(\int P_\theta \, \dt Q_n(\theta),P_0 \right)\right] \leq \mathcal{D}_S(P_{\theta^{\star}},P_0) - \frac{3}{4}{\mathsf{JenGap}} \, .
$$

{Furthermore, from the continuity of $\theta\mapsto P_\theta$ in Assumption \ref{ass:non-trivial-misspecification-exact}, and the concentration in Lemma \ref{lem:gibbs}, we have for $n$ large enough but finite,}
$$
\mathbb{E} \left[\mathcal{D}_S\left(\int P_\theta \, \dt Q_n^\dagger(\theta),P_0 \right)\right] \geq \mathcal{D}_S(P_{\theta^{\star}},P_0) - \frac{\mathsf{JenGap}}{4} \, .
$$

Therefore, for $n$ large enough but finite,
\begin{align*}
    \mathbb{E} \left[\mathcal{D}_S\left(\int P_\theta \, \dt Q_n(\theta),P_0 \right)\right] & \leq  \mathcal{D}_S(P_{\theta^{\star}},P_0) - \frac{3}{4}{\mathsf{JenGap}} \\
    & \leq \mathbb{E} \left[\mathcal{D}_S\left(\int P_\theta \, \dt Q_n^\dagger(\theta),P_0 \right)\right] - \frac{1}{2}{\mathsf{JenGap}} \\
    & < \mathbb{E} \left[\mathcal{D}_S\left(\int P_\theta \, \dt Q_n^\dagger(\theta),P_0 \right)\right] . 
\end{align*}

To obtain the second result, return to equation \eqref{eq:new_eq3}, and use the fact that $Q^\star\in\arginf_{Q\in\mathcal{P}(\Theta)}\mathcal{D}_S(P_Q,P_0)$ satisfies $\KL(Q^\star\|\Pi)<\infty$, by Assumption \ref{ass:non-trivial-misspecification-exact}, to obtain
\begin{flalign*}
        \E_{}\left[\mathcal{D}_S\left(\int P_\theta\dt Q_n(\theta),P_0\right)\right] &\le\inf_{Q\in\mathcal{P}(\Theta)} \left\{ \mathcal{D}_S\left(\int P_\theta\dt Q(\theta),P_0\right) + \frac{\text{KL}(Q\|\Pi)}{\lambda_n} \right\} +\nu_{n}\\&\le \mathcal{D}_S(P_{Q^\star},P_0)+ \frac{\text{KL}(Q^\star\|\Pi)}{\lambda_n}+ \nu_n\\&\le \mathcal{D}_S(P_{Q^\star},P_0)+\frac{C}{\lambda_n}+\nu_n.
\end{flalign*}In the case of Assumption \ref{ass:Global}, $\nu_n=\log(n)/{n}^{1/2}$ so that choosing $\lambda_n\propto n^{1/2}/\log(n)$ yields the stated result. In the case of Assumption \ref{ass:entropy}, $\nu_n=\min\{\log(n)/{n}^{1/2},r_n\}$, and again choosing $\lambda_n\propto {n}^{1/2}\log(n)$ also implies the stated result.
\end{proof}

\begin{proof}[Proof of Theorem \ref{theorem:generalisation-guarantee-misspecification-mixture-exact-convex}]
We break the result up into cases depending on whether (i) or (ii) in Assumption \ref{ass:prior-mass-convex-recovery-general} is satisfied. 

\noindent\textbf{Assumption \ref{ass:prior-mass-convex-recovery-general}(i).}
From equation \eqref{eq:new_eq3} in the proof of Theorem \ref{theorem:generalisation-guarantee-exact}, we have that    
\begin{flalign*}
\E_{}\left[\mathcal{D}_S\left(P_{Q_n},P_0\right)\right]&\le\inf_{Q\in\mathcal{P}(\Theta)} \left\{ \mathcal{D}_S\left(\int P_\theta\dt Q(\theta),P_0\right) + \frac{\text{KL}(Q\|\Pi)}{\lambda_n} \right\} + \nu_n\\&\le \inf_{Q\ll\Pi} \left\{ \mathcal{D}_S\left(\int P_\theta\dt Q(\theta),P_0\right) + \frac{\text{KL}(Q\|\Pi)}{\lambda_n} \right\} + \nu_n\\&=\mathcal{D}_S(P_{Q^\star},P_0)+ \frac{C}{\lambda_n} + \nu_n,
\end{flalign*}where the equality follows from the fact that $\KL(Q^\star\|\Pi)<\infty$, and the last inequality follows by taking $\lambda_n\propto {n}^{1/2}/\log(n)$ so that $C/\lambda_n+\nu_n\le C'\cdot\nu_n$, for $n$ large enough. We remind the reader that $\mathcal{D}_S(P_{Q^\star},P_0)=0$ under Assumption \ref{ass:prior-mass-convex-recovery-general}(i).

Since $\mathcal{D}_S(P_{Q^\star},P_0)=0$ under convex recovery, we have 
$$
{\mathsf{JenGap}}=\mathcal{D}_S(P_{\theta^\star},P_0)>0,
$$ 
and there exist an $n$ large enough so that 
\begin{flalign*}
\E_{}\left[\mathcal{D}_S\left(P_{Q_n},P_0\right)\right] \le C'\nu_n \le \frac{\mathcal{D}_S(P_{\theta^\star},P_0)}{2} .
\end{flalign*}
{Further, for $n$ large enough, as in the proof of Theorem \ref{theorem:generalisation-guarantee-exact}, }
$
\E_{}\left[\mathcal{D}_S\left(P_{Q_n^\dagger},P_0\right)\right] > \frac{\mathcal{D}_S(P_{\theta^\star},P_0)}{2}.
$
Putting these inequalities together, we see that, for $n$ large enough, 
\begin{flalign*}
\E_{}\left[\mathcal{D}_S\left(P_{Q_n},P_0\right)\right] \le \frac{\mathcal{D}_S(P_{\theta^\star},P_0)}{2} < \E_{}\left[\mathcal{D}_S\left(P_{Q_n^\dagger},P_0\right)\right].
\end{flalign*}

\noindent\textbf{Assumption \ref{ass:prior-mass-convex-recovery-general}(ii).} Recall $\mathcal{B}_{r,j} = \{\theta:\mathcal{D}_S(P_{\theta},P_{\theta_j^\star})\leq r_{}\}$, denote $\dt Q_r^\star(\theta)=\sum_{j=1}^{K} \omega_j\dt \Pi^\star_{r,j}(\theta)$ where $\omega\in\Delta^{K-1}$,  and
$$
\dt \Pi^\star_{r,j}(\theta)=\begin{cases}
\frac{\dt\Pi(\theta)}{\Pi(\mathcal{B}_{r,j})}&\text{ if }\theta\in\mathcal{B}_{r,j}\\0&\text{ else }	
\end{cases}.
$$
Using $Q^\star_r$ within equation \eqref{eq:new_eq3} in the proof of Theorem \ref{theorem:generalisation-guarantee-exact}, yields
\begin{flalign*}
&\E_{}\left[\mathcal{D}_S\left(P_{Q_n},P_0\right)\right]\\&\le\inf_{Q\in\mathcal{P}(\Theta)} \left\{ \mathcal{D}_S\left(\int P_\theta\dt Q(\theta),P_0\right) + \frac{\text{KL}(Q\|\Pi)}{\lambda_n} \right\} + \frac{\lambda_nkC_k^2}{n}\\
& \leq \inf_{r>0}\left\{\mathcal{D}_S\left(\int P_\theta \, \dt Q^\star_r(\theta),P_0 \right) + \frac{\textnormal{KL}(Q^\star_r\|\Pi)}{\lambda_n}\right\} + \nu_n \\
& = \inf_{r>0}\left\{\mathcal{D}_S\left(\sum_{j=1}^{K} \omega_j \int P_\theta \, \dt \Pi^\star_{r,j}(\theta),\sum_{j=1}^K \omega_j P_{\theta_j^\star} \right) + \frac{\textnormal{KL}\left(\sum_{j=1}^{K} \omega_j \Pi^\star_{r,j}\|\Pi\right)}{\lambda_n} \right\} + \nu_n\\
& \leq \inf_{r>0}\left\{\sum_{j=1}^K \omega_j \, \int \, \mathcal{D}_S\left( P_\theta , P_{\theta_j^\star}\right) \dt \Pi^\star_{r,j}(\theta) + \frac{\sum_{j=1}^{K} \omega_j \textnormal{KL}\left(\Pi^\star_{r,j}\|\Pi\right)}{\lambda_n} \right\}+ \nu_n \\
& \leq \inf_{r>0}\left\{\sum_{j=1}^K \omega_j \, \int \, \mathcal{D}_S\left( P_\theta , P_{\theta_j^\star}\right) \dt \Pi^\star_{r,j}(\theta) + \frac{\sum_{j=1}^{K} \omega_j \log\{1/\Pi(\mathcal{B}_{r,j})\}}{\lambda_n} \right\}+ \nu_n
\\
& \leq \inf_{r>0}\left\{\sum_{j=1}^K \omega_j \, \int \, \mathcal{D}_S\left( P_\theta , P_{\theta_j^\star}\right) \dt \Pi^\star_{r,j}(\theta) + \frac{d_s\log(c_s/r)}{\lambda_n}\right\} + \nu_n \\
& \leq \inf_{r>0}\left\{r+ \frac{d_s\log(c_s/r)}{\lambda_n}\right\}  + \nu_n.
\end{flalign*}Optimising $r$ using similar arguments to the end of Lemma \ref{lem:gibbs} yields the stated result.
\end{proof}

\begin{proof}[Proof of Corollary \ref{corollary:concentration-if-model-correct}]
The proof follows directly from Theorem \ref{theorem:generalisation-guarantee-misspecification-mixture-exact-convex}; in particular,
$$
\E\left[\mathcal{D}_S\left(P_{Q_n},P_0\right)\right]-\E[\mathcal{D}_S(P_{Q^\dagger_n},P_0)] \le \E\left[\mathcal{D}_S\left(P_{Q_n},P_0\right)\right] \lesssim \nu_n \, .
$$
\end{proof}

\begin{proof}[Proof of Corollary \ref{corollary:concentration-dtf} ]
Since $\dt_2(\cdot,\cdot)$ satisfies a weak triangle inequality, for some $C>0$,
\begin{IEEEeqnarray}{rCl}
\mathbb{E}[\dt_2(P_{Q_n},P_{Q_\star})] &\le& C \E \{\dt_2(P_{Q_n},P_0)+\dt_2(P_{Q_\star},P_0)\}\nonumber\\&=&C\dt_2(P_{Q_n},P_0)\nonumber\\&=&C\dt_2(P_{Q_n},P_{Q_\star})
\label{eq:eq_new_con2},
\end{IEEEeqnarray}where the equalities follows from the definition of convex recovery.
Recall that, by Theorem \ref{theorem:generalisation-guarantee-misspecification-mixture-exact-convex}, $\E[\mathcal{D}_S(P_{Q_n},P_0)]\le \nu_n$.
Hence, from Assumption \ref{ass:concentration-dtf}(iii), and concavity of the map $x\mapsto x^{1/\alpha}$ for $\alpha\ge1$, 
\begin{flalign}
\E[\dt_2(P_{Q_n},P_{Q_\star})]\le& \E[\mathcal{D}_S(P_{Q_n},P_{Q_\star})^{1/\alpha}]\le\left\{\E[\mathcal{D}_S(P_{Q_n},P_{Q_\star})]\right\}^{1/\alpha}\le \nu_n^{1/\alpha}.\label{eq:eq_new_con3}
\end{flalign}

From equation \eqref{eq:eq_new_con3}, and Assumption \ref{ass:concentration-dtf}(i) - identifiability and continuity of $\theta\mapsto P_\theta$ - conclude that $P_{Q_n}\rightarrow P_0$. Hence, by Assumption \ref{ass:concentration-dtf}(ii) and the convergence in \eqref{eq:eq_new_con3}, there exists $n_1$ large enough so that, for $n\ge n_1$, 
$$
\E[\dt_1(Q_n,Q_\star)]\le C\E[\dt_2(P_{Q_n},P_{Q_\star})].
$$ Hence, for $n\ge n_1$, we can apply the above and equation \eqref{eq:eq_new_con3} to obtain
\begin{flalign*}
\E [\dt_1(Q_n,Q_\star)]&\le C\E [\dt_2(P_{Q_n},P_{Q_\star})]\le C\cdot \nu_n^{1/\alpha}.
\end{flalign*}
\end{proof}

\begin{proof}[Proof of Corollary \ref{corollary:pointwise-coverage} ]
    By (CR) of Theorem \ref{thm:DI-master} and Markov's Inequality, for any $\varepsilon> 0$, it holds that 
    $P_0^{(n)}(\mathcal{D}_S(P_{Q_n}, P_0) \geq \varepsilon) \leq \frac{\nu_n}{\varepsilon}$.
    As $\mathcal{D}_S$ metrises weak convergence, this means that 
    $P_{Q_n}$ weakly converges to $P_0$ in probability.
    The result now follows by the Portmanteau Lemma.
\end{proof}

\begin{proof}[Proof of Theorem \ref{thm:calibration} ]
Conditionally on the observed data, the definition of total
variation distance gives
\[
\left|P_0(C_n)-P_{Q_n}(C_n)\right|
\leq
\mathrm{TV}(P_{Q_n},P_0).
\]
Taking expectations and using $P_{Q_n}(C_n)=1-\alpha$ almost surely yields
\[
\left|\E[P_0(C_n)]-(1-\alpha)\right|
\leq
\E[\mathrm{TV}(P_{Q_n},P_0)].
\]
The right-hand side converges to zero {by using Pinsker's inequality, then applying Jensen's inequality, and lastly applying the conclusion of Theorem \ref{thm:DI-master}.}
\end{proof}

\section{Further experimental details}
\subsection{Normal location illustrations}\label{app:init-comparison-posteirors}
We simulate $n = 1,000$ data from each of the data-generating processes:
\begin{IEEEeqnarray*}{rClr}
    y_i &\sim& \mathsf{N}(0, 1^2) & \quad\text{(``Well-specified'')} \\
    y_i &\sim& \begin{cases}
        \mathsf{N}(-2, 1^2),&\quad\text{with probability }0.2 \\
        \mathsf{N}(2, 1^2),&\quad\text{with probability }0.8
    \end{cases} & \quad\text{(``Mixture'')} \\
    y_i &\sim& \begin{cases}
        \mathsf{N}(-2, 1^2),&\quad\text{with probability }1/3 \\
        \mathsf{N}(0, 1^2),&\quad\text{with probability }1/3 \\
        \mathsf{N}(2, 1^2),&\quad\text{with probability }1/3
    \end{cases} & \quad\text{(``Claw'')} \\
    y_i &\sim& \mathsf{Student}-t_{1.5} & \quad\text{(``Heavy-tails'')}.
\end{IEEEeqnarray*}
For each of these, we estimate the MMD with $m=128$ Monte Carlo samples, run the Wasserstein gradient flow for $p = 32$ particles for all data, and for $4,000$ iterations similarly to MALA.
We use $\lambda_n = n$.
We show the posterior distributions of the Gibbs and \ac{pro} posteriors and the Wasserstein gradient flow trajectories in Figure~\ref{fig:initial-comparison-trajectories}, and in Figure~\ref{fig:mmd-illustrations} we show their induced predictive distributions.
\begin{figure}[t!]
    \centering
    \includegraphics[width=0.7\linewidth]{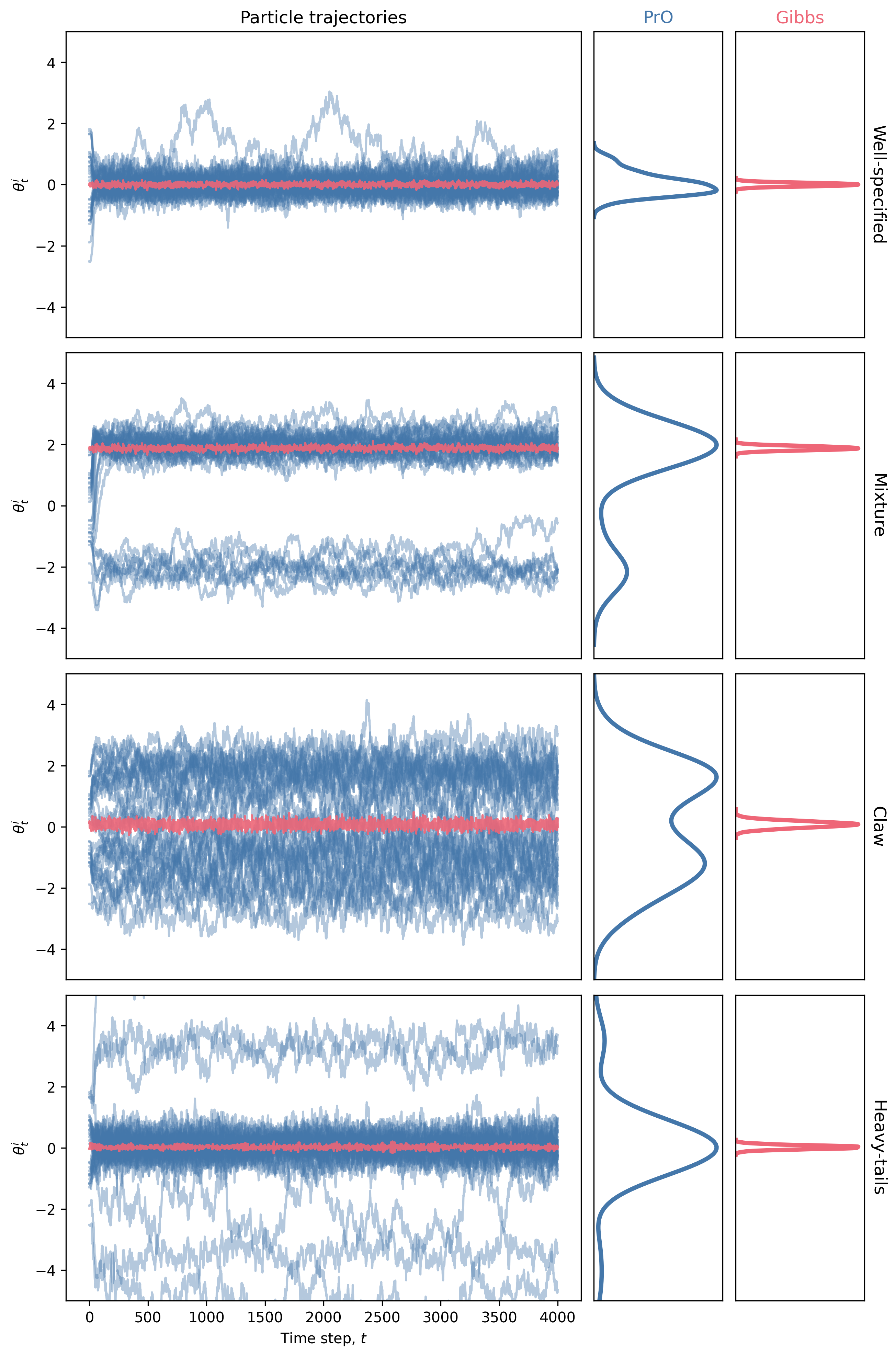}
    \caption{In the left-most column we show the particle trajectories from a Wasserstein gradient flow targeting the \textcolor{procolour}{\bfseries \ac{pro} posterior} compared to the path of a single MALA chain targeting the \textcolor{gibbscolour}{\bfseries Gibbs posterior}. In the two axes on the right we show the empirical posterior densities from these two trajectories.}
    \label{fig:initial-comparison-trajectories}
\end{figure}
\begin{figure}
    \centering
    \includegraphics[width=\linewidth]{untitled_folder/Figures/mmd-illustrations.png}
    \caption{A comparison of \textcolor{procolour}{\bfseries \ac{pro}} and \textcolor{gibbscolour}{\bfseries Gibbs} posterior predictive distributions in a well-specified and three misspecified regimes. Grey bars indicate a histogram of the observed data.}
    \label{fig:mmd-illustrations}
\end{figure}
\subsection{Palmer penguins example}
\label{sec:penguins-supp}
The data containing $n = 342$ measurements of the penguins' bill lengths and depths in millimetres were imported from the \texttt{palmerpenguins} package in \textsf{Python}, and jointly centred to have zero-mean and scaled to have unit variance.
To these data we fit a Gaussian kernel density estimator implemented in \texttt{sklearn} \citep{scikit-learn}, where the bandwidth was chosen with cross-validation across a logarithmic grid.
We also fit for comparison a \ac{pro} posterior over the mean of an the isotropic bivariate Gaussian location model class, $\mathcal{M}_\Theta = \{\normal(\theta,\sigma^2I):\,\theta\in\mathbb{R}^2\}$, where $\sigma^2 = 0.2$ is fixed, and which we fit with the squared MMD score.
This same bandwidth was used for fitting the \ac{pro} posterior with the squared MMD as fitting the KDE.
We estimated the \ac{pro} posterior under the squared MMD loss approximated with $m=n$ samples, $\lambda_n = 10^{3}$, and the Wasserstein gradient flow was run with $p = 32$ particles for $10,000$ iterations.
\subsection{Regression with the MMD}
\label{appendix:reg-mmd}
Denote $d = \operatorname{dim}(\Theta)$.
In Sections~\ref{sec:misspecification-regimes},~\ref{sec:gmrf}, and~\ref{sec:redshift} we fit the linear models under the squared MMD loss.
Denote the model class as $\{\normal(y_i;\,x_i^\top\theta, \sigma^2):\,\theta\in\mathbb{R}^d\}$, for $i = 1,\ldots,n$ where $\sigma^2>0$ is treated as fixed and estimated from data:
\begin{equation*}
    \hat\sigma^2  = \sum_{i = 1}^n \frac{(y_i - \hat y_i)}{n - d - 1};\quad\hat y = X\hat\beta;\quad\hat\beta = (X^\top X)^{-1} X^\top y.
\end{equation*}
In the case of the Gibbs posterior, we use the efficient estimator proposed by \citet[Equation 6]{alquier_universal_2024}:
\begin{equation*}
    S_n(P_\theta) = \sum_{i=1}^n \ell^\dagger(\theta;\,X_i,y_i)
\end{equation*}
where for all $\theta\in\Theta,\,x\in\mathbb{R}^d,$ and $y\in\mathbb{R}$,
\begin{equation*}
    \ell^\dagger(\theta;\,X, y) = \mathbb{E}_{Y,Y^\prime\sim P_\theta(\cdot\mid x)}\{\kappa(Y, Y^\prime) - 2\kappa(Y, y)\}.
\end{equation*}
In practice we estimate both of these expectations with $m$ Monte Carlo draws:
\begin{equation*}
    \ell^\dagger(\theta;\,X, y) \approx \frac{1}{m (m - 1)}\sum_{i\ne j}\sum_{j=1}^m \kappa(Y_i, Y^\prime_j) - \frac{2}{m}\kappa(Y_i, y),
\end{equation*}
for $Y_i,Y^\prime_i \sim P_\theta(\cdot\mid x).$

In the case of the \ac{pro} posterior, we instead use the tandem loss
\begin{equation*}
    L_n(\vartheta, \theta) = \sum_{i=1}^n \ell(\vartheta,\theta;\,X_i,y_i)
\end{equation*}
where now
\begin{equation*}
    \ell(\vartheta,\theta;\,X,y) = \mathbb{E}_{\substack{Y\sim P_\vartheta(\cdot\mid x)\\Y^\prime\sim P_{\theta}(\cdot\mid x)}}\{\kappa(Y, Y^\prime) - \kappa(Y, y) - \kappa(Y^\prime, y)\}.
\end{equation*}
And again, each of these expectations can be approximated by Monte Carlo:
\begin{equation*}
    \ell(\vartheta,\theta;\,X, y) \approx \frac{1}{m^2}\sum_{i=1}^m\sum_{j=1}^m \kappa(Y_i, Y^\prime_j) - \frac{1}{m}\kappa(Y_i, y) - \frac{1}{m}\kappa(Y_i^\prime, y),
\end{equation*}
for $Y_i \sim P_\vartheta(\cdot\mid x)$ and $Y^\prime_i\sim P_{\theta}(\cdot\mid x)$.

In both cases, we use a Gaussian kernel where the length scale is chosen according to the median heuristic \citep{garreau_large_2018}.
\subsection{Linear regression example}
\label{sec:synthetic-regression-supp}
In Section~\ref{sec:misspecification-regimes} we generate $n = 500$ training data for each of the three misspecification regimes.
The log point-wise predictive density (lppd) is estimated from $S$ samples from the posterior $\{\theta^{(s)}\}_{s = 1}^S$ at a test point $\{X_i, y_i\}$ through
\begin{equation*}
    \operatorname{lppd}_i = \log \left\{\frac{1}{S} \sum_{s = 1}^S P_{\theta^{(s)}}(y_i\mid X_i)\right\}.
\end{equation*}
and similarly we compute the test point-wise $\operatorname{MMD}_i^2$ with posterior draws.
We subsequently compute the expected log predictive density (elpd) as $\operatorname{elpd}_{\operatorname{test}} = \sum_{i=1}^{n_{\mathrm{test}}} \operatorname{lppd}_i$, and likewise for $\operatorname{MMD}_{\operatorname{test}}^2$, for $n_{\mathrm{test}} = 1,000$.
In Figure~\ref{fig:linear-regression-pred-diff} we show the mean and standard error of the a normal approximation to the difference in these metrics inline with \citet[Equations 9 and 10]{sivula_uncertainty_2025}.
Denoting the point-wise difference as $\Delta_i = \operatorname{lppd}_i^{\text{\ac{pro}}} - \operatorname{lppd}_i^{\text{Gibbs}}$ in the case of the elpd, for example, the mean is computed as $\sum_{i = 1}^{n_{\mathrm{test}}} \Delta_i$, and the standard error as
\begin{equation*}
    \sqrt{\frac{n_{\mathrm{test}}}{n_{\mathrm{test}} - 1}\sum_{i=1}^{n_{\mathrm{test}}}\left\{\Delta_i - \frac{1}{n}\sum_{j=1}^{n_{\mathrm{test}}}\Delta_j\right\}^2}.
\end{equation*}
In Figure~\ref{fig:linear-regression-pred-diff} we rescale both metrics so that negative values indicate that the \ac{pro} posterior induces better predictive performance.
The continuously ranked probability score (CRPS) was approximated using posterior samples and an empirical cumulative distribution function using the \texttt{properscoring} package in Python.

The MMD was estimated using $m=32$ Monte Carlo samples, and the Wasserstein gradient flow was run using $p = 16$ particles and $\lambda_n = n^{1/2}$ for $4,000$ iterations, while MALA was run with $\dt t = 10^{-4}$ also for $4,000$ iterations after $96,000$ warmup iterations to ensure convergence.
\subsubsection{Sensitivity of the Wasserstein gradient flow}
\label{sec:sensitivity}
In Figure~\ref{fig:sensitivity} we show the sensitivity of the kernel gradient discrepancy (a diagnostic of the quality of the posterior samples) and two predictive scores to the choice of the number of particles and step size in a linear regression example.
We sample $n = 500$ data from the CR setting described in Section~\ref{sec:misspecification-regimes}, and then fit the \ac{pro} posterior with $p = \{2^1,2^2,\ldots,2^6\}$ particles and $\dt t$ between $0.1$ and $0.0001$.
In each case, we set $\lambda_n= n^{1/2}$ and run the WGF for $10,000$ iterations and estimate the MMD with $m = 32$ Monte Carlo samples.
In Figure~\ref{fig:sensitivity} we report the mean of the kernel gradient discrepancy (KGD), and predictive negative log likelihood (NLL) and MMD from $10$ fold cross-validation.
All metrics are improved as $\dt t$ is smaller, and as the number of particles grows.
We observed here, and in other experiments, that as long as the number of particles is sufficiently large to capture multi-modal behaviour in the data, the step size becomes important to tune appropriately.
And that the relationship relationship between predictive performance and step size is not always linear.
This observation led us to choose $p$ of the order $\approx2^5$, and optimise the step size in all experiments with the tuning-free FUSE schedule \citep{sharrock_tuning-free_2025}.
\begin{figure}[h!]
    \centering
    \begin{subfigure}{0.3\textwidth}
        \includegraphics[width=\linewidth]{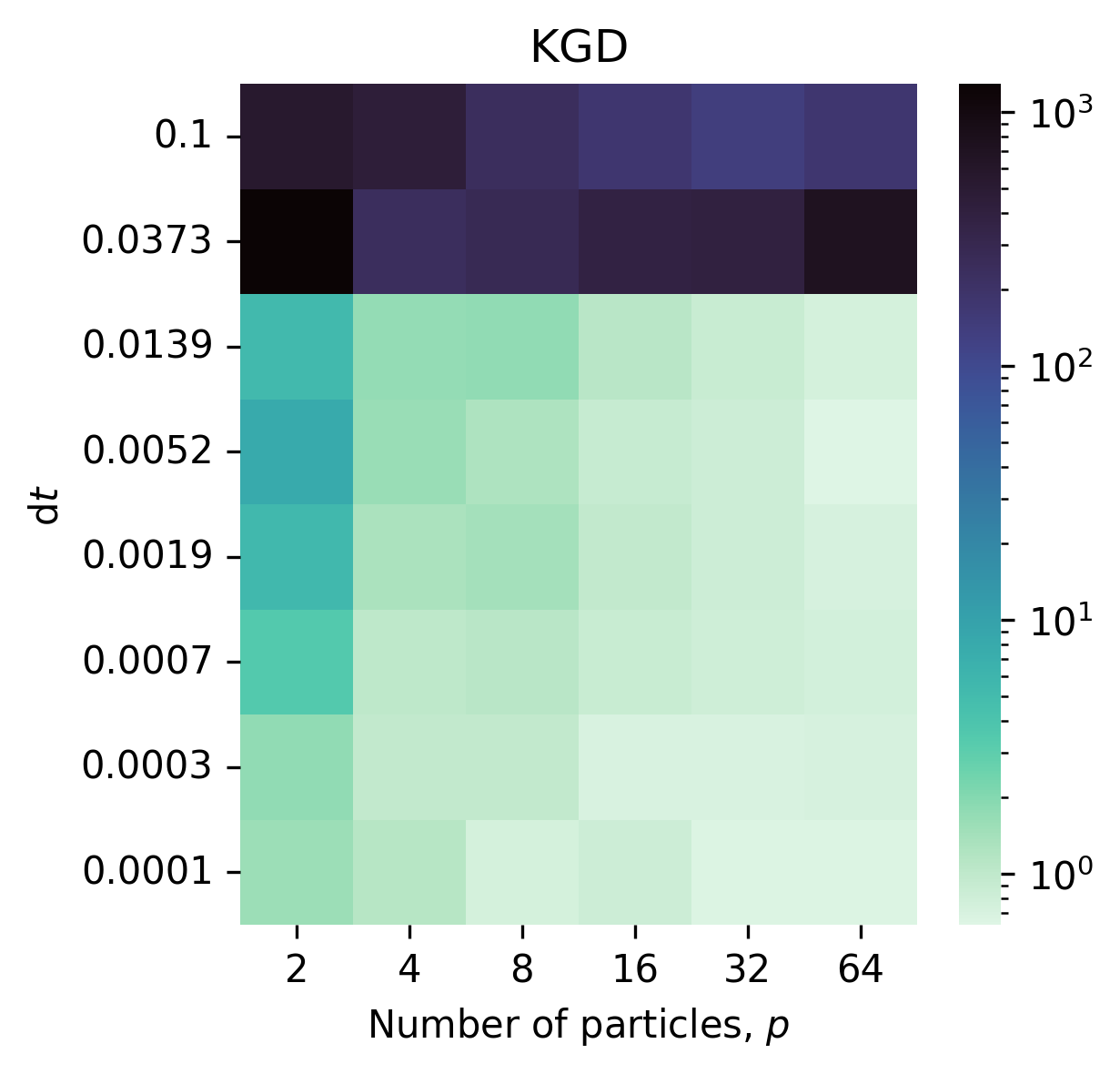}
        \caption{Posterior diagnostic.}
    \end{subfigure}
    \hfill
    \begin{subfigure}{0.6\textwidth}
        \includegraphics[width=\linewidth]{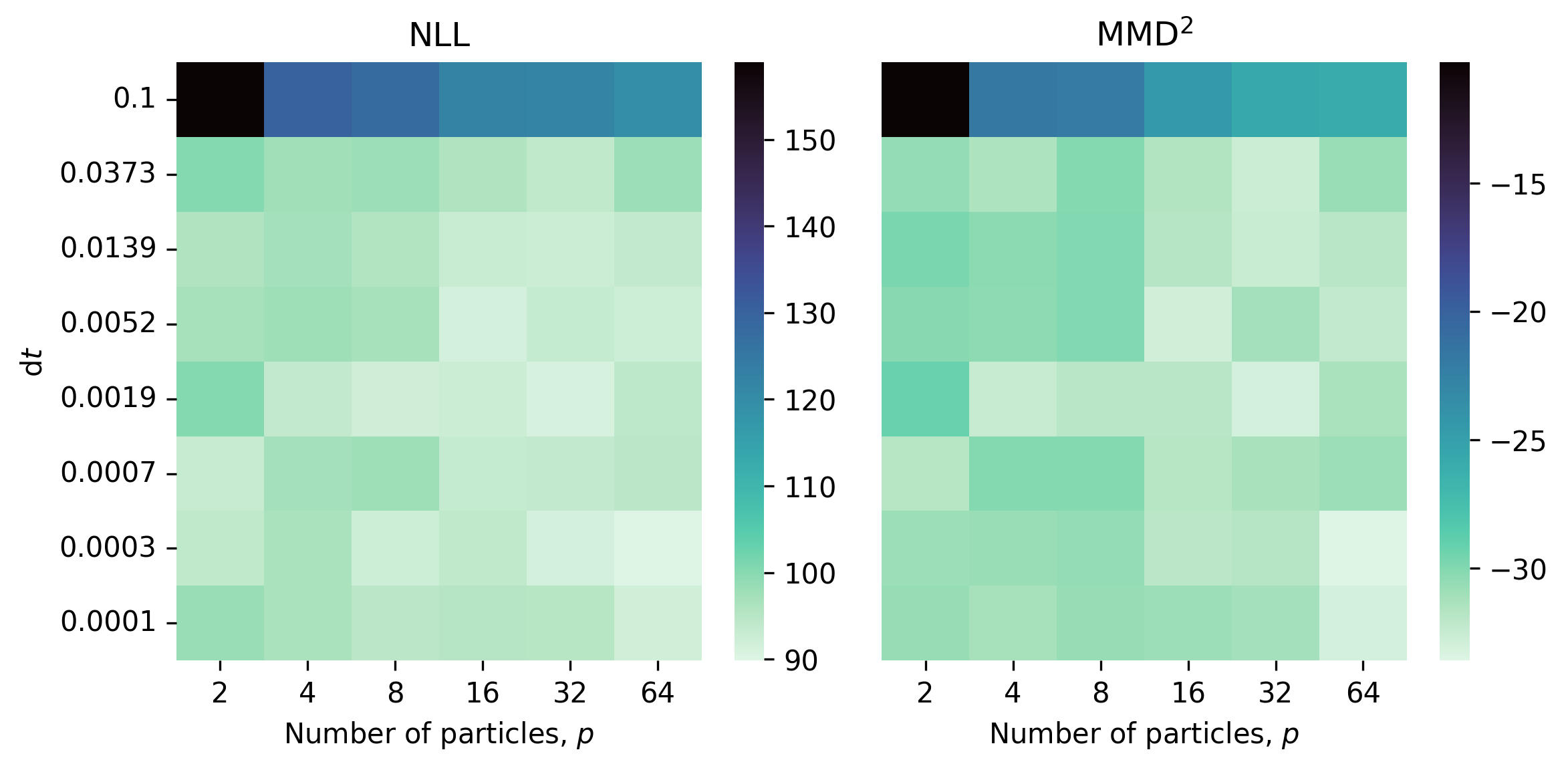}
        \caption{Predictive metrics.}
    \end{subfigure}
    \vspace*{-1em}
    \caption{\textit{Synthetic linear regression example}\/.}
    \label{fig:sensitivity}
\end{figure}
\subsection{Binary classification example}
We simulated $n = 1,000$ in the manner described in Section~\ref{sec:binary-classification}, and fit all models with $\lambda_n = n^{1/2}$.
The \ac{pro} posterior was run with $p = 64$ particles for $10,000$ iterations.
MALA was run for $4,000$ steps with $\dt t = 10^{-4}$.

\subsection{Boston housing data example}\label{app:supp-gmrf}
The \ac{pro} posterior under the MMD was run with $p = 16$ particles for $2,000$ iterations.
MALA was run for $4,000$ steps after $96,000$ warmup iterations, with $\dt t = 10^{-6}$.
The MMD was estimated with $m = 32$ Monte Carlo samples, and both models were fit with $\lambda_n = n^{1/2}$.
In the case of the log score, the Gibbs posterior is fit with the same hyper-parameters as described above, while the \ac{pro} posterior is now fit with $p = 128$ particles with $4,000$ iterations.
\begin{figure}[h!]
    \centering
    \includegraphics[width=0.666\linewidth]{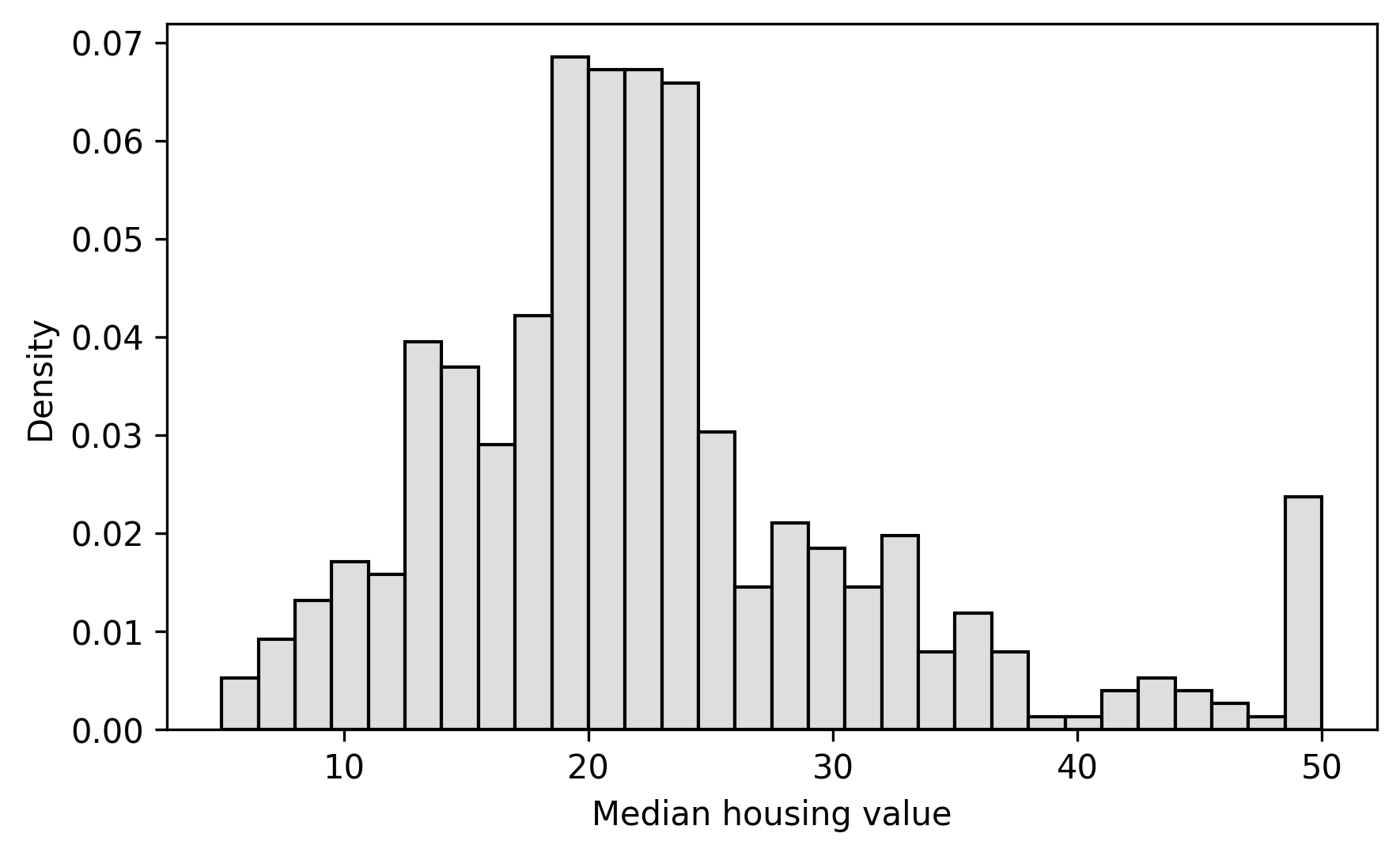}
    \caption{The empirical data distribution of the median housing value by census tract.}
    \label{fig:boston-data-distribution}
\end{figure}
\subsection{Redshift prediction example}\label{app:supp-sdss}
We use a subset of approximately $78{,}000$ observations from the seventeenth SDSS, retaining objects classified as either galaxies or quasars whose five filters all lie in the valid range $5 < \{u, g, r, i, z\} < 35$ and whose spectroscopic redshift satisfies $0 < z < 6$.
From the five filters we form the four colours $u-g$, $g-r$, $r-i$ and $i-z$, and keep the apparent $r$ magnitude as a measure of brightness inline with \citet{connolly1995slicing}.
The five inputs are standardised to zero mean and unit variance, expanded with degree-two polynomial features, and each resulting column is rescaled to have unit standard deviation.
This yields a total of $21$ parameters---an intercept, five linear terms, five squares and ten pairwise interactions---with the raw spectroscopic redshift used as the target.
The noise variance is fixed to the ordinary-least-squares residual standard deviation on the training fold.

The regression coefficients are given an independent standard Gaussian priors, and we use a learning rate $\lambda_n = 1/n$ so that we compare the \ac{pro} posterior with the standard Bayesian posterior.
The Bayes posterior is then sampled with MALA, using a step size of $\mathrm{d}t = 10^{-6}$, $96{,}000$ warm-up iterations and $4{,}000$ retained samples, initialised at the ordinary-least-squares parameter estimates.
The \ac{pro} posterior is obtained by simulating the Wasserstein gradient flow of the same score with $p=256$ particles for $4{,}000$ steps.
\section{Computation}
\subsection{Examples: MMD and logarithmic score}
\label{appendix:WGF-for-log-score-and-mmd}
Our above development recovers the algorithm of \citet{shenprediction2025} for the special case of squared MMD.
Here, elementary calculations show that for $\mathsf{L}_{\operatorname{MMD}}(\vartheta,\theta;x) = \langle\mu(P_\vartheta) - \mu(\updelta_{x}), 
\mu(P_{\theta}) - \mu(\updelta_x)\rangle_{\mathcal{H}}$ and $\nabla_1\mathsf{L}_{\operatorname{MMD}}(\vartheta,\theta;x)$ denoting its gradient with respect to $\vartheta$,
\begin{IEEEeqnarray}{rCl}
    \mathcal{W}(Q)[\vartheta] & = &
    \frac{1}{n}\sum_{i=1}^n \int \nabla_1\mathsf{L}_{\operatorname{MMD}}(\vartheta,\theta;x_i) \; \dt Q(\theta),
\nonumber
\end{IEEEeqnarray}
so that the evolution equations take the form
\begin{IEEEeqnarray}{rCl}
    \dt \vartheta_t^{(j)} = -\left\{\frac{ 1}{(p - 1)}\sum_{\substack{\ell = 1 \\ \ell \ne j}}^p
    \frac{ \lambda_n}{n}
    \sum_{i=1}^n 
    \nabla_1\mathsf{L}_{\operatorname{MMD}}(\vartheta_t^{(j)}, \vartheta_t^{(\ell)}; x_i)
    - \nabla_\vartheta \log \dt \Pi(\vartheta_t^{(j)})\right\} \dt t + \sqrt{2}\dt B_t^{(j)}.
    \nonumber
\end{IEEEeqnarray}

For the logarithmic scoring rule, one instead finds that 
\begin{IEEEeqnarray}{rCl}
     \mathcal{W}(q)[\vartheta] & = & 
     \frac{1}{n}\sum_{i=1}^n
     \frac{\nabla_\vartheta \,\dt P_{\vartheta}(x_i)}{
     \int \dt P_{\theta}(x_i) \dt Q(\theta) 
     }, 
\nonumber
\end{IEEEeqnarray}
so that the evolution equations are recovered as
\begin{IEEEeqnarray}{rCl}
    \dt \vartheta_t^{(j)} = -\left\{
    \frac{ \lambda_n}{n}
    \sum_{i=1}^n 
    \frac{
        \nabla_\vartheta \,\dt P_{\vartheta_t^{(j)}}(x_i)
    }{
        \frac{ 1}{(p - 1)}\sum_{\substack{\ell = 1 \\ \ell \ne j}}^p
        \dt P_{\vartheta_t^{(\ell)}}(x_i)
    }
    - \nabla_\vartheta \log \dt \Pi(\vartheta_t^{(j)})\right\} \dt t + \sqrt{2}\dt B_t^{(j)}.
    \nonumber
\end{IEEEeqnarray}

\subsection{Practicalities and Implementation}
\label{appendix:WGF-practicalities-and-implementation}

To provide intuition about how the sampling algorithm described in this section differs from more conventional Bayesian computation procedures, Figure~\ref{fig:initial-comparison-trajectories} compares it to the Metropolis adjusted Langevin algorithm \citep[MALA;][]{mala_1, mala_2} on several normal location examples.
While the $p=32$ particles which target the \ac{pro} posterior are initialised randomly from the prior, they soon repel each other, and lead to a multi-modal distribution.
In contrast, the single particle evolved according to MALA bounces around a singular mode.

While the interacting particle approach has the distinct advantage of being able to sample from any  \ac{pro} posterior, it introduces several hyper-parameters that may require careful tuning in practice.
The most important of these are the step sizes for the discretisation of the underlying SDE, the number of particles $p$, as well as the exact choice of $\lambda_n$.
Here, the choice of step size and the number of particles $p$ constitute a \textit{direct} trade-off between approximation quality and computational efficiency: 
choosing the step size too small or the number of particles too large will in principle increase accuracy, but do so at the cost of exploding computational cost and substantively slowed convergence.
Determining what constituted an appropriate step size in particular, is non-trivial and an active area of research.
Concurrently, \citet{sharrock_tuning-free_2025} developed an adaptive step size schedule for sampling form mean-field Langevin dynamics by adaptively tuning the step size, which they call FUSE.
We set the step size adaptively according to FUSE in all experiments.
While the theoretical result those authors showed do not extend to the \ac{pro} posterior, our empirical results give promise that similar results are attainable.
We leave this as an area of future research.
The relationship between the dimensionality of $\Theta$ and the number of particles required for consistent estimation has been investigated in the related family of Stein variational gradient descent algorithms by \citet[Theorems 1 and 4]{banerjee_improved_2025}.
While those authors found that the number of particles must scale polynomially or exponentially with the dimension of the problem, we observed that when the step size is chosen with FUSE that much fewer particles are required.
While little is known about choosing these hyper-parameters in our setting, recent work by \citet{chazal_computable_2025} may finally provide a principled way for diagnosing this (see, e.g., Appendix~\ref{sec:sensitivity} for a numerical demonstration).
In contrast, we have clear ideas on how $\lambda_n$ should be set from the theory developed in Section \ref{sec:theory-short}:
for the case covered in Theorem \ref{thm:DI-master} this is means setting $\lambda_n \asymp n^{1/2}/\log(n)$.

\end{document}